\renewcommand{\bar}{\overline}
\declaretheorem[name=Theorem,numberwithin=section]{thm}
\newtheorem{theorem}{Theorem}[section]
\newtheorem{question}{Question}
\newtheorem{lemma}[theorem]{Lemma}
\newtheorem{claim}[theorem]{Claim}
\newtheorem{proposition}[theorem]{Proposition}
\newtheorem{observation}[theorem]{Observation}
\newtheorem{corollary}[theorem]{Corollary}
\newtheorem{definition}{Definition}
\newtheorem{fact}[theorem]{Fact}
\newtheorem*{remark}{Remark}
\newenvironment{proofof}[1]{\emph{Proof of #1.  }}{\hfill$\Box$}
\newcommand{\EGnote}[1]{{\textcolor{green}{Elazar: #1}}}
\newcommand{\abs}[1]{\lvert #1 \rvert}
\newcommand{\N}{\mathbb{N}}
\newcommand{\A}{\mathcal{A}}
\newcommand{\ALG}{\mathrm{ALG}}
\newcommand{\OO}{\mathcal{O}}
\newcommand{\E}{\mathcal{E}}
\newcommand{\I}{\mathcal{I}}
\newcommand{\ED}{\Delta_{edit}}
\newcommand{\TED}{\Tilde\Delta_{edit}}
\newcommand{\LCS}{\Delta_{indel}}
\newcommand{\TLCS}{\Tilde\Delta_{indel}}
\renewcommand{\bar}{\overline}
\newcommand{\emptystr}{\Lambda}
\title{Many Flavors of Edit Distance} 
\author[1]{Sudatta Bhattacharya\thanks{Email: sudatta@iuuk.mff.cuni.cz. Partially supported by the project of Czech Science Foundation no.
19-27871X, 24-10306S and by the project GAUK125424 of the Charles University Grant Agency.}}
\author[2]{Sanjana Dey\thanks{info4.sanjana@gmail.com}}
\author[3]{Elazar Goldenberg\thanks{elazargo@mta.ac.il}}
\author[1]{Michal Kouck{\'{y}}\thanks{Email: koucky@iuuk.mff.cuni.cz. Partially supported by the Grant Agency of the Czech Republic under the grant agreement no. 19-27871X.}}
\affil[1]{Computer Science Institute of Charles University,
Malostransk{\'e}  n{\'a}m\v{e}st\'{\i} 25,
118 00 Praha 1, Czech Republic}
\affil[2]{University of Singapore, Singapore}
\affil[3]{The Academic College of Tel-Aviv-Yaffo, Israel}
\begin{document}
\maketitle
\begin{abstract}
    
    Several measures exist for string similarity, including notable ones like the edit distance and the indel distance. The former measures the count of insertions, deletions, and substitutions required to transform one string into another, while the latter specifically quantifies the number of insertions and deletions. Many algorithmic solutions explicitly address one of these measures, and frequently techniques applicable to one can also be adapted to work with the other. In this paper, we investigate whether there exists a standardized approach for applying results from one setting to another. Specifically, we demonstrate the capability to reduce questions regarding string similarity over arbitrary alphabets to equivalent questions over a binary alphabet. Furthermore, we illustrate how to transform questions concerning indel distance into equivalent questions based on edit distance. This complements an earlier result of Tiskin (2007) which addresses the inverse direction.
\end{abstract}

\section{Introduction}
String-related metrics, such as edit distance, longest common subsequence distance, are pivotal in numerous applications that deal with text or sequence data.
Edit distance metric, also referred to as Levenshtein distance~\cite{Lev66}, quantifies the minimum number of single-character edits (insertions, deletions, or substitutions) necessary to convert one string into another. This metric finds extensive application in spell checking and correction systems, DNA sequence alignment and bioinformatics for comparing genetic sequences, as well as in natural language processing tasks such as machine translation and text summarization, among other fields.
The longest common subsequence metric, also known as {\em Indel} or {\em LCS distance}, evaluates the disparity between two strings by determining the minimum number of single-character edits while forbidding substitutions. This metric has diverse applications, including text comparison and plagiarism detection, DNA and protein sequence analysis for recognizing shared regions or motifs, music analysis to uncover similarities between musical sequences, and document clustering and classification based on content similarity. 

From a computational complexity perspective computing distances under the Edit or Indel metric typically takes quadratic time complexity, as initially demonstrated by Wagner~\cite{wagner1974}. Subsequent research has marginally improved this complexity by reducing logarithmic factors, as evidenced by Masek and Paterson~\cite{Masek80} and Grabowski~\cite{Grabowski16}. Additionally, Backurs and Indyk~\cite{BI15} demonstrated that a truly sub-quadratic algorithm  \(O(n^{2-\delta})\) for some \(\delta>0\) would lead to a \(2^{(1-\gamma)n} \)-time algorithm for CNF-satisfiability, contradicting the Strong Exponential Time Hypothesis. Similarly, Abboud et al.~\cite{ABW15} established a similar result for computing the Indel metric between string pairs.  Notably, obtaining an efficient isometric embedding for the edit metric into the Indel metric would effortlessly yield the latter result.

Extensive research has been conducted on approximating edit distance, with studies dating back to the work of Landau et al.~\cite{LMS98,BYJKK04,BES06,AO09,BEKMRRS03,BEGHS21,CDGKS20,GRS20,KS20,BR20}, ultimately culminating in the breakthrough result of Andoni and Nosatzki~\cite{AN20}, which offers a (large) constant factor approximation in nearly linear time. However, approximating the Indel distance has not received similar attention, and although one expects the same techniques should provide similar results for Indel distance, one would need to check all the details of the construction to verify the exact properties of such a result.
This exhibits a general pattern where results for one of the measures can often be adapted for the other but there is no simple tool that would guarantee such an automatic transformation. 

Similar pattern emerges when dealing with the string measures over different size alphabets. For example, the hardness result elucidated by Backurs and Indyk~\cite{BI15} is constrained to some ``large'' constant-size alphabets. Subsequent research has revealed that the computational task of computing edit distance is also hard for binary alphabets using an ad hoc approach. Once more, the possibility of achieving an efficient isometric embedding between strings residing in large alphabets and those in smaller ones could potentially resolve these questions automatically. Another scenario where the alphabet size becomes relevant is in the simple linear-time approximation algorithm for the length of the LCS. The naive algorithm provides a \(\abs{\Sigma}\)-approximation, where \(\Sigma\) represents the alphabet to which the strings reside. Therefore, if one can embed strings from a large alphabet into those from a smaller alphabet while approximately preserving distances, it may lead to an improvement in the approximation factor.
Given the current circumstances, we pose the following questions, which we then delve into extensively:
\begin{question}
    Does an isometric embedding exist between the edit metric and the Indel metric? Can it be computed efficiently?
\end{question}
\begin{question}
    Does an isometric embedding exist between edit metric on arbitrary alphabets and the edit distance on binary alphabets? Can it be computed efficiently?
\end{question}

\subsection{Our Contribution}
This paper introduces multiple mappings that establish connections between various string metrics.  Specifically, we transform points residing within a designated input metric space \((M,d) \) into points within an output metric space \((M',d')\), ensuring that the distance between any pair of points in \(M\) under \(d\), is (approximately) preserved on the output pairs under \(d'\). In this paper, we introduce a relaxation of the concept of isometric embedding, permitting scaling factors, as outlined below.

We say an embedding  $E:M\to M'$ is \textbf{scaled isometric} if there exists a function $f:\N\to \N$ that maps distances from the original space to the embedded one, such that for every pair of points \(x,y\in M\) we have: 
\(d'(E(x),E(y))=f(d(x,y))\). This implies that such an embedding can be utilized to reduce the computation of distances in the original metric $M$  to computing distances under $M'$, provided that $f$ is invertible and computationally tractable.

A special case of scaling involves preserving the normalized distances. That is, for any metric on a string space, it is intuitive to define the normalized distance between a pair of strings of a specified length, as their distance divided by the maximum distance of pairs of that same length. Subsequently, for an embedding \(E:M\to M'\) that preserves lengths (i.e., the length of the output string is a function of the input string), we further assert that it is \textbf{normalized scaled isometric} if, for every pair from \(M\), the normalized distance of the embedded strings preserves the normalized distance of the original ones.  

Utilizing normalized scaled isometric embedding can facilitate the computational process of approximating distances in the original metric \(M\)  to compute approximate distances under \(M'\), even if the normalized distances are not perfectly preserved by the embedding but distorted by a (multiplicative) factor of \(c\). In such a case, any \(c'\) approximation algorithm for the metric \(M'\) could thus be transformed into a \(c\cdot c'\) approximation algorithm for \(M\).

\subsubsection{Our Results}
In the sequel, we utilize the notation \(\LCS\) to represent the Indel distance between a pair of strings, \(\ED\) to denote their edit distance. Additionally,  we use \(\TLCS\) and \(\TED\) to represent the normalized Indel distance between a pair of strings (for precise definitions, refer to Section~\ref{sec:preliminaries}).

\textbf{Alphabet Reduction - Succinct Embedding:}

Our first result pertains to alphabet reduction achieving normalized scaled isometric embedding. In this context, as elaborated  in Section~\ref{sec:alphabetReduction},  any embedding contracts the normalized distances of some of the pairs.  Consequently, we shift our focus to approximate normalized scaled isometric embedding, where we permit slight distortions in the normalized distances. Our main result is outlined below:

\begin{restatable}[Alphabet Reduction - Succinct Embedding]{thm}{AlphabetReductionNormalized}
\label{thm:AlphabetReductionNormalized}

Let $\Gamma $ be a finite alphabet, and let $0<\varepsilon<1/4$.
    There exists an alphabet $\Sigma$, where \(\abs {\Sigma} = O (\frac{1}{\varepsilon^2})\) and there exists \(E:\Gamma^*\to \Sigma^*\) satisfying: 
\[\forall X,Y \in \Gamma^*: \, \, \,\TLCS(E(X),E(Y))\in \left[(1-\varepsilon) \TLCS(X,Y), \TLCS(X,Y)\right].\]

Moreover, for every $X\in \Gamma^n$ we have: \(\abs{E(X)}=O(n\log (\abs{\Gamma}))\).
\end{restatable}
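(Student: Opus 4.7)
My approach is a per-character random code: let $\phi : \Gamma \to \Sigma^{\ell}$ be drawn uniformly at random, with $|\Sigma| = C/\varepsilon^{2}$ for a sufficiently large absolute constant $C$ and $\ell = \Theta(\log|\Gamma|)$, and define $E(X) = \phi(X_1)\phi(X_2)\cdots\phi(X_n)$. The length claim $|E(X)| = n\ell = O(n\log|\Gamma|)$ is then immediate.

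The upper inequality $\TLCS(E(X),E(Y)) \leq \TLCS(X,Y)$ is the easy direction: any common subsequence of $X$ and $Y$ of length $L$ lifts, by concatenating the corresponding codewords, to a common subsequence of $E(X),E(Y)$ of length $\ell L$, so $\mathrm{LCS}(E(X),E(Y)) \geq \ell\cdot\mathrm{LCS}(X,Y)$. Since the maximum possible indel distance scales by exactly $\ell$ under the embedding, this translates to the normalized inequality.

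For the multiplicative lower bound I first establish a code property: by a Chv\'atal--Sankoff-style bound on the LCS density of two random strings over $\Sigma$, namely $\gamma_{|\Sigma|} = O(1/\sqrt{|\Sigma|}) \leq \varepsilon/2$, combined with a union bound over the $|\Gamma|^{2}$ pairs of codewords (which is absorbed by the choice $\ell = \Omega(\log|\Gamma|)$), with high probability $\mathrm{LCS}(\phi(a),\phi(b)) \leq (\varepsilon/2)\ell$ for every $a \neq b$. Then, given any LCS matching of $E(X)$ and $E(Y)$, project it onto the $n\times m$ grid of block pairs to obtain a monotone sequence $(i_1,j_1),\ldots,(i_T,j_T)$. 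Split the visited block pairs into \emph{matching} pairs (those with $X_{i_t}=Y_{j_t}$, each contributing at most $\ell$) and \emph{non-matching} pairs (contributing at most $(\varepsilon/2)\ell$ by the code property). The matching pairs on a monotone path form a common subsequence of $X,Y$, so they contribute in total at most $\ell L$.

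The main obstacle is bounding the non-matching contribution tightly. A naive use of the bound $T \leq n+m-1$ on the length of a monotone path only yields $\mathrm{LCS}(E(X),E(Y)) \leq \ell L + (\varepsilon/2)\ell(n+m)$, translating to an \emph{additive} $O(\varepsilon)$ error in the normalized distance, which is too weak when $\TLCS(X,Y)$ is small. To obtain the multiplicative guarantee one must additionally exploit the row- and column-capacity constraints: each block $\phi(X_i)$ (resp.\ $\phi(Y_j)$) of length $\ell$ can contribute at most $\ell$ matches across all its block pairs, so matching and non-matching contributions compete for the same capacity within each row and column. A careful budget argument along these lines gives the sharper bound $\mathrm{LCS}(E(X),E(Y)) \leq (1-\varepsilon)\ell L + (\varepsilon/2)\ell(n+m)$; substituting $d = 1 - 2L/(n+m)$ this normalizes to exactly $(1-\varepsilon)\,\TLCS(X,Y)$. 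Finally, a standard derandomization via the method of conditional probabilities replaces the random $\phi$ by a deterministic code with the same properties.
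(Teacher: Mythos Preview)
Your construction---a random code $\phi:\Gamma\to\Sigma^{\ell}$ with $|\Sigma|=\Theta(\varepsilon^{-2})$ and $\ell=\Theta(\log|\Gamma|)$, blockwise encoding $E(X)=\phi(X_1)\cdots\phi(X_n)$---and your upper bound are exactly the paper's. You also correctly isolate the core difficulty: the naive block-pair estimate yields only the additive bound $\TLCS(E(X),E(Y))\ge\TLCS(X,Y)-O(\varepsilon)$, not the required multiplicative one.

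The gap is precisely at the point you label ``a careful budget argument.'' Two issues. First, the sentence ``the matching pairs on a monotone path form a common subsequence of $X,Y$'' is not correct as written: the path is only \emph{weakly} monotone, so matching block pairs may repeat a row or a column and hence need not correspond to a common subsequence. (The conclusion $M\le\ell L$ does hold, but it requires the row/column capacities together with a fractional-matching argument, not the reasoning you gave.) Second, and more importantly, the sharper inequality you write down,
\[
\mathrm{LCS}(E(X),E(Y))\le(1-\varepsilon)\ell L+\tfrac{\varepsilon}{2}\,\ell(n+m),
\]
which is algebraically equivalent to $\LCS(E(X),E(Y))\ge(1-\varepsilon)\,\ell\,\LCS(X,Y)$, does \emph{not} follow from the capacity observation by itself, and establishing it is the entire technical content of the lower bound. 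The paper carries this out by converting an arbitrary alignment of $E(X),E(Y)$ into a \emph{block-structured} one (each block of $E(X)$ either perfectly matched to a single block of $E(Y)$ or fully deleted) via a two-stage procedure: Stage~I takes every ``significantly matched'' block pair---one with more than $\varepsilon\ell$ matches, which by the code distance forces $X_i=Y_j$---and upgrades it to a perfect block match, removing all other edges incident to those two blocks; Stage~II then deletes all remaining partial matches. A block-by-block charging argument (with separate case analyses for blocks touching one versus several partners) shows each stage inflates the alignment cost by at most a $(1+4\varepsilon)$ factor. Your proposal would need to supply an argument of comparable strength at this step; as it stands the ``budget argument'' is asserted rather than proved.
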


Observe that embedded string length is optimal up to logarithmic factors. The size of the alphabet $\Sigma$ must exceed $1/\varepsilon$, as otherwise, by a claim proved later, 
there will be a pair of strings whose distance will be contracted by at least  \( \left(1-\frac{1}{\abs{\Sigma}} \right)\)-factor. 

As a result, we find that when focusing on $\LCS$ approximation, we can, without loss of generality, limit our scope to a constant alphabet size without significantly impacting the quality of the approximation. This is captured in the following corollary which we provide without a proof. 

\begin{corollary}
Let $\Gamma $ be a finite alphabet, and let $0<\varepsilon<1/4$. 
Suppose that there exists an algorithm $\ALG$ that provides a $c$-factor approximation for the $\LCS$ metric for any pairs of strings in $\Sigma$ of total length $N$, where $\Sigma=O(1/\varepsilon^2)$, running in time $t(N)$.

Then there exists an algorithm $\ALG'$ that, given any pair of string in $\Gamma$ of total length $n$, provides a $c+\varepsilon$-approximation for their $\LCS$ distance in time $t(n\log \abs{\Gamma})$. 
    
\end{corollary}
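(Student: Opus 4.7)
The plan is to use the embedding $E$ of Theorem~\ref{thm:AlphabetReductionNormalized} as a black-box reduction: given the input pair $(X,Y)$ over $\Gamma$, we embed it into a pair $(E(X),E(Y))$ over a constant-size alphabet $\Sigma$, invoke $\ALG$ on the embedded pair, and then rescale the returned value back to an estimate for $\LCS(X,Y)$. The only thing to set up is the parameter $\varepsilon'$ used when invoking the theorem and the analysis of how the approximation factor propagates.

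Concretely, given $X,Y\in\Gamma^*$ with $|X|+|Y|=n$ and a target approximation $c+\varepsilon$, I would instantiate Theorem~\ref{thm:AlphabetReductionNormalized} with parameter $\varepsilon' = \varepsilon/(c+\varepsilon)$, so that $|\Sigma|=O(1/\varepsilon'^{\,2})=O(1/\varepsilon^2)$ (using that $c$ is treated as a constant) and $|E(X)|+|E(Y)| = O(n\log|\Gamma|)$. We then run $\ALG$ on $(E(X),E(Y))$ in time $t(n\log|\Gamma|)$ to obtain a value $\hat d'$ with $\LCS(E(X),E(Y)) \leq \hat d' \leq c\cdot \LCS(E(X),E(Y))$.

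To rescale, note that since $E$ is length-preserving in the sense that $|E(Z)|$ depends only on $|Z|$, the normalization factors satisfy $\alpha := (|E(X)|+|E(Y)|)/(|X|+|Y|)$, and the theorem's guarantee on the normalized distance yields
\[
(1-\varepsilon')\,\alpha \cdot \LCS(X,Y) \;\le\; \LCS(E(X),E(Y)) \;\le\; \alpha\cdot\LCS(X,Y).
\]
I would therefore output $\hat d := \hat d'/\bigl((1-\varepsilon')\alpha\bigr)$. Chaining the two bounds gives $\LCS(X,Y)\le \hat d \le \tfrac{c}{1-\varepsilon'}\LCS(X,Y)$, and the choice $\varepsilon'=\varepsilon/(c+\varepsilon)$ makes $c/(1-\varepsilon')=c+\varepsilon$, which is the desired approximation. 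The total running time is $t(n\log|\Gamma|)$ up to the (linear) cost of computing $E$ and the rescaling arithmetic, which the statement absorbs into $t(\cdot)$.

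There is no substantive technical obstacle here since Theorem~\ref{thm:AlphabetReductionNormalized} is invoked as a black box; the only delicate point is the bookkeeping around the parameter choice $\varepsilon'$ and confirming that the alphabet bound $O(1/\varepsilon^2)$ and the length bound $O(n\log|\Gamma|)$ survive the rescaling when $c$ is treated as a constant. If $c$ were instead allowed to grow, the proof still goes through but $|\Sigma|$ would inflate to $O(c^2/\varepsilon^2)$, a point worth noting in the write-up.
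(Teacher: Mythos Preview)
Your proof is correct; the paper explicitly states that the corollary is ``provide[d] without a proof,'' so there is no argument in the paper to compare against. Your approach---instantiate Theorem~\ref{thm:AlphabetReductionNormalized} with $\varepsilon'=\varepsilon/(c+\varepsilon)$, embed, run $\ALG$, and rescale by $(1-\varepsilon')\alpha$---is the natural one, and the verification that $c/(1-\varepsilon')=c+\varepsilon$ together with the caveat about $c$ being treated as a constant is exactly what is needed.
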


The proof strategy of Theorem~\ref{thm:AlphabetReductionNormalized} is as follows: Initially, we construct an error-correcting code within the smaller alphabet  \(\Sigma\), where \(\abs{\Sigma}=O(1/\varepsilon^2)\). This code ensures that every distinct pair of codewords shares only a small portion of LCS, indicating a significant \(\LCS\) between them. The code comprises $\abs{\Gamma}$ words,  with its dimension being logarithmic in the size of $\Gamma$. We interpret this code as a mapping from characters within \(\Gamma\) to short strings in \(\Sigma\).  The existence of such a code can be demonstrated using the probabilistic method. The construction is almost tight in terms of the smaller alphabet size: it is not hard to show that for any code \(C\subseteq \Sigma^k\), if \(\abs{C}>\abs{\Sigma}\), then there exist $c\neq c' \in C$, satisfying: \(\LCS(c,c')<(1-\frac{1}{\abs{\Sigma}})2k.\)

Employing the code construction, we embed input strings in a straightforward and natural manner:  Consider a string residing in \(\Gamma^*\), then each of its characters  is sequentially encoded using the code. This encoding ensures that identical characters are mapped to the same codeword, while the code's distance guarantees  that distinct characters may have only a few shared matches, akin to the global nature of \(\LCS\) alignments.  If there were no matches between distinct characters' encodings, any alignment for the embedded strings could be converted into an alignment between the input strings without any distortion in the normalized costs. However, these few matches between non-matching codewords introduce a slight distortion and complicate the proof.

\textbf{Alphabet Reduction - Binary Alphabets:}

Our previous method relied on a local approach, wherein the characters of the string from the large alphabet were encoded sequentially and independently. It appears that such a local strategy may not result in a normalized scaled isometric embedding into a small, particularly binary alphabet. As codes in such alphabets  have a relative distance of at most \(1/2\), and this distance affects the distortion of the normalized distances. Therefore, achieving alphabet reduction into binary alphabets requires a scaling that is not normalized, as well as a more global approach that does not encode the characters sequentially and independently.  However, there's a caveat: the encoding still needs to be performed independently on each string rather than on a pair of strings. Specifically, if we take a particular string \(X\) its encoding will remain the same regardless of the second string \( Y\). This aspect forms the focal point of our forthcoming result.

For clarity, we present our results for the \(\LCS\) metric, with a similar approach applicable to the \(\ED\) metric. Our main result states that one can embed \(\LCS\)  over any alphabet \(\Sigma\) into binary alphabet. We employ asymmetric embedding and prove that the distances are preserved up to some scaling function. The dimension of the embedded strings is quasi-polynomial, making this result more of a proof of concept at the moment. Nonetheless, we find it conceptually intriguing  and pose the question of decreasing the target dimension as an open question. Our main result is as follows:

\begin{thm}[Informal statement of Theorem~\ref{thm:formulas}]\label{thm:binaryAlphabet}
For any alphabet $\Sigma$ and for every $n\in \N$ 
there exist functions $G,H : \Sigma^n \rightarrow \{0,1\}^N$, \(f:[n]\to [N]\)   where $N=n^{O(\log n)}$
such that  for any $X,Y \in \Sigma^n$,
\[ \LCS(G(X),H(Y))=f(\LCS(X,Y)).\]    
\end{thm}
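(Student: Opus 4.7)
The plan is to construct $G$ and $H$ via an asymmetric gadget-based encoding, where each character of $X$ is replaced by a ``query'' gadget and each character of $Y$ is replaced by a much larger ``answer'' gadget that contains potential matches for every possible query at every position. The large blowup $n^{O(\log n)}$ will come from a hierarchical structure with $O(\log n)$ levels of nested separators, each level inflated by a $\mathrm{poly}(n)$ factor so that any optimal LCS alignment is forced to respect the hierarchy.

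First I would design, for each $\sigma\in\Sigma$ and each position $i\in[n]$, binary gadgets $g_{i,\sigma}\in\{0,1\}^M$ and $h_{i,\sigma}\in\{0,1\}^{M'}$ with $M'\gg M$, such that $\LCS(g_{i,\sigma},h_{j,\tau})$ is a prescribed, clean function of $(i,j,\sigma,\tau)$: large and equal to some constant $\alpha$ when $(i,\sigma)=(j,\tau)$, and strictly smaller by a controlled amount when $(i,\sigma)\neq (j,\tau)$. The $h_{j,\tau}$ gadgets are built as concatenations (with internal separators) indexed by all potential partners $(i,\sigma)$, so that $h_{j,\tau}$ carries a ``menu'' of candidate alignments. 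Then I would set
\[
G(X) \;=\; S_0\, g_{1,X_1}\, S_1\, g_{2,X_2}\, S_1\, \cdots\, S_1\, g_{n,X_n}\, S_0,
\qquad
H(Y) \;=\; S_0\, h_{1,Y_1}\, S_1\, h_{2,Y_2}\, S_1\, \cdots\, S_1\, h_{n,Y_n}\, S_0,
\]
where the separators $S_0,S_1,\dots$ are binary strings whose lengths grow geometrically between levels, ensuring that in any optimal LCS the separators of $G(X)$ must match the separators of $H(Y)$ in order. By iterating this construction across $O(\log n)$ levels, the $h$-gadgets themselves become hierarchical menus of menus, explaining both the $n^{O(\log n)}$ length and the need for asymmetry.

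The LCS analysis then proceeds in two steps. Using the dominance of the top-level separators, I would show that $\LCS(G(X),H(Y))$ equals a fixed separator contribution plus $\max_P \sum_{(i,j)\in P} \LCS(g_{i,X_i}, h_{j,Y_j})$ taken over non-crossing alignments $P\subseteq[n]\times[n]$. By designing the gadget-LCS scores so that the maximum is attained only by alignments $P$ that are themselves induced by a longest common subsequence of $X$ and $Y$ (i.e., by making any non-LCS alignment strictly worse through a spread in the scores $\alpha,\beta_1,\beta_2,\dots$ across the hierarchy), one would conclude that the maximum depends only on $\LCS(X,Y)$ and not on any other structural feature of $(X,Y)$. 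This fixes the function $f$.

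The main obstacle, and where most of the work lies, is precisely the last point: naively, $\max_P \sum \LCS(g_{i,X_i},h_{j,Y_j})$ depends not only on $\LCS(X,Y)$ but also on secondary features such as how the LCS alignment can be extended by mismatched pairs. Simple per-character gadgets fail for exactly this reason, as one can exhibit pairs $(X,Y)$ and $(X',Y')$ with equal LCS but different ``extendability'' of the optimal alignment. Overcoming this is what forces the hierarchical structure: by embedding at each recursion level a separator system that absorbs the secondary degree of freedom, one can arrange that the surplus contribution from extending the alignment is itself a function of $\LCS(X,Y)$ (e.g., equal to $n-\LCS(X,Y)$ multiplied by a fixed weight). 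Verifying this clean cancellation inductively over the $O(\log n)$ levels, and thereby extracting the explicit formula for $f$, is the hard part of the proof.
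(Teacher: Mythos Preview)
Your proposal correctly identifies the central obstacle: if one encodes the characters of $X$ and $Y$ directly as gadgets and tries to recover $\LCS(X,Y)$ from the optimal non-crossing alignment between gadget blocks, the optimum will in general depend on secondary structural features of $(X,Y)$ and not only on $\LCS(X,Y)$ (your own example of pairs with equal LCS but different ``extendability'' makes this precise). However, your proposed fix --- a hierarchy of separators over $O(\log n)$ levels that supposedly forces the surplus contribution to collapse to a function of $\LCS(X,Y)$ --- is not a fix as stated: you give no mechanism for what is computed at each level, no reason why $O(\log n)$ levels suffice, and you yourself concede that the ``clean cancellation'' is the hard part without supplying it. As written, the plan does not contain the key idea that makes the theorem go through.

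The paper's route is quite different and sidesteps the obstacle rather than attacking it head-on. The crucial observation is that for each threshold $k$, the predicate ``$LCS(X,Y)\ge k$'' is decidable by a nondeterministic log-space machine (it reduces to directed reachability), hence by a normalized boolean formula $\phi_n$ of depth $O(\log^2 n)$ whose input variables split into the bits of $\overline X$ and the bits of $\overline Y$. One then designs $AND$- and $OR$-gadgets that recursively encode \emph{formula evaluation} as an LCS instance: for each subformula $\psi$ one builds strings $g(\overline X,\psi)$ and $h(\overline Y,\psi)$ so that $LCS(g,h)$ takes exactly one of two fixed values $t(\psi)$ or $f(\psi)$, according to the truth of $\psi(\overline X,\overline Y)$. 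Because the output of each subformula is a single bit, the ``secondary features'' problem you flagged never arises. The final $G(X),H(Y)$ concatenate these gadget strings over all thresholds $k=1,\dots,n$, separated by long $0^M1^M$ blocks that force blockwise alignment; the resulting LCS is then an affine function of the number of $k$'s for which $\phi_n(\overline X,\overline Y,\overline k)$ holds, i.e., of $LCS(X,Y)$. The $n^{O(\log n)}$ length comes from the $O(\log^2 n)$ formula depth with a constant-factor blowup per gate, not from any direct hierarchy over the characters of $X$ and $Y$.
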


 Our initial consideration revolves around the fact that deciding whether \(\LCS(X,Y) \le k\) for two strings \(X,Y\in \Sigma^n\) and a threshold parameter $k$, can be accomplished by a Turing machine utilizing \(\log(n)\)-space. This capability can then be translated into a formula of \(\log^2(n)\)-depth. 

The foundation of our construction converts this formula into a pair of binary strings \(X',Y'\) of quasi-polynomial length, where \(X'\) (respectively, \(Y'\)) depends solely on \(X\) (\(Y\)) and the Indel distance between \(X',Y'\) is contingent on the distance between  \(X,Y\).  This is achieved by recursively transforming the formula into such a pair of strings gate by gate. Two essential components, referred to as \(AND\)- and \(OR\)-gadgets, implement this process.

The input for the \(AND\)-gadget consists of two pairs of strings \((X_0,Y_0), (X_1,Y_1)\), which can be thought of as outputs from previous levels. Here the \(X_i\)'s only depend on \(X\) and the \(Y_i\)'s only depend on \(Y\). Moreover, we are guaranteed that \(\LCS(X_i,Y_i)\) can only take two values \(\{ 
F,T\}\), where \(F<T\). The goal is to concatenate the \(X_i\) into a single string \(X\) and the \(Y_i\)'s into a different string \(Y\) such that: \(\LCS(X,Y) \) can also take value in \{$F',T'$\}, where \(F' < T'\) and: 
\(\LCS(X,Y)=T'\) if and only if \(\LCS(X_0,Y_0) =T \wedge \LCS(X_1,Y_1) =T\). Similarly for the \(OR\)-gadget.
Our construction of the LCS instance from the Formula Evaluation is similar to that of Abboud and Bringmann~\cite{DBLP:conf/icalp/AbboudB18} which considers reduction of the Formula Satisfiability to LCS.
The purpose of our reduction is different, though.

\textbf{Scaled Isometric Embedding of Indel into Edit Metrics:}
While our previous results aimed on reducing the alphabet size while keeping the underlying metric (either \(\LCS\) or \(\ED\)), this section focuses on converting one metric into another. 
Tiskin~\cite{Tiskin08} (in section 6.1) proposed a straightforward embedding from the $\ED$ metric to the $\LCS$ metric. 
This inspired our exploration into embedding in the reverse direction i.e. from the $\LCS$ metric to the $\ED$ metric.  Our primary contribution in this realm is a scaled isometric embedding from the $\LCS$ metric to $\ED$,  as outlined below. 
\begin{thm}[Indel Into Edit Metrics Embedding - Approximate embedding -- Statement of Theorem \ref{thm:indel-edit-apx}]\label{thm:IndelEditApx}
     For any alphabet $\Sigma$, $n \in \N $ and $ \varepsilon \in (0,1]$, there exist mappings $E:\Sigma^n\to \Sigma^n$ and $E':\Sigma^n\to (\Sigma\cup \{\$\})^{N}$, where $N = \Theta(n/\varepsilon)$, such that for any $X,Y \in \Sigma^n$, we have
     $$\ED(E(X),E'(Y)) = N - n+k, \text{ where } k\in \left[\frac{\LCS(X,Y)}{2},(1 + \varepsilon)\frac{\LCS(X,Y)}{2}\right).$$
     

\end{thm}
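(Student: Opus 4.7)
The plan is to define $E$ to be the identity on $\Sigma^n$ and to construct $E'(Y)$ by interleaving $Y$'s characters with blocks of $\$$-symbols. A natural choice is $E'(Y)=\$^{t-1}y_1\$^{t-1}y_2\$^{t-1}\cdots y_n\$^{t-1}$ with $t=\Theta(1/\varepsilon)$, giving $N=\Theta(n/\varepsilon)$. The key structural property is that since $E(X)=X\in\Sigma^n$ contains no $\$$, any matched pair in an edit-distance alignment between $E(X)$ and $E'(Y)$ pairs two $\Sigma$-characters, and hence projects to a common subsequence of $X$ and $Y$.

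For the lower bound $\ED(E(X),E'(Y))\geq N-n+\LCS(X,Y)/2$, I would use the accounting identity $\ED(E(X),E'(Y))=N+n-2m-s$, where $m$ is the number of matches and $s$ the number of substitutions in an optimal alignment. The observation above gives $m\leq L$, the LCS length of $(X,Y)$, and trivially $s\leq n-m$. Substituting yields $\ED\geq N+n-2L-(n-L)=N-L=N-n+\LCS(X,Y)/2$.

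For the upper bound $\ED(E(X),E'(Y))<N-n+(1+\varepsilon)\LCS(X,Y)/2$, I would exhibit an explicit alignment based on an LCS of $X$ and $Y$. Fix an LCS matching with $L$ pairs; this partitions $X$ and $E'(Y)$ into corresponding gaps of lengths $a_0,\ldots,a_L$ and $b_0,\ldots,b_L$, with $\sum_k a_k=n-L$ and each $b_k\geq t-1$ by construction. In gap $k$, substitute $\min(a_k,b_k)$ of the unmatched $X$-characters into $E'(Y)$-positions at cost $1$ each and delete/insert the leftovers. The total cost equals $N-L+\sum_k\max(0,a_k-b_k)$, so it suffices to bound the excess $\sum_k\max(0,a_k-b_k)$ by $\varepsilon(n-L)$.

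The main technical obstacle is this last bound: an adversary could a priori concentrate most of the $n-L$ unmatched $X$-characters in a single gap of size much greater than $t-1=\Theta(1/\varepsilon)$. The argument must exploit the freedom to shift matches within the LCS equivalence class and, if necessary, to drop a few matches so as to split a long $X$-gap across neighboring $E'(Y)$-segments (which carry $\$$-padding). Balancing the two costs---loss from matches given up and loss from unsubstitutable $X$-characters---to obtain a total excess below $\varepsilon(n-L)$ is where the calibration $t=\Theta(1/\varepsilon)$ enters, and is the crux of the proof.
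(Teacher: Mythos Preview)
Your construction, lower-bound argument, and upper-bound framework are essentially the paper's (one minor difference: the paper pads with $\$^n$ rather than $\$^{t-1}$ at the two ends, so that the boundary gap $a_0$ can always be absorbed by substitutions without further argument). The gap is that you stop at what you yourself call the crux. Note first that ``shifting matches within the LCS equivalence class'' is a dead end: different optimal matchings can rearrange the gap lengths $a_k$, but cannot make a long run of unmatched $X$-characters disappear (take $X=a^nb^n$, $Y=b^na^n$). So only ``dropping matches'' remains, and you do not carry it out.

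The paper's execution is a greedy spillover together with a density count. After matching a block, the $|d^X_i|$ deletion characters of $X$ are not confined to their own gap in $E'(Y)$: they are substituted one-by-one into the next $|d^X_i|$ consecutive positions of $E'(Y)$, even if this overruns into the following matching segment $m^{E'(Y)}_{i+1}$ (or beyond). Since any $t$ consecutive positions of $E'(Y)$ contain at most one non-$\$$ symbol, the overrun of $d^X_i$ covers at most $\lceil |d^X_i|/t\rceil$ non-$\$$ positions; each covered position blocks one future match, and the corresponding character of some later $m^X_j$ is deleted instead. Summing over $i$, the total number of deletions from $X$ is $O\bigl((n-L)/t\bigr)=O(\varepsilon(n-L))$, and the alignment cost is at most $(N-n)+(n-L)+O(\varepsilon(n-L))$. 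This is precisely your ``drop a few matches to split a long $X$-gap'' idea, but the point is that the density bound makes the accounting immediate: no balancing between ``matches given up'' and ``unsubstitutable characters'' is needed, because every dropped match is automatically charged to $\Theta(t)$ substituted $X$-characters.
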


Observe that while plugging \(\varepsilon\le\frac{1}{n}\) we obtain a scaled isometry at the expense of a quadratic increase in the length of the second string. Conversely, for constant values of \(\varepsilon\), $N$ scales as $O(n)$  albeit with the trade-off of only approximately preserving distances within a constant factor. For intermediate values of \(\varepsilon\), we can compromise between the accuracy and the stretch length.


Let us revisit Tiskin's (section 6.1 of \cite{Tiskin08} construction of the reverse embedding, namely, from $\ED$ into $\LCS$. The embedding proceeds as follows: a special character $\$$ is appended after every symbol of each string. It is easy to check that for each pair of strings, the \(\LCS\) between the embedded pair of strings equals twice the \(\ED\) between the original pair. 
In our construction, one string remains unaltered, while for the second string, we append after every symbol a block of length \(n\) consisting of the special character. 

The core of the proof demonstrates the conversion of any \(\LCS\)-alignment for the input strings into an \(\ED\)-alignment for the embedded strings, preserving the distances up to a scaling factor. This process involves replacing any deletions originally performed on the first string by substituting the characters with the special inserted character. Deletions made on the second string remain unaffected.

\subsection{Related Work}
The problem of embedding edit distance into other distance measures, such as Hamming distance, $\ell_1$, etc., has attracted significant attention in the literature. Let us briefly survey some of these approaches. 

Chakraborty et al.~\cite{CGK16} introduced a randomized embedding scheme from the edit distance to the Hamming distance. This embedding transforms strings from a given alphabet into strings that are three times longer. For each pair of strings embedded using the same random sequence, with high probability the edit distance between the embedded strings is at most quadratic in the Hamming distance of the original strings.
Batu et al.~\cite{BES06} introduced a dimensionality reduction technique: Given a parameter \(r>1\), they reduce the dimension by a factor of \(r\) at the expense of distorting the distances by the same factor. They employed the locally consistent parsing technique for their embedding. 
Ostrovsky and Rabani \cite{hamming_to_l1_rabani_2007} presented a polynomial time embedding from edit distance to $\ell_1$ distance with a distortion of $\OO(2^{\sqrt{\log n \log \log n}})$. They proposed a randomized embedding where the length of the output strings is quadratic in the input strings, and the distances are preserved, with high probability, up to the distortion factor.

\subsection{Future Directions}

\textbf{Introducing a Robust Concept of Approximation: Transitioning from Approximating \(\ED\) into Approximating \(\LCS\)}

Recall  that one of the reasons we aimed to isometrically embed the  \(\LCS\) metric into the \(\ED\) metric stemmed from the abundance of approximation results for \(\ED\) that might not easily extend to the \(\LCS\) metric. A natural approach, based on our embedding result, is to approximate the \(\LCS\) distance between \(X,Y\) by approximating the \(\ED\) between the embedded strings. However, this is not an immediate consequence due to the substantial disparity in length between the embedded strings and the notion of approximation in this case, as detailed next:

Recall that in Theorem~\ref{thm:IndelEditApx} the scaling mechanism is not normalized, i.e, the embedding function did not preserve normalized distances, but instead:
\[ \ED(E_1(X),E_3(Y)) = N - n+k, \text{ where } k\in \left[\frac{\LCS(X,Y)}{2},\dots, (1 + \varepsilon)\frac{\LCS(X,Y)}{2}\right]\] 
where \(N,n\) are the length of the embedded strings, and \(N=\Theta(\frac{n}{\varepsilon})\). Observe that the \(\ED\) between the embedded strings lies in the range of \([N-n,N]\).

Considering the substantial difference in length between the embedded strings, an algorithm that consistently outputs the value \(N-n\), regardless of the embedded strings, already yields a  \(1+O(\varepsilon)\)-approximation for the distance between the embedded strings. Certainly, such an outcome provides no information about the \(\LCS\) of the original strings.
Therefore, we introduce a more robust notion of approximation that generally addresses the discrepancy in string lengths:

\begin{definition}[A Robust Notion of Approximation:]
    Let \(c>1\), let \(\Sigma\) be a finite set, and let \(X,Y\in \Sigma^*\).  Define \(\abs{X}=N, \abs{Y}=n\), and assume \(N\ge n\). Define $k_{X,Y}$ such that: \(\ED(X,Y)=N-n+k_{X,Y}\).

    An algorithm is considered to provide a robust \(c\)-approximation for \(\ED\) if for all pairs \(X,Y\) it outputs \(k'\) such that: \(k'\in [k_{X,Y},ck_{X,Y}]\).
\end{definition}

We assert that for any value of \(\varepsilon\), any algorithm \(\ALG\) that provides a robust \(c\)-approximation for \(\ED\) yields an algorithm \(\ALG'\) that provides \((1+\varepsilon)c\)-approximation for \(\LCS\). Moreover, if the running time \(\ALG\) on input strings of lengths \(N,n\) is \(t(N,n)\), then the running time of \(\ALG'\) is \(t\left(\frac{n}{\varepsilon},n\right)\). The construction of $\ALG'$ is straightforward: on input strings \(X,Y\) we first apply the embedding,  then apply \(\ALG\) on the resulting strings and finally output: \(2k'\). 

We leave the quest of discovering a robust approximation algorithm for \(\ED\) as an open question, which falls outside the scope of this paper.

\subsection{Organization Of The Paper}
We structure the paper as follows. In Section~\ref{sec:alphabetReduction} we prove our main result, namely Theorem~\ref{thm:AlphabetReductionNormalized}, discussing normalized scaled isometric embedding between large and small alphabets. In Section~\ref{sec:AlphabetReductionBinary} we establish Theorem~\ref{thm:binaryAlphabet} focusing on alphabet reduction with binary alphabets. In Appendix~\ref{sec:IndelToEdit} we demonstrate the existence of an indel to edit scaled isometric embedding, as stated in Theorem~\ref{thm:IndelEditApx}.





\section{Preliminaries and Notations}\label{sec:preliminaries}
In this section we introduce the notations that is used throughout the rest of the paper.
For any string $X = x_1x_2\dots x_n$ and integers $i,j$, $X[i]$ denotes $x_i$ \footnote{We use $x_i$ or $X_i$ or $X[i]$ to denote the $i^{th}$ character of the string $X$ interchangeably.}, $X[i,j]$ represents substring $X' = x_i\dots x_j$ of $X$, and $X[i,j)=X[i,j-1]$.
``$\cdot$''-operator denotes concatenation, e.g $X\cdot Y$ is the concatenation of two strings $X$ and $Y$.
$\Lambda$ denotes the empty string.

\textbf{Edit Distance with Substitutions ($\ED$)}:
For strings $X,Y\in \Sigma^*$,  $\ED(X,Y)$ is defined as the minimal number of edit operations required to transform $X$ into $Y$. The set of edit operations includes character insertion, deletion, and substitutions.

\textbf {Indel Distance ($\LCS$)}:
For strings $X,Y\in \Sigma^*$,  $\LCS(X,Y)$ is defined as the LCS (Longest Common Subsequence) metric between $X$ and $Y$. It counts the minimal number of edit operations needed to convert the strings, where substitutions are excluded. 

\textbf{Normalized Distance:}
To assess the distance between each pair of strings in a standardized manner, it is advantageous to express it as a normalized value within the range \([0,1]\). To achieve this, we introduce the following definition:
\[\TED(X,Y)=\frac{\ED(X,Y)}{\max({\abs{X}, \abs{Y}})} , \, \,\,\,\, \TLCS(X,Y)=\frac{\LCS(X,Y)}{\abs{X}+ \abs{Y}}\]

A string $X'$ is considered a \textbf{subsequence} of another string $X$ if $\LCS(X,X') = |X|-|X'|$. 
$X'$ is considered a \textbf{substring} of $X$ if $X'$ is a contiguous subsequence of $X$. 

\textbf{Alignment}: 

For an alphabet $\Sigma$ and any two strings $X,Y\in \Sigma^*$ , an {\em $\ED$ alignment of $X$ and $Y$} is a a sequence of edit operations (insertions, deletions, and substitutions) that transform the string $X$ into $Y$. The cost of the alignment is determined by the number of edit operations. An alignment is optimal if it achieves  the lowest possible cost.  Observe that for each character of $X$ that wasn't deleted or substituted, can be matched with a unique character from $Y$. The collection of matched characters is referred to as the matching characters of the alignment. 

Similarly we define  an {\em $\LCS$ alignment of $X$ and $Y$}  as a sequence of edit operations, with the exception that substitutions are not permitted. The cost, optimality, and matching characters of an alignment are defined analogously. See Figure \ref{fig:alignment} for an example.

\begin{figure}[ht]
    \centering
    \includegraphics[width=0.6\textwidth]{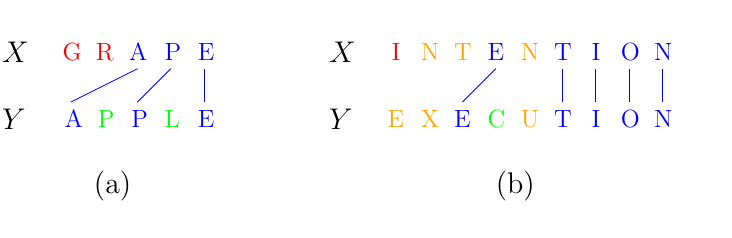}
    \caption{Example for (a) $\LCS$ alignment and (b) $\ED$ alignment (the matched characters are highlighted in blue, the deleted characters in red and the substituted characters in orange).}
    \label{fig:alignment}
\end{figure}

\section{Alphabet Reduction}\label{sec:alphabetReduction}
Within this section, we tackle the task of embedding strings from a sizable alphabet into a smaller one while preserving the global nature of the original metric space. Our main result demonstrates that it's possible to embed strings from any large alphabet \(\Gamma\) into strings of a smaller alphabet \(\Sigma\), where the length of the strings remains approximately unchanged, and the normalized distances are distorted by at most a factor of  \((1+\varepsilon)\). The size of the alphabet \(\Sigma\) increases quadratically with \(1/\varepsilon\).
To provide clarity, we present our results for the \(\LCS\) metric; a similar approach can be applied to the \(\ED\) metric. This section is structured as follows: In subsection~\ref{sec:normalizedEmbedding} we outline our findings regarding normalized scaled alphabet reductions, covering both lower and upper bounds. Section~\ref{sec:upperBounds} discusses our upper bounds, while Section~\ref{sec:lowerBounds} addresses lower bounds.

\subsection{Normalized Scaled Isometric Embedding - Our Finding}\label{sec:normalizedEmbedding}

An embedding $E$ is said to preserve lengths, if there exists: $\ell:\N \to \N$ such that: $\forall X \in \Gamma^n: \abs{E(X)}=\ell(n)$. 
It is non-shrinking if $\ell(n)\ge n$. 
It is natural to focus on non-shrinking embeddings otherwise if the embedding maps from a large alphabet to smaller one some distinct strings will get mapped to the same string. 

The following claim shows that any length-preserving embedding, mapping strings from large alphabet into strings of smaller one, necessarily contracts the normalized distances between certain pairs of strings. The proof of the claim is deferred to Section~\ref{sec:lowerBounds}.
\begin{claim}\label{claim:distortion}
   
    Let $\Gamma, \Sigma$ be finite alphabets, such that: $\abs{\Gamma}>\abs{\Sigma} $, and let $E:\Gamma^* \to \Sigma^*$, which is a length-preserving embedding, then for any $n\in \mathbb{N}$ we have:
\[ \exists X,Y \in \Gamma^n: \TLCS(E(X),E(Y))<\TLCS(X,Y). \]
\end{claim}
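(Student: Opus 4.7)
The plan is to produce an explicit witness pair in $\Gamma^n$ that achieves the maximum possible normalized indel distance, and then argue by pigeonhole that its image under any length-preserving $E$ must share at least one symbol and thus fall strictly below the maximum.

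Concretely, I would consider, for each character $a \in \Gamma$, the constant string $a^n \in \Gamma^n$. For any distinct $a, b \in \Gamma$, the strings $a^n$ and $b^n$ share no common character, so the longest common subsequence has length $0$, giving $\LCS(a^n, b^n) = 2n$ and therefore
\[
\TLCS(a^n, b^n) = \frac{2n}{2n} = 1,
\]
the largest value $\TLCS$ can take.

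Now examine the family $\{E(a^n) : a \in \Gamma\} \subseteq \Sigma^{\ell(n)}$, where $\ell(n)$ is the common embedded length guaranteed by length preservation (I implicitly assume $\ell(n) \ge 1$, which is needed for $\TLCS$ on the image side to be well-defined). Since $|\Gamma| > |\Sigma|$, pigeonhole applied to the first coordinate of each embedded string produces distinct $a, b \in \Gamma$ with $E(a^n)[1] = E(b^n)[1]$. Hence $E(a^n)$ and $E(b^n)$ share a common subsequence of length at least $1$, so their indel distance is at most $2\ell(n) - 2$, and
\[
\TLCS(E(a^n), E(b^n)) \le \frac{2\ell(n) - 2}{2\ell(n)} = 1 - \frac{1}{\ell(n)} < 1 = \TLCS(a^n, b^n).
\]
Choosing $X = a^n$ and $Y = b^n$ for this pair then witnesses the claim.

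There is no real obstacle here; the argument is pure pigeonhole and the only sanity checks are that constant strings of distinct letters attain normalized distance exactly $1$ and that $\ell(n) \ge 1$ so the right-hand side is meaningful. The argument is also quite robust: replacing the first coordinate with any other fixed coordinate (or even any fixed statistic taking fewer than $|\Gamma|$ values on $\Sigma^{\ell(n)}$) yields the same conclusion, which is reassuring for later strengthenings of the claim.
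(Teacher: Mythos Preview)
Your proof is correct and essentially identical to the paper's: both take the family of constant strings $\{a^n : a \in \Gamma\}$, apply pigeonhole on the first coordinate of their images in $\Sigma^{\ell(n)}$, and conclude that some pair of images shares a symbol and hence has normalized indel distance strictly below $1$. Your write-up is in fact slightly more explicit than the paper's in spelling out the normalized-distance computation.
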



Therefore, we redirect our attention to approximate embedding, where we allow for a slight distortion in distances. Our main result is as follows:


\AlphabetReductionNormalized*

    


Note that the distortion is ``one-sided" in the sense that the normalized distances of the embedded strings cannot surpass the normalized distance of the original strings. However, for the lower bound, a \((1-\varepsilon)\)-factor may be incurred.  Furthermore, we demonstrate that for any embedding, the normalized distances cannot be uniformly scaled by a fixed factor. In particular, we demonstrate that there exist pairs of strings whose normalized distances are reduced, while for other pairs, their normalized distances converge to each other arbitrarily closely. This, in turn, illustrates that we cannot deduce the value of $\TLCS(X,Y)$ directly from the value of $\TLCS(E(X),E(Y))$ by a simple scaling.

\begin{claim}\label{claim:scaling}
        Let $\Gamma, \Sigma$ be finite alphabets, such that: $\abs{\Gamma}>\abs{\Sigma} $, and let $E:\Gamma^* \to \Sigma^*$, which is a length-preserving non-shrinking embedding.  We have:
        \begin{enumerate}
            \item For any $n\in \mathbb{N}$, there exist $X,Y \in \Gamma^n$ such that \[\TLCS(E(X),E(Y))\le(1-\frac{1}{|\Sigma|})\TLCS(X,Y).\] 
            \item For any sequence $Z_1,Z_2,\dots$ where $|Z_n|=n$, \[\lim_{n\rightarrow \infty}\TLCS(E(Z_n),E(\emptystr))-\TLCS(Z_n,\emptystr) = 0,\text {where \(\emptystr\) denotes the empty string.}\] 
         \end{enumerate}

\end{claim}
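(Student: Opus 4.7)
The plan is to handle the two parts separately, with both reducing to pigeonhole arguments. Write $\ell(n):=|E(X)|$ for $X\in\Gamma^n$, so $\ell(n)\ge n$ by the non-shrinking hypothesis. I take $n\ge 1$ throughout so that $\TLCS$ is defined on the relevant pairs. For part (1) I use a double pigeonhole over the $|\Gamma|$ constant strings of length $n$; for part (2) I use the elementary lower bound $\LCS(A,B)\ge||A|-|B||$ together with the constancy of $|E(\emptystr)|$.

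For part (1), fix $n\ge 1$ and, for each $\gamma\in\Gamma$, apply pigeonhole inside the embedded constant string $E(\gamma^n)\in\Sigma^{\ell(n)}$: some character $\sigma_\gamma\in\Sigma$ appears in $E(\gamma^n)$ at least $\lceil\ell(n)/|\Sigma|\rceil$ times. Since $|\Gamma|>|\Sigma|$, a second pigeonhole applied to the map $\gamma\mapsto\sigma_\gamma$ yields distinct letters $\gamma\ne\gamma'$ with $\sigma_\gamma=\sigma_{\gamma'}=:\sigma$. Set $X=\gamma^n$ and $Y=\gamma'^n$. Since $X$ and $Y$ share no common character, $\TLCS(X,Y)=1$. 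On the embedded side, the common character $\sigma$ alone yields a common subsequence of $E(X)$ and $E(Y)$ of length at least $\ell(n)/|\Sigma|$, so $\LCS(E(X),E(Y))\le 2\ell(n)(1-1/|\Sigma|)$ and hence $\TLCS(E(X),E(Y))\le 1-1/|\Sigma|=(1-1/|\Sigma|)\,\TLCS(X,Y)$, as required.

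For part (2), observe that $\TLCS(Z_n,\emptystr)=n/n=1$ for every $n\ge 1$, so I need only show $\TLCS(E(Z_n),E(\emptystr))\to 1$. Let $\ell_0:=|E(\emptystr)|$, a constant independent of $n$. Any common subsequence of $E(Z_n)$ and $E(\emptystr)$ has length at most $\ell_0$, and using $\LCS(A,B)=|A|+|B|-2\,\mathrm{lcs}(A,B)$ gives $\LCS(E(Z_n),E(\emptystr))\ge\ell(n)+\ell_0-2\ell_0=\ell(n)-\ell_0$. Dividing by $\ell(n)+\ell_0$ yields $\TLCS(E(Z_n),E(\emptystr))\ge(\ell(n)-\ell_0)/(\ell(n)+\ell_0)$. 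Since $\ell(n)\ge n\to\infty$, the right-hand side tends to $1$; combined with the universal upper bound $\TLCS\le 1$, this forces $\TLCS(E(Z_n),E(\emptystr))\to 1$ and the difference to zero.

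There is no real technical obstacle here. The only thing to notice in part (1) is that two layers of pigeonhole are required: one to force a majority character inside each embedded constant string, and a second to synchronize those majority characters across two different letters of $\Gamma$. In part (2) the only subtlety is that $E(\emptystr)$ need not itself be empty, but its fixed length $\ell_0$ is absorbed by the $n\to\infty$ limit.
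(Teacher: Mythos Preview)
Your proof is correct and follows essentially the same approach as the paper's: for part (1) you both use constant strings $\gamma^n$, a pigeonhole to find a plurality character in each $E(\gamma^n)$, and a second pigeonhole over $|\Gamma|>|\Sigma|$ to synchronize two of them; for part (2) you both exploit the fixed length $\ell_0=|E(\emptystr)|$ together with $\ell(n)\ge n\to\infty$. Your part (2) is in fact slightly more careful than the paper's, since you divide by the correct normalizing factor $\ell(n)+\ell_0$ rather than $\ell(n)$.
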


\subsection{Upper Bounds}\label{sec:upperBounds}

The crux of Theorem~\ref{thm:AlphabetReductionNormalized} lies in the existence of an error correcting code with respect to the $\LCS$ metric, even when the alphabet size is small. More specifically, given a proximity parameter $\varepsilon>0$, and $\abs{\Gamma}$, we pick a set $C$ of strings residing in $\Sigma^k$ of cardinality $\abs{\Gamma}$, with large pairwise distance. The construction of $C$ follows a greedy approach reminiscent of the Gilbert-Varshamov bound~\cite{Gilbert52,Varshamov57}. Properties of the code are summarized in the next statement.


\begin{lemma}\label{lem:code}
   For any $\varepsilon < 1/2$, let $\Sigma$ be a finite alphabet satisfying  $\abs{\Sigma} > 32/\varepsilon^2$. For every $n \in \mathbb{N}$, there exists $k \in \mathbb{N}$ with $k = O(\log n)$ for which the following conditions are satisfied:
   
    \begin{enumerate}
        \item There exists a code $C_{n,\varepsilon} \subseteq \Sigma^k$ with $\abs{C_{n,\varepsilon}} = n$.
        \item $\forall c\neq c' \in C_{n,\varepsilon}: \, \, \LCS(c,c')\ge (2-\varepsilon)k$.
    \end{enumerate}
\end{lemma}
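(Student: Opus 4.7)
The plan is to carry out a Gilbert--Varshamov style greedy construction. As a preparatory observation, for equal-length strings $c,c'\in\Sigma^k$ one has $\LCS(c,c')=2(k-L)$, where $L$ denotes the length of their longest common subsequence. Hence the target inequality $\LCS(c,c')\ge (2-\varepsilon)k$ is equivalent to requiring that every pair of distinct codewords shares \emph{no} common subsequence of length greater than $\ell:=\lfloor \varepsilon k/2\rfloor$. The goal thus reduces to showing that for some $k=O(\log n)$ there exist $n$ strings in $\Sigma^k$ pairwise avoiding common subsequences of length $>\ell$.

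The next step is to upper bound the number of ``bad neighbors'' of a fixed string $c\in\Sigma^k$, meaning strings $c'$ sharing a common subsequence of length $>\ell$ with $c$. Any such $c'$ is determined by (i) a choice of $\ell+1$ positions in $c$ (at most $\binom{k}{\ell+1}$ options), (ii) a choice of $\ell+1$ positions in $c'$ where the matched characters from $c$ must sit in order (again $\binom{k}{\ell+1}$ options; these characters are then forced), and (iii) an arbitrary fill-in of the remaining $k-\ell-1$ positions of $c'$. This yields the bound $\binom{k}{\ell+1}^{2}\,|\Sigma|^{k-\ell-1}$. Running a greedy process — start with $C=\emptyset$ and repeatedly insert any string not in the bad neighborhood of any codeword already chosen — produces a code of size $n$ as long as
\[
n\cdot\binom{k}{\ell+1}^{2}|\Sigma|^{k-\ell-1}<|\Sigma|^{k},\qquad\text{i.e.,}\qquad n\cdot\binom{k}{\ell+1}^{2}<|\Sigma|^{\ell+1}.
\]

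Finally, I would verify this inequality is achievable for $k=O(\log n)$. Since $\ell+1\ge\varepsilon k/2$, one has $\binom{k}{\ell+1}\le \bigl(ek/(\ell+1)\bigr)^{\ell+1}\le (2e/\varepsilon)^{\ell+1}$, and the hypothesis $|\Sigma|>32/\varepsilon^{2}$ comfortably exceeds $(2e/\varepsilon)^{2}=4e^{2}/\varepsilon^{2}$, so the ratio $\rho:=(2e/\varepsilon)^{2}/|\Sigma|$ satisfies $\rho<8/e^{2}<1$ uniformly in $\varepsilon$. The required inequality then becomes $n<\rho^{-(\ell+1)}$, which holds once $\ell\ge (\log n)/\log(1/\rho)$, giving $k=O(\log n/\varepsilon)=O(\log n)$ for fixed $\varepsilon$. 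The main obstacle is step two — getting a clean bound on the number of bad neighbors of a fixed string under LCS — and tuning the constant in the alphabet hypothesis; the value $32$ in $|\Sigma|>32/\varepsilon^{2}$ is chosen precisely to leave $\rho$ bounded away from $1$ by an absolute constant so that the greedy argument yields a logarithmic $k$.
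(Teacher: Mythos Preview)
Your proposal is correct and follows essentially the same Gilbert--Varshamov argument as the paper (the paper phrases it as a probabilistic construction rather than an explicit greedy process, and bounds $\binom{k}{\cdot}$ via the entropy function rather than $(ek/m)^m$, but the underlying counting of ``bad'' pairs through $\binom{k}{\cdot}^{2}|\Sigma|^{-\cdot}$ is identical). One arithmetic slip to fix: you assert $\rho<8/e^{2}<1$, but $8/e^{2}\approx 1.08>1$; the correct computation is $\rho<4e^{2}/32=e^{2}/8\approx 0.924<1$, after which your conclusion goes through unchanged.
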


The proof of Lemma \ref{lem:code} is given in Appendix \ref{appendix:code-alph-red}.

Endowed with the existence of such an error correcting code we assign a distinct codeword to each of the characters in the larger alphabet $\Gamma$. The embedding procedure is as follows: Given a string $X\in \Gamma^n$, we embed it into a string in $(\Sigma^k)^n$, where the encoding of $X$ is formed by concatenating the codewords assigned to each of its characters. The presentation of the embedding, along with its proof of correctness, is provided in Section~\ref{sec:embeddingAlphabetReduction}.

\subsubsection{The Embedding}\label{sec:embeddingAlphabetReduction}
Let $\Gamma$ be an alphabet, and let $\varepsilon$ be a proximity parameter, and let $C:=C_{\abs{\Gamma},\varepsilon}$ be the code whose existence is guaranteed by Lemma~\ref{lem:code}. We interpret the code $C$ as a function mapping characters from $\Gamma$ into (short) strings in $\Sigma^k$. The embedding proceeds as follows: each string in $\Gamma^*$ is encoded sequentially character by character, where the encoding of each character is performed using the code $C$. Our main technical lemma is as follows:

\begin{lemma}\label{lem:embedding}
   For any finite alphabet $\Gamma$ and $\varepsilon > 0$, consider the encoding $C_{\abs{\Gamma},\varepsilon}:\Gamma\to \Sigma^k$ as implied by Lemma~\ref{lem:code}. For any string $X\in \Gamma^*$ define: $E(X)=C(X_1)\dots C(X_n)$ (where $n=\abs{X}$). 
   
   Then for any pair of strings $X,Y\in \Gamma^*$ the following inequality holds:
   \[ (1-48\varepsilon)\LCS(X,Y) \le \LCS(E(X),E(Y)) \le \LCS(X,Y)\]
   Moreover, for every $X\in \Gamma^n$ we have: $\abs{E(X)}=O(n\log \abs{\Gamma})$.
\end{lemma}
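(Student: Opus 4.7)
The length bound $|E(X)|=k|X|=O(|X|\log|\Gamma|)$ is immediate from Lemma~\ref{lem:code}. For the upper bound I would lift any optimal $\LCS$-alignment of $X,Y$ block by block: each matched pair $X_i=Y_j$ becomes $k$ aligned character pairs inside the two identical codewords $C(X_i)=C(Y_j)$ (cost $0$), and each indel becomes a block indel of the whole codeword (cost $k$). The lifted alignment costs exactly $k\cdot\LCS(X,Y)$, matching the $k$-scaled/normalized sense in which the lemma's inequalities should be read.

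For the lower bound, fix an optimal alignment $\mathcal{A}$ of $E(X),E(Y)$ with $L$ matched character pairs and project it onto codeword blocks: let $m_{ij}$ count matches of $\mathcal{A}$ inside $(C(X_i),C(Y_j))$. The cells with $m_{ij}>0$ form a weakly monotone sequence of at most $|X|+|Y|-1$ cells in the codeword grid, with $\sum_j m_{ij}\le k$ and $\sum_i m_{ij}\le k$ for every row and column (enforced by codeword length). Call $(i,j)$ \emph{good} if $X_i=Y_j$ and \emph{bad} otherwise. By Lemma~\ref{lem:code}, two distinct codewords share an LCS of length at most $\varepsilon k/2$, so every bad cell contributes $m_{ij}\le\varepsilon k/2$; the total bad contribution is therefore at most $\varepsilon k(|X|+|Y|)/2$.

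To bound the good contribution $L_{\text{good}}:=\sum_{\text{good }(i,j)}m_{ij}$ by $kL^*$ (where $L^*:=\text{LCS length}(X,Y)$), I would invoke a K\"onig-type argument. The good cells visited by $\mathcal{A}$ form a weakly monotone set $\mathcal{G}$, and any matching in $\mathcal{G}$ (a subset with pairwise distinct rows and columns) is automatically strictly monotone---weak monotonicity together with distinct rows forces non-decreasing columns, and distinct columns force strict increase---so any matching in $\mathcal{G}$ is a common subsequence of $X$ and $Y$ and hence has size at most $L^*$. Taking a minimum vertex cover of $\mathcal{G}$ by $R$ rows and $C$ columns with $|R|+|C|=\mu(\mathcal{G})\le L^*$, and using the row/column capacity $k$, gives $L_{\text{good}}\le|R|k+|C|k\le kL^*$. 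Combining with the bad estimate, $L\le kL^*+\varepsilon k(|X|+|Y|)/2$; translating via $\LCS(E(X),E(Y))=k(|X|+|Y|)-2L$ and $\LCS(X,Y)=|X|+|Y|-2L^*$ produces $\LCS(E(X),E(Y))\ge k\LCS(X,Y)-\varepsilon k(|X|+|Y|)$ on the $k$-scaled side.

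\textbf{Main obstacle.} The remaining step is converting the additive slack $\varepsilon(|X|+|Y|)$ into a multiplicative $(1-O(\varepsilon))$ factor on $\LCS(X,Y)$. This is immediate when $\LCS(X,Y)\gtrsim\varepsilon(|X|+|Y|)$, but for very similar pairs of strings the number of bad cells visited by $\mathcal{A}$ with positive weight must be sharpened to $O(\LCS(X,Y))$; I expect this to follow from an exchange argument that charges each bad cell on the optimal path back to an indel in the alignment of $X,Y$ induced by $\mathcal{A}$, since such bad matches would otherwise steal row/column capacity from adjacent good cells without producing a net gain. Making this sharpening precise, together with rescaling $\varepsilon$, turns the additive slack into the stated multiplicative $(1-48\varepsilon)$ factor.
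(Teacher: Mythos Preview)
Your K\"onig argument for bounding $L_{\text{good}}\le kL^*$ is correct and elegant---a genuinely different device from the paper's explicit rounding to block-structured alignments. But the real content of the lemma is precisely the step you flag as the ``main obstacle'', and your sketch does not close it.

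From $L_{\text{bad}}\le \varepsilon k(|X|+|Y|)/2$ you obtain only the additive estimate $\LCS(E(X),E(Y))/k\ge \LCS(X,Y)-\varepsilon(|X|+|Y|)$, which is vacuous once $\LCS(X,Y)\ll \varepsilon(|X|+|Y|)$. What you actually need is $L_{\text{bad}}\le O(\varepsilon)\cdot\mathrm{cost}(\A)$, i.e., every unit of bad matching must be charged to cost \emph{already present in $\A$ itself}, not to indels in some yet-undefined ``alignment of $X,Y$ induced by $\A$''. This charging is not automatic: a bad cell $(i,j)$ may sit in a row and a column whose capacity is almost entirely absorbed by neighboring good cells, so neither the row deficit $k-\sum_{j'}m_{ij'}$ nor the column deficit need be $\Omega(k)$. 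One has to argue that whenever a block lacks a \emph{significant} match (more than $\varepsilon k$ characters into a single opposing block), that block together with the segment of opposing blocks it touches already carries cost at least $(1-O(\varepsilon))k$. That statement---and its careful accounting across overlapping rows/columns---is exactly the substance of the paper's Claims~\ref{claim:stageI} and~\ref{claim:stageII}.

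The paper sidesteps the $L_{\text{good}}/L_{\text{bad}}$ decomposition altogether. It takes an optimal alignment $\A$ of $E(X),E(Y)$ and \emph{rounds} it to a block-structured alignment in two stages: first, every significant match (necessarily between equal $\Gamma$-symbols, by the code distance) is upgraded to a perfect block match; second, all remaining partial matches are deleted. A per-block cost analysis shows each stage inflates $\mathrm{cost}(\A)$ by at most a factor $(1+4\varepsilon)$, yielding the multiplicative bound directly. Your K\"onig step would be a clean substitute for checking that a block-structured alignment descends to an alignment of $X,Y$, but it does not replace the per-block charging that converts additive slack into multiplicative loss; that is where the work lies, and it is what your proposal leaves undone.
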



In the sequel, an \(\LCS\)-alignment converting \(E(X)\) into \(E(Y)\) is simply referred as an alignment. In the course of the proof, we introduce the concepts of blocks and block-structured alignments. For $X\in \Gamma^n$ we define the $i$-th block of $E(X)$ to be the substring of $E(X)$ corresponding to $X_i$, namely it equals $C(X_i)$.
Furthermore, given an alignment $\A$ that transforms $E(X)$ into $E(Y)$, we label it as a {\em block-structured} alignment if, for each $i$-th block in $E(X)$, the alignment either fully matches all the characters of the block to some block $j$ in $E(Y)$ or entirely deletes the $i$-th block.
It is clear that block-structured alignments for the embedded strings correspond one-to-one with alignments for the original strings, and their normalized distance remains unchanged.
To prove our main technical lemma we transform any alignment converting \(E(X)\) to \(E(Y)\) into a block-structured one without significantly increasing its cost. 

We will introduce certain notations to facilitate the presentation of the proof. For any $X\in \Gamma^n$ we employ lowercase letters such as: $i,j,k$ etc. to represent indices of $X_i$. We utilize tuples from $[n]\times [k]$ to represent indices of $E(X)$, where the first index signifies the block index denoted by lowercase letters, and the second one describes the index within the block represented by a lowercase Greek letter.

The following claim, stated without a formal proof, will be useful in the subsequent proof.
\begin{claim}\label{claim:alignEmbedding}
    For an alignment $\A$ transforming $E(X)$ into $E(Y)$, we have that the set of matching characters has to be monotone,  indicating  that for $(i,\alpha)<(i',\alpha')$ in lexicographical order if $(i,\alpha)$ is matched  to $(j,\beta)$ and $(i',\alpha')$ to $(j',\beta')$ by $\A$, we must have: $(j,\beta)<(j',\beta')$.

\end{claim}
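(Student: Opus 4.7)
The plan is to reduce the claim to the well-known monotonicity of $\LCS$ alignments by identifying the tuple indexing $[n]\times [k]$ with the natural linear indexing of the embedded string. Specifically, I would first observe that since $E(X)=C(X_1)\cdots C(X_n)$ and each block $C(X_i)\in\Sigma^k$ has length exactly $k$, the map $(i,\alpha)\mapsto (i-1)k+\alpha$ is a bijection between $[n]\times[k]$ and the set of positions of $E(X)$. For two tuples with $(i,\alpha)<_{\mathrm{lex}}(i',\alpha')$, either $i<i'$, in which case $(i-1)k+\alpha\le ik\le (i'-1)k+\alpha'$ (strict whenever $i<i'$), or $i=i'$ and $\alpha<\alpha'$, in which case $(i-1)k+\alpha<(i'-1)k+\alpha'$ trivially. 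Hence lexicographic order on tuples coincides with the natural left-to-right order on positions of $E(X)$, and analogously for $E(Y)$.

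Given this identification, the claim reduces to the standard fact that the set of matched pairs of any $\LCS$ alignment is monotone with respect to the linear position orders: if linear positions $p<p'$ in $E(X)$ are matched to positions $q$ and $q'$ respectively in $E(Y)$, then $q<q'$. I would recall this briefly from the definition: an $\LCS$ alignment is a sequence of insertions and deletions converting $E(X)$ into $E(Y)$, which may equivalently be viewed as a common subsequence formed by the matched characters read left-to-right in each string. Two crossing matched pairs would contradict the ``common subsequence'' structure, since the matched characters, read in the order induced by either string, would no longer coincide.

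Combining the two observations, if $(i,\alpha)<_{\mathrm{lex}}(i',\alpha')$ then the corresponding linear positions in $E(X)$ satisfy $p<p'$, so their images under $\A$ satisfy $q<q'$ in $E(Y)$, which translates back to $(j,\beta)<_{\mathrm{lex}}(j',\beta')$. The only mild obstacle is purely notational, namely setting up the tuple-to-linear-position correspondence cleanly; there is no genuine mathematical content beyond the non-crossing property of $\LCS$ matchings, which is why the authors state the claim without a formal proof.
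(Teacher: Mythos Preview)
Your proposal is correct and is exactly the natural argument: identify the tuple indexing with the linear position via $(i,\alpha)\mapsto(i-1)k+\alpha$, observe this is order-preserving between lexicographic and linear orders, and invoke the standard non-crossing property of $\LCS$ matchings. The paper explicitly states this claim ``without a formal proof,'' so there is no alternative argument to compare against; you have simply filled in the routine details the authors omitted.
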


Given an alignment $\A$, we partition $E(Y)$ into $n$ segments based on its matching blocks in $E(X)$. Define the $i$-th segment as follows: if no character of the $i$-th block of $E(X)$ is matched under $\A$, then the $i$-th segment is empty. Otherwise, let $j$ denote the first block of $E(Y)$ that includes a matching character for one of the characters in the $i$-th block. If $i$ is the smallest block containing a matching coordinate within $j$, then the $i$-th segment starts at $(j,1)$, otherwise it starts at the first coordinate within the $j$-th block matching with the $i$-th block.  

The ending point of the segment is defined similarly: Let $j'$ be the initial block of $E(Y)$ containing a match for the $(i+1)$-th block of $E(X)$. If the $i$-th block does not match any of the $j'$-th coordinates, then the $i$-th segment ends at $(j-1,k)$. Otherwise, it ends at the coordinate preceding the first match of $(i+1)$ and $j$. We define the starting point of the first non-empty segment as $(1,1)$ and the end point of the last non-empty segment as $(n,k)$. See Figure~\ref{fig:align} (in appendix) for an illustration.



Additionally, we define $cost_{\A}(i)$, the cost of the $i$-th block, as the sum of unmatched coordinates in $E(X)$ within its $i$-th block and the unmatched coordinates in $E(Y)$ within its $i$-th segment. For example, in the illustration provided in Figure~\ref{fig:align} (in appendix), we have $cost_{\A}(1)=0+3$ since all the characters of the first block of $E(X)$ are matched, and there are $3$ unmatched coordinates in the first segment of $E(Y)$. As the decomposition of $E(Y)$ results in disjoint parts, the sum of the costs across the different blocks equals the cost of $\A$, which we denote by: $cost(\A)$.

\paragraph{Converting Into Block-Structured Alignments}
In this section, we introduce an algorithm that takes an arbitrary alignment $\A$ transforming $E(X)$ into $E(Y)$ as input and produces an alignment $\A^*$. 
The resulting alignment $\A^*$ is block-structured, and its cost does not substantially exceed that of $\A$. 

To design the new matching we need few definitions. We say that blocks $i$ and $j$ are \textit{partially matched} by an alignment $\A$ if there exists a pair $(i,\alpha)$ and $(j,\beta)$ matched by $\A$. 
Furthermore, $i$ and $j$ are \textit{significantly matched} by $\A$ if more than $\varepsilon k$ characters in the $i$-th block of $X$ are matched into the $j$-th block; 
we say that $i$ and $j$ are \textit{perfectly matched} if $\A$ matches every character in $E(X)$ into a character in $E(Y)$.
The algorithm operates in two stages. 
In the first stage, the algorithm iteratively takes two significantly matched blocks that are not perfectly matched, 
removes all matches that the characters of the two blocks participate in,
and introduces a perfect match between the two blocks.
In the second stage, we remove all the matches from blocks that are not matched perfectly. 

A key observation is that if $i$ and $j$ are significantly matched then we have the following inequality: $\LCS(C(X_i),C(Y_j))<(2-\varepsilon)k$. 
Therefore, by the distance guarantee regarding $C$, we must have $X_i=Y_j$.  
The algorithm is given next. 
For the sake of the analysis its second stage is divided into two cases.

\begin{algorithm}[ht]
\caption{Converting Into Block-Structured Alignments: }\label{alg:block-Structured}
\KwData{An alignment $\A$ converting $E(X)$ into $E(Y)$}
\KwResult{An alignment $\A'$ that is block-structured, converting $E(X)$ into $E(Y)$}
$\A'\gets A$;

\Comment*[r]{Stage I:}
\For{$i=1 \cdots \abs{X}$}{

  \If{$i$ has some significant match}{
    $j \gets \text{smallest block in $E(Y)$ that significantly matches $i$}$;
    
    Remove all matches incident with blocks $i$ and $j$ from $\A'$, and add a \textit{perfect} match between the two blocks;
    
}
}
$i \gets 1$ \Comment*[r]{Stage II:}
    
\While{$i<= \abs{X}$}{
  \If{$i$ has a partial and not perfect match}{
        
        \If{$i$ is partially matched to more than a single block} 
        {
        
            Delete from $\A'$ all matches of characters from the $i$-th block of $E(X)$; 
            
            $i++$;
        }
        \Else{
                       
            $j \gets \text {smallest $E(Y)$-block that partially matches $i$}$; 

            $i'\gets$ smallest $E(X)$ that does not match with the $j$-th block;

            Delete from $\A'$ all matches of characters from the $j$-th block of $E(Y)$;
            
            $i\gets i'$;
        }

    
    
}
}
\end{algorithm}

\paragraph{The Correctness of the Algorithm}
We break down the proof of correctness into three claims. Claim~\ref{claim:blockStructure} states that the output produced by Algorithm~\ref{alg:block-Structured} is a block-structured alignment. The subsequent two claims provide bounds on the cost difference between the input and output alignments.

\begin{claim}\label{claim:blockStructure}
    The alignment $\A'$ produced by Algorithm~\ref{alg:block-Structured} from any alignment $\A$ converting $E(X)$ into $E(Y)$ is a block-structured alignment converting $E(X)$ into $E(Y)$.
\end{claim}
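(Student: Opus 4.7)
The plan is to track three invariants of the alignment $\A'$ maintained by the algorithm: (I) $\A'$ is at every step a valid monotone alignment in the sense of Claim~\ref{claim:alignEmbedding}; (II) at the end of Stage~I every block of $E(X)$ is either perfectly matched to a block of $E(Y)$ or has no significant match to any single $Y$-block; (III) at the end of Stage~II every block of $E(X)$ is either perfectly matched (and this perfect match is inherited from Stage~I) or is entirely unmatched.

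For Stage~I, fix an iteration processing block $i$ with smallest significantly matched $Y$-block $j$. The algorithm erases every match incident to blocks $i$ or $j$ and installs the perfect match $\{(i,\alpha)\mapsto(j,\alpha) : \alpha\in[k]\}$. To check invariant (I), take any surviving match $(i'',\alpha'')\mapsto(j'',\beta'')$ with $i''\neq i$ and $j''\neq j$. Using the inductive monotonicity of the pre-iteration alignment together with the existence, in that alignment, of at least one match from block $i$ to block $j$ (guaranteed by the significant-match hypothesis), any $i''<i$ originally satisfied $j''\le j$; since all $j$-incident matches have just been removed, $j''<j$. Symmetrically, $i''>i\Rightarrow j''>j$. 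Hence the new perfect match is monotonically consistent with every surviving match. Stage~II only deletes matches, so monotonicity persists trivially there. For invariant (II): a block lacking a significant match at its turn still lacks one afterwards since later iterations only remove matches, and a block processed in Stage~I becomes perfectly matched and cannot later lose it because once its $Y$-partner $j$ is fully occupied, no other $X$-block can ever acquire a partial match to $j$.

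For Stage~II and invariant (III), the loop visits $i=1,\ldots,|X|$ adaptively, skipping every $i$ that is already perfectly matched or entirely unmatched. The remaining partial-but-not-perfect case splits as follows: if $i$ touches two or more $Y$-blocks then all of $i$'s matches are deleted, leaving $i$ unmatched; otherwise $i$ touches a unique $Y$-block $j$, and the algorithm sets $i'$ to the smallest $X$-block not matching $j$, erases all of $j$'s matches, and jumps to $i'$. The crucial geometric claim is that every intermediate block $i\le i''<i'$ has all of its partial matches confined to column $j$, so erasing $j$'s column empties $i, i+1,\ldots, i'-1$ simultaneously. This follows from a monotonicity argument combining (a) the fact that $i$'s matches all sit inside column $j$, (b) the extremal definition of $i'$ as the first $X$-block not touching $j$, and (c) the Stage~I guarantee that no surviving block carries $\ge\varepsilon k$ matches to any single $Y$-block.

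Combining (I)--(III), at termination every $X$-block is either perfectly matched to a unique $Y$-block or entirely deleted, and the surviving matched pairs inherit monotonicity; this is exactly the definition of a block-structured alignment converting $E(X)$ into $E(Y)$. The main obstacle is the geometric assertion inside Stage~II's Else branch: plain monotonicity alone permits an intermediate block $i''\in(i,i')$ to touch both $j$ and some later $Y$-block, and ruling this out requires carefully combining the residual structure left by Stage~I with the extremal choice of $i'$. I expect pinning down this invariant cleanly to be the most delicate step of the formal proof.
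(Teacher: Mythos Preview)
Your invariant-based framework is more systematic than the paper's case analysis, but the geometric claim you flag as delicate is not merely hard to pin down --- it is false as stated. Take three $X$-blocks with no significant matches where block~$1$ matches only $Y$-block~$j$, block~$2$ matches both $j$ and some $j'>j$ (each with fewer than $\varepsilon k$ edges), and block~$3$ matches only some $j''>j'$; this configuration respects monotonicity and your hypotheses (a)--(c). At $i=1$ the Else branch fires; since blocks $1$ and $2$ touch $j$ while block~$3$ does not, $i'=3$, column~$j$ is erased, and the loop jumps to $i=3$. Block~$2$ still carries its matches to $j'$, yet it has been skipped, and since no other $X$-block touches $j'$ no later iteration will target column~$j'$ either. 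Hence block~$2$ ends partially matched, contradicting invariant~(III); erasing column~$j$ does \emph{not} empty all of $i,\ldots,i'-1$ simultaneously.

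The paper argues along different lines: it classifies each block by its pattern in the \emph{original} alignment $\A$ (not the evolving $\A'$) and uses the monotonicity dichotomy that an $X$-block touching several $Y$-blocks can share only its first or last $Y$-column with other $X$-blocks, then asserts that the corresponding row-clearing and column-clearing occur in successive iterations. That sketch is brief and, on the configuration above, does not obviously spell out how the skipped block~$2$ is handled either, so your instinct about where the crux lies is correct. A rigorous argument will need either a stronger structural invariant than your (a)--(c) supply, or a reading of the jump $i\gets i'$ that permits the loop to revisit earlier indices so that block~$2$ is eventually processed.
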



The proof of Lemma~\ref{lem:embedding} is derived from the following claims, let us first state the claims.


\begin{claim}\label{claim:stageI}
    Let $\varepsilon<1/4$ and let $\A$ be any alignment converting $E(X)$ into $E(Y)$. Let $\A_{\mathrm{I}}$ be the resulting alignment obtained by applying the algorithm described in stage I on $\A$ with a proximity parameter of $\varepsilon$.  Then, $$cost(\A_{\mathrm{I}})\le (1+4\varepsilon)cost(\A).$$
\end{claim}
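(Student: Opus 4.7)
The plan is to express the total cost change produced by Stage~I in a clean combinatorial form and then bound it by $4\varepsilon\cdot cost(\A)$ by exploiting the minimality of the chosen $Y$-blocks together with the monotonicity of alignments.

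First, I would identify the processed $X$-blocks $I=\{i_1<\cdots<i_p\}$ and the corresponding chosen $Y$-blocks $J=\{j_1<\cdots<j_p\}$; by Claim~\ref{claim:alignEmbedding} both sequences are monotone. Stage~I removes every match incident to a block in $I\cup J$ and installs $k$ perfect matches between each pair $(i_r,j_r)$. Partitioning the matches of $\A$ into $M_{II},M_{IO},M_{OI},M_{OO}$ according to which of $I$ and $J$ their $X$- and $Y$-endpoints lie in, I would derive
\[
cost(\A_{\mathrm{I}})-cost(\A) \;=\; 2\bigl(|M_{II}|+|M_{IO}|+|M_{OI}|-pk\bigr).
\]
Using the identities $|M_{II}|+|M_{IO}|=pk-U_X$ and $|M_{II}|+|M_{OI}|=pk-U_Y$, where $U_X$ and $U_Y$ are the total unmatched-character counts in $I$- and $J$-blocks respectively, this simplifies to the symmetric form
\[
cost(\A_{\mathrm{I}})-cost(\A) \;=\; 2(|M_{IO}|-U_Y) \;=\; 2(|M_{OI}|-U_X).
\]

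Next I would bound $|M_{IO}|-U_Y$ by $2\varepsilon\cdot cost(\A)$. The key structural observation is that every match in $M_{IO}$ joins some $i_r\in I$ with some $j'\notin J$, and the minimality of $j_r$ forces a dichotomy: if $j'<j_r$ then the number of matches between blocks $i_r$ and $j'$ is at most $\varepsilon k$ (otherwise $j'$ would have been chosen in place of $j_r$), while if $j'>j_r$ then block $i_r$'s matches span at least two $Y$-blocks, so segment $i_r$ has length $L_{i_r}>k$, producing extra unmatched $Y$-characters that contribute to $cost_\A(i_r)$. Aggregating the two contributions via the decomposition $cost(\A)=\sum_i cost_\A(i)$ should give the desired bound.

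The main obstacle is the amortized analysis in the second step: one must ensure that the $\varepsilon k$ charges from the $j'<j_r$ case and the segment-length charges from the $j'>j_r$ case do not overlap across different iterations $r$, and that the accounting correctly absorbs the ``spread'' into the cost of at most one block $cost_\A(i_r)$. The monotonicity of $(i_r)_r$ and $(j_r)_r$ together with the fact that blocks of $\overline{J}$ can only be spanned by consecutive $i_r$'s in a controlled way makes this possible, but the precise bookkeeping is the delicate technical crux of the proof.
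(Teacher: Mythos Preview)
Your global reformulation is correct and is a genuinely different route from the paper. The identity
\[
cost(\A_{\mathrm{I}})-cost(\A)=2(|M_{IO}|-U_Y)=2(|M_{OI}|-U_X)
\]
holds, and the characterisation of the surviving matches as $M_{OO}$ plus the $p$ perfect pairs is right (though the monotonicity of $(j_r)_r$ is not literally an application of Claim~\ref{claim:alignEmbedding}, since the $j_r$'s are chosen in a dynamically changing alignment; you need the extra observation that in $\A$ every match of $i_r$ lies to the right of some match of $i_{r-1}$ into $j_{r-1}$). The paper instead argues iteration by iteration, showing that for each processed block $i$ the costs of the affected blocks $i'$ increase by at most a $(1+4\varepsilon)$ factor; the key structural fact is that any $i'\notin I$ touching the chosen $j$ has no significant match and therefore $cost_\A(i')\ge(1-2\varepsilon)k$.

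There is, however, a concrete gap in your dichotomy. The claim ``if $j'>j_r$ then segment $i_r$ has length $>k$'' is false: take $i_{r}-1$ matching positions $1,\dots,k-\varepsilon k-1$ of $j_r$, $i_r$ matching positions $k-\varepsilon k,\dots,k$ of $j_r$ together with position $1$ of $j_r+1$, and $i_r+1$ first matching position $2$ of $j_r+1$. Then segment $i_r$ has length $\varepsilon k+2$ and contains no unmatched $Y$-character at all, so you cannot charge the overflow match to ``extra unmatched $Y$-characters in segment $i_r$''. The correct accounting in the $j'>j_r$ case is on the $X$ side: if some $i'$ with $i'>i_r$ (in particular, any later match out of $j_r$, or symmetrically any overflow of $i_r$ past $j_r$) occupies positions of $j_r$, then $i_r$'s matches are confined to blocks $\le j_r$ and to fewer than $k$ positions there, forcing unmatched characters in block $i_r$; those charges go against $U_X$, not against segment length. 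This is exactly the mechanism the paper exploits (``each such deletion can be accounted against a character of the $i$-th block deleted by $\A$ and matched under $\A_{\mathrm{I}}$''). Your symmetric identity $2(|M_{OI}|-U_X)$ is the natural vehicle for this; switching to it and using the paper's two facts---(i) $i'\notin I$ with $i'<i_r$ contributes at most $\varepsilon k$ matches into $j_r$ and has $cost_\A(i')\ge(1-2\varepsilon)k$, (ii) matches from $i'>i_r$ into $j_r$ are dominated by $u_r$---will close the argument.
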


\begin{claim}\label{claim:stageII}
    Let $\A_{\mathrm{I}}$ be any resulting alignment obtained by applying the algorithm described in stage I on some alignment $\A$ with a proximity parameter of $\varepsilon$. Let $\A_{\mathrm{II}}$ be the resulting alignment obtained by applying the algorithm described in stage II on $\A_{\mathrm{I}}$ with a proximity parameter of $\varepsilon$. Then, 
    $$cost(\A_{\mathrm{II}})\le (1+4\varepsilon)cost(\A_{\mathrm{I}}).$$
\end{claim}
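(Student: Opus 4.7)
} The starting point is that for any alignment $\A$ the cost equals $|E(X)|+|E(Y)|-2M_\A$, where $M_\A$ is its match-count. Since Stage~II only removes matches, $cost(\A_{\mathrm{II}})=cost(\A_{\mathrm{I}})+2\Delta$, where $\Delta$ denotes the total number of matches deleted, and the claim reduces to showing $\Delta \le 2\varepsilon\,cost(\A_{\mathrm{I}})$. A key structural fact inherited from Stage~I is that every block $i$ of $E(X)$ still processed by Stage~II has at most $\varepsilon k$ matches into any single $j$-block of $E(Y)$: more would yield a common subsequence of length $>\varepsilon k$ between $C(X_i)$ and $C(Y_j)$, forcing $\LCS(C(X_i),C(Y_j))\le(2-\varepsilon)k$ and hence $X_i=Y_j$ by Lemma~\ref{lem:code}, which Stage~I would already have converted into a perfect match.

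I would then charge the deletions iteration by iteration to $cost_{\A_{\mathrm{I}}}$, exploiting that each outer-loop iteration either advances $i$ by one (Case~A) or jumps from $i$ to $i'>i$ (Case~B), so charges to different iterations land on disjoint $E(X)$-blocks and add without overcounting. In \emph{Case~A}, block $i$ is partially matched to $m\ge 2$ distinct $j$-blocks $j_1<\cdots<j_m$; hence $t_i<m\varepsilon k$, and monotonicity of the alignment (Claim~\ref{claim:alignEmbedding}) forces segment $i$ to span the portion of $E(Y)$ from block $j_1$ to block $j_m$, giving $cost_{\A_{\mathrm{I}}}(i)\ge(k-t_i)+((m-1)k-t_i)=mk-2t_i$. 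Substituting $t_i<m\varepsilon k$ into $2t_i\le 4\varepsilon(mk-2t_i)$ and solving algebraically yields $2t_i\le 4\varepsilon\,cost_{\A_{\mathrm{I}}}(i)$ for every $\varepsilon\le 1/4$.

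In \emph{Case~B} the algorithm deletes all matches incident to a single block $j$ of $E(Y)$ and advances to $i'$, the first $E(X)$-block not matching $j$. A short monotonicity argument shows that the intervening blocks $i,i+1,\ldots,i'-1$ must all match exclusively into $j$, each with fewer than $\varepsilon k$ matches, so the deletion cost $2\sum_{\ell=i}^{i'-1}t_\ell$ is bounded by $2\min((i'-i)\varepsilon k,\,k)$, while the $E(X)$-side contribution $\sum_\ell(k-t_\ell)$ is at least $(i'-i)(1-\varepsilon)k$ in the regime $i'-i\le 1/\varepsilon$ and at least $(i'-i-1)k$ otherwise. A short case split on $i'-i$ versus $1/\varepsilon$ gives a ratio of at most $2\varepsilon/(1-\varepsilon)\le 4\varepsilon$ whenever $\varepsilon\le 1/2$. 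Summing the Case~A and Case~B bounds over all iterations yields $\Delta\le 2\varepsilon\,cost(\A_{\mathrm{I}})$, hence $cost(\A_{\mathrm{II}})\le(1+4\varepsilon)\,cost(\A_{\mathrm{I}})$.

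The main obstacle is the Case~A estimate when $m$ is small and the matches of block $i$ cluster near a single block boundary, so that the loose segment-length bound used above can degenerate. I would address this by appealing to the full definition of the segment, in particular to the endpoint extensions to $(j,1)$ and $(j'-1,k)$ whenever applicable, and—if required—by amortizing the deletion charge between block $i$ and the neighbouring block whose segment absorbs the "missing" length; since these two blocks are disjoint from the blocks processed by subsequent iterations, the amortization still preserves the global $(1+4\varepsilon)$ factor for all $\varepsilon\le 1/4$.
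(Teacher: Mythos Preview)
Your plan follows essentially the same route as the paper: both reduce the claim to bounding the matches removed in each Stage~II iteration against the $\A_{\mathrm{I}}$-cost of the blocks touched in that iteration, both exploit that after Stage~I every surviving non-perfect block pair contributes fewer than $\varepsilon k$ matches, and both split according to the two branches of the while loop (your Case~A/B are exactly the paper's Case~I/II). The paper also notes explicitly the subtlety you flag at the end of Case~A, namely that a naive per-block argument can fail and a more global charging is needed.

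One point where your sketch is slightly off: in Case~B the monotonicity argument does \emph{not} give that all of $i,\dots,i'-1$ match exclusively into $j$ --- the last block $i'-1$ may also touch some $j'>j$ (only the blocks $i,\dots,i'-2$ are forced to match $j$ alone, since once a block touches $j'>j$ no later block can return to $j$). The paper absorbs this by summing over the $m$ blocks matching $j$ and using the lower bound $(m-1)(1-\varepsilon)k$ rather than $m(1-\varepsilon)k$, i.e.\ it simply drops the contribution of the last block on the $E(X)$-side; the lost $\varepsilon k$ is harmless for the final $(1+4\varepsilon)$ factor. With that small correction your Case~B calculation goes through just as the paper's does.
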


The proofs of claims \ref{claim:blockStructure}, \ref{claim:stageI} and \ref{claim:stageII} are given in Appendix \ref{appendix:code-alph-red}.

\begin{proofof} {Lemma~\ref{lem:embedding}}
[using Claim~\ref{claim:stageI} and Claim~\ref{claim:stageII}]


Let $OPT$ represent the normalized cost of the optimal alignment between $X$ and $Y$, and $\widetilde{OPT}$ denote the normalized cost of the optimal alignment between $E(X)$ and $E(Y)$.
Notice that any alignment between $X$ and $Y$ can be paired with an alignment between $E(X)$ and $E(Y)$ having the same cost. Consequently, we have: \(\widetilde{OPT}\le OPT.\)
To complete the argument, it remains to establish that: \((1-48\varepsilon) OPT \leq \widetilde{OPT}\),  which can be achieved by demonstrating: \(OPT \leq (1+24\varepsilon)\widetilde{OPT}\).

Consider the optimal alignment $\A$ that transforms $X$ into $Y$, and let $\A'$ be the alignment generated by Algorithm~\ref{alg:block-Structured} when applied to $\A$. According to Claims~\ref{claim:stageI} and~\ref{claim:stageII}, we obtain:
\[\frac{1}{2|E(X)|}\cdot cost(\A') \le \frac{1}{2|E(X)|}\cdot (1+4\varepsilon)^2 cost(\A) \le \frac{1}{2|E(X)|}\cdot (1+24\varepsilon) cost(\A) = (1+24\varepsilon)OPT. \]
We conclude the proof by noting that: \( \widetilde{OPT}\le \frac{1}{2|E(X)|} \cdot cost(\A')\).
\end{proofof}



\subsection{Lower Bounds}\label{sec:lowerBounds}

\begin{proofof}{Claim~\ref{claim:distortion}}

For any value of $n\in \N$, define $A_n$ as the set of length $n$ strings composed of a single character from $\Gamma$. Clearly, $\abs {A_n}=\abs{\Gamma}$ and moreover, for every distinct pair of strings in $A_n$, their Indel distance is $2n$.

Now consider any embedding $E:\Gamma^*\to \Sigma^*$. Since $\abs {A_n}=\abs{\Gamma}>\abs{\Sigma}$,  by the pigeonhole principle there exist $X\neq Y\in A_n$ satisfying: $E(X)_1=E(Y)_1$ \footnote{$E(X)_i$ is the $i^{th}$ character of the string $E(X)$.}. Hence, $\LCS(E(X),E(Y))<2\ell(n)$ while $\ED(X,Y)=2n$. 
\end{proofof}

\begin{proofof}{Claim~\ref{claim:scaling}}
\begin{enumerate}
    \item Fix $n\in \N$ and consider any embedding $E:\Gamma^*\to \Sigma^*$. For any $X\in \Gamma^n$, define the value $p(E(X))\in \Sigma$ as the plurality value among $\{E(X)_i\}_{i\in \N}$ (ties are broken arbitrarily). Observe that the character $p(E(X))$ appears at least $\frac{\ell(n)}{\abs{\Sigma}}$ times in the string $E(X)$. Furthermore, for any $X,Y\in \Sigma^n$ if: $p(E(X))=p(E(Y))$, then we get: $LCS(E(X),E(Y))\ge \frac{\ell(n)}{\abs{\Sigma}}$ and hence: $\LCS(E(X),E(Y))\le \left(1-\frac{1}{\abs{\Sigma}}\right)2\ell(n)$.

As in the proof of Claim~\ref{claim:distortion}, define $A_n$ as the set of strings composed of a single character from $\Gamma$. Recall that $\abs {A_n}=\abs{\Gamma}$ and moreover, for every distinct pair of points in $A_n$, their $LCS$ distance is $2n$.

Since $\abs {A_n}=\abs{\Gamma}>\abs{\Sigma}$,  by the pigeonhole principle there exist $X\neq Y\in A_n$ satisfying: $p(E(X))=p(E(Y))$, yielding: $\LCS(E(X),E(Y))\le \left(1-\frac{1}{\abs{\Sigma}}\right)2\ell(n)$, whereas $\LCS(X,Y)=2n$,  as claimed.
\item Let $k=|E(\emptystr)|$. For $Z\in \Gamma^n$, we have: $\LCS(E(Z),E(\emptystr))\ge \ell(n)-k$ so  $\TLCS(E(Z),E(\emptystr))\ge 1-\frac{k}{\ell(n)} \ge 1 - \frac{k}{n} $. On the other hand, $\LCS(Z,\emptystr)=n$ so $\TLCS(Z,\emptystr)=1$.
\end{enumerate}\end{proofof}

\section{Alphabet Reduction - Binary Alphabets}\label{sec:AlphabetReductionBinary}

In this section we show a reduction of $\ED$ and $\LCS$ over an arbitrary alphabet to the binary alphabet.
The reduction expands the strings super-polynomially, but one can think of it as a proof of concept that more efficient reduction might exist. 
The main theorem of this section is the following statement which is a formal statement
of Theorem \ref{thm:binaryAlphabet}.
For ease of presentation it is beneficial to think about Longest Common Subsequence instead of $\LCS$.
That is how we state the theorem here.

 \begin{theorem}\label{thm:formulas}
 For any integer $n\ge 1$, any alphabet $\Sigma$ of size at most $n^3$, there exist integers $S,R,N$ where $N=n^{O(\log n)}$
 and functions $G,H, G', H' : \Sigma^n \rightarrow \{0,1\}^N$
 such that  for any $X,Y \in \Sigma^n$,
 \begin{eqnarray*}
 LCS(X,Y) & = & \frac{LCS(G(X),H(Y)) - R}{S} \\
 \ED(X,Y) & = & \frac{LCS(G'(X),H'(Y)) - R}{S} . 
 \end{eqnarray*}
 \end{theorem}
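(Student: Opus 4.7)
The plan is, for each threshold $k\in\{1,\ldots,n\}$, to view the predicates $[LCS(X,Y)\ge k]$ and $[\ED(X,Y)\le k]$ as Boolean functions of the bits of $X$ and $Y$. Both predicates lie in $\mathrm{NL}$ (for LCS, guess the matching positions in log space; for ED, the same guess plus counting), hence by $\mathrm{NL}\subseteq\mathrm{NC}^2$ they are computed by Boolean formulas $\phi_k$ and $\psi_k$ of depth $O(\log^2 n)$ and size $n^{O(\log n)}$. Since $|\Sigma|\le n^3$, each character of $X,Y$ is encoded by $O(\log n)$ bits and each formula has $O(n\log n)$ input variables that partition naturally into \emph{$X$-variables} and \emph{$Y$-variables}.

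Given such a formula, I would attach to every node $v$ a pair of binary strings $(U_v,V_v)$ with the invariant that $U_v$ depends only on $X$, $V_v$ only on $Y$, and $LCS(U_v,V_v)\in\{F_v,T_v\}$ for two fixed values $F_v<T_v$ determined by the \emph{shape} of the subformula rooted at $v$, with $LCS(U_v,V_v)=T_v$ exactly when that subformula evaluates to True on $(X,Y)$. Leaves are handled trivially: a leaf labelled by an $X$-bit (or its negation) is represented by a one-bit $U_v$ equal to that bit against the constant $V_v=1$, giving $\{F,T\}=\{0,1\}$; $Y$-leaves are symmetric. For internal gates I would reuse the AND- and OR-gadgets of Abboud--Bringmann~\cite{DBLP:conf/icalp/AbboudB18}, which take the two child pairs $(U_0,V_0),(U_1,V_1)$, surround them by calibrated binary padding blocks, and output a new pair $(U_v,V_v)$ whose LCS is again two-valued and whose larger value is realized precisely when the intended AND or OR of the children is True; each gadget blows up the string length by a constant factor.

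To recover the \emph{exact} value of $LCS(X,Y)$, I would run this construction for every $k\in\{1,\ldots,n\}$ and rescale the top-level outputs (by padding) so that all pairs share a common False value $F$ and common gap $T_k-F_k=S$. Then I would concatenate them as $G(X):=U_1\cdot Z\cdot U_2\cdot Z\cdots Z\cdot U_n$ and $H(Y):=V_1\cdot Z\cdot V_2\cdot Z\cdots Z\cdot V_n$, where $Z$ is a long binary separator block chosen so that any LCS alignment is forced to respect the segment partition (long runs of a distinguished bit suffice). Hence $LCS(G(X),H(Y))=C+\sum_{k=1}^{n}LCS(U_k,V_k)$ with $C$ a fixed separator contribution, and since exactly $LCS(X,Y)$ of the predicates $\phi_k$ are True, $\sum_k LCS(U_k,V_k)=nF+S\cdot LCS(X,Y)$. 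Taking $R:=C+nF$ yields the desired equation. The $\ED$ case is identical using $\psi_k$ in place of $\phi_k$; the count of True predicates becomes $n+1-\ED(X,Y)$, so an analogous choice of $R$ and $S$ (after a sign change absorbed into the gadgets) produces $G',H'$.

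The principal obstacle is the AND-gadget: a naive concatenation of $(U_0,V_0)$ and $(U_1,V_1)$ admits an intermediate LCS value $T+F$ when exactly one child is True, which would break the two-valued invariant. Abboud--Bringmann's construction avoids this by inserting long binary padding between the child strings of carefully tuned length, so that any optimal alignment either honours both children independently (giving $T_0+T_1$ or $F_0+F_1$) or sacrifices an entire child block, and the remaining arithmetic collapses both of the ``one False'' possibilities onto the designated single False value; the OR-gadget is dual. Since the formula has $n^{O(\log n)}$ nodes of depth $O(\log^2 n)$ and each gadget multiplies length by a constant factor, the inductive construction produces strings of length $2^{O(\log^2 n)}=n^{O(\log n)}$, and concatenating $n$ such instances across the thresholds contributes only a polynomial factor, meeting the claimed bound $N=n^{O(\log n)}$.
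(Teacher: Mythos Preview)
Your proposal is essentially the paper's own proof: $\mathrm{NL}\subseteq\mathrm{NC}^2$ to get depth-$O(\log^2 n)$ threshold formulas, the Abboud--Bringmann AND/OR gadgets applied recursively to obtain two-valued LCS instances, and a separated concatenation over all thresholds $k$ with $R$ and $S$ read off from the resulting affine relation.

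Two small points need tightening. First, your one-bit leaf encoding gives $(k,F)=(1,0)$, which violates the gadget preconditions (both the paper's and AB18's gadgets require $k$ even and $F\ge k/2$); the paper instead uses two-bit leaves with $(F,T)=(1,2)$. Second, for $\ED$ you want the predicate $[\ED(X,Y)\ge k]$, not $[\ED\le k]$: with the latter the final LCS is a \emph{decreasing} affine function of $\ED(X,Y)$, and ``a sign change absorbed into the gadgets'' is not an actual operation. The fix is to negate the formula via De~Morgan (or invoke $\mathrm{NL}=\mathrm{coNL}$); once both threshold formulas are normalized to the same depth, the parameters $T,F$ and hence the \emph{same} $R,S$ serve both identities, which the theorem statement requires but your sketch does not address.
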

Hence, for any pair of strings $X,Y$ one can recover $\ED(X,Y)$ from $\LCS(G'(X),H'(Y))$ over a binary alphabet.
Both mappings $G,H$ and $G',H'$ can be computed efficiently in the length of their output. 
Indeed, they will be defined explicitly below.
We remark that the bound $n^3$ on the size of $\Sigma$ is essentially arbitrary and could be replaced for example by a bound $2^n$
without change in the other parameters (except for multiplicative constants).
However, the $n^3$ bound allows for hashing any large alphabet by a random pair-wise independent hash function to an alphabet of size $n^3$
without affecting the distance of any given pair of strings except with probability $<1/n$.

In order to prove the theorem we will need several auxiliary functions.
We say that a $0$-$1$ string is {\em balanced} if it contains the same number of 0's and 1's. 
We say a formula $\phi$ is {\em normalized} if it consists of alternating layers of binary $AND$ and $OR$ and all of its literals are at the same depth; 
each literal is either a constant, a variable or its negation.

We define two functions $g,h : \{0,1\}^* \times \{{\mathrm{normalized\  formulas}}\} \rightarrow \{0,1\}^*$ and two threshold functions $f,t:  \{{\mathrm{normalized\  formulas}}\} \rightarrow \mathbb{N}$ as follows:
Let us consider sets of variables $U=\{u_1, \dots , u_p\}$ and $V = \{v_1,\dots , v_q\}$, and let $A = \{a_1, \dots , a_p\}$ where $a_i$ is the assignment to the variable $u_i$ for all $1\le i\le p$, and $B = \{b_1, \dots , b_q\}$ where $b_i$ is the assignment to the variable $v_i$ for all $1\le i\le q$.

Let \(\phi(U,V)\) be a normalized formula which is defined over two disjoint sets of variables $U=\{u_1, \dots , u_p\}$ and $V = \{v_1,\dots , v_q\}$. Let $A \in \{0,1\}^p ,B\in \{0,1\}^q$  where $A$ and $B$ are interpreted as assignments for $U$ and $V$ respectively. 
We define  two functions $g,h$, such that $g$ gets as an input a pair $(\phi,A)$ and outputs a string in $\{ 0,1\}^*$, similarly $h$ takes a pair $(\phi,B)$ as its input and outputs a string in $\{ 0,1\}^*$. We also define threshold functions $f,t$ which take such a formula as input and output a natural number. The crux of the construction is that for any assignment $A$ for $U$ and $B$ for $V$ we have that if $\phi$ is satisfied by the assignment pair $A,B$ then $LCS(g(\phi,A),h(\phi,B))=t(\phi)$, otherwise $LCS(g(\phi,A),h(\phi,B))=f(\phi)$

We establish the recursive definitions of $g,h,f$ and $t$  based on the depth of the formula. The base case is when $\phi$ is either a constant $0,1$ or single literals $u_i, \neg u_i, v_j , \neg v_j$, where, $u_i\in U , v_j\in V$. Here by $\neg 0$ we understand symbol $1$, and similarly by $\neg 1$ we understand symbol $0$.

\noindent\bigskip
\begingroup
\renewcommand{\arraystretch}{1.25} 
\begin{center}
\begin{tabular}{c|c|c|c|c|c|c}
              &  $\phi = u_i$    & $\phi = \neg u_i$ & $\phi = v_j$    & $\phi = \neg v_j$ & $\phi = 1 $ & $\phi = 0 $  \\
\hline
 $g(A,\phi)$  &  $\neg a_i a_i$  & $a_i \neg a_i$    & 0 1             &  0 1              &   0 1       &  1 0\\
 $h(B,\phi)$  &  0 1             &  0 1              & $\neg b_j b_j$  & $b_j \neg b_j$    &   0 1       &  0 1\\
\hline
 $t(\phi)$    &  2               &  2                & 2               & 2                 & 2               & 2  \\
 $f(\phi)$    &  1               &  1                & 1               & 1                 & 1               & 1   
\end{tabular}
\end{center}
\endgroup

\bigskip\noindent

and further inductively:

\bigskip\noindent
\begingroup
\renewcommand{\arraystretch}{1.5} 
\begin{tabular}{c|c|c}
              &  $\phi = \phi_0 \;OR\; \phi_1 $ & $\phi = \phi_0 \;AND\; \phi_1$   \\
\hline
 $g(A,\phi)$  &  $1^{k/2}\,\,1^{4k}\,\,g(A,\phi_0)\,\,1^{4k}\,\,0^{4k}\,\,g(A,\phi_1)\,\,0^{4k}\,\,0^{k/2}$  & $0^{T+F}\,\,1^{11k+T+F}\,\,0^{5k}\,\,g(A,\phi_0)\,\,0^k\,\,1^k\,\,0^k\,\,g(A,\phi_1)\,\,0^{5k}$  \\
 $h(B,\phi)$  &  $0^{k/2}\,\,0^{4k}\,\,h(B,\phi_0)\,\,0^{4k}\,\,1^{4k}\,\,h(B,\phi_1)\,\,1^{4k}\,\,1^{k/2}$ & $0^{T+F}\,\,0^{5k}\,\,h(B,\phi_0)\,\,0^k\,\,1^k\,\,0^k\,\,h(B,\phi_1)\,\,0^{5k}\,\,1^{11k+T+F}$    \\
\hline
 $t(\phi)$    &  $9k+T$  &  $13k+3T+F$   \\
 $f(\phi)$    &  $9k+F$  &  $13k+2T+2F$.
\end{tabular}
\endgroup
\bigskip\noindent
where  $k=|g(x,\phi_0)|$, $T=t(\phi_0)$, and $F = f(\phi_0)$.

Key properties of our functions are summarized in the next lemma.

\begin{lemma}\label{lem:fleeval}
Let $\phi(U,V)$ be a balanced formula of depth $d$ with set of variables $U = \{u_1, \dots , u_p\}$ and $V = \{v_1, \dots , v_q \}$. 
For every two assignments $A,A'\in\{0,1\}^p$ to variables $U$, 
we have $|g(A,\phi)|=|g(A',\phi)|$. Similarly, for every two assignments $B,B'\in\{0,1\}^q$ to variables $V$, $|h(B,\phi)|=|h(B',\phi)|$.
Additionally, $|g(A,\phi)|=|h(B,\phi)| \le 30^d$.

Furthermore, the following holds:
\begin{itemize}
    \item If $\phi(A,B)$ is true then $LCS(g(A,\phi),h(B,\phi))=t(\phi)$.
    \item If $\phi(A,B)$ is false then $LCS(g(A,\phi),h(B,\phi))=f(\phi)$.
\end{itemize}
Finally, $f(\phi) < t(\phi)$.
\end{lemma}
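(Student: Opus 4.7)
The proof proceeds by induction on the depth $d$ of the balanced normalized formula $\phi$. The base case ($d=0$, where $\phi$ is a constant or literal) is immediate: each entry in the first table has $|g(A,\phi)|=|h(B,\phi)|=2$, so the length claim is trivial (independent of the assignment and bounded by $2 \le 30^0 \cdot 2$, absorbing the constant into a slight loosening of the bound or handling $d=0$ separately). For each of the six base formulas one checks by inspection that the two-character strings $g(A,\phi)$ and $h(B,\phi)$ share $2$ characters (in order) iff the literal is satisfied by the assignment and $1$ character otherwise, which yields the claimed $t(\phi)=2$ and $f(\phi)=1$.

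For the inductive step, I would first handle the structural length invariants separately: assuming by induction that $|g(A,\phi_i)|$ and $|h(B,\phi_i)|$ depend only on $\phi_i$ and are equal for $i=0,1$, note that since $\phi$ is balanced both subformulas have the same depth and hence the same common length $k$, and the padding in the definition of $g(A,\phi), h(B,\phi)$ depends only on $k$, $T=t(\phi_0)$, $F=f(\phi_0)$. A direct count of the padding terms in each of the OR and AND cases shows that $|g(A,\phi)| = |h(B,\phi)|$ and that the recurrence is bounded by $L(d) \le 30 \cdot L(d-1)$, giving the $30^d$ bound. The final inequality $f(\phi)<t(\phi)$ follows immediately from the arithmetic in the table together with the inductive $T>F$.

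The substantive step is verifying the LCS values of the two gadgets, which I would do by case analysis on the truth values of $\phi_0(A,B)$ and $\phi_1(A,B)$. For each case I would prove the claim in two directions. For the lower bound, I would explicitly exhibit an alignment: for instance, in the OR case when $\phi_0(A,B)$ is true, one aligns the initial $1^{k/2}\,1^{4k}$ of $g$ with the central $1^{4k}\,h(B,\phi_1)\,1^{4k}\,1^{k/2}$ block of $h$ (absorbing $4k+k/2$ ones), aligns $g(A,\phi_0)$ optimally with $h(B,\phi_0)$ to gain $T$, aligns the $4k$ ones after $g(A,\phi_0)$ with a remaining $1$-block in $h$, and analogously for the trailing $0$-blocks; summing these contributions yields $9k+T$, and I would do the analogous explicit construction in each remaining case. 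For the upper bound, I would exploit the fact that the padding lengths $4k$ (resp.\ $5k$, $11k+T+F$ in the AND case) are much larger than $T,F\le 2k+\mathrm{const}$ (which I would verify inductively) and than $k$: any subsequence that tries to match a large padded run $1^{4k}$ from $g$ into a non-monotone place in $h$, or that tries to charge both subformulas when the structure forbids it, must forego so many paddings that its total length is strictly less than $t(\phi)$ in the false case (resp.\ $f(\phi)$). This is essentially a monotonicity/crossing argument on the allowed positions of the long paddings.

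The main obstacle is this upper bound argument on the AND gadget. The asymmetric prefix $0^{T+F}\,1^{11k+T+F}$ of $g$ and symmetric suffix $1^{11k+T+F}$ of $h$ are engineered precisely so that any optimal alignment is forced to invest its matches into \emph{both} sub-gadgets (yielding $T+T$) when the overall formula is true, but can only afford $T+F$ or $F+T$ (plus a uniform padding contribution) when one subformula is false, giving the promised separation $13k+3T+F$ vs.\ $13k+2T+2F$. Carrying this out rigorously requires a careful case split on which long padded block of $h$ the prefix $1^{11k+T+F}$ of $g$ is aligned to, using the large gap $11k+T+F \gg 5k,k,T,F$ to rule out suboptimal configurations; this bookkeeping is the technical heart of the lemma and is where I expect to spend most of the effort.
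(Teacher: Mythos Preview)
Your proposal is correct and follows essentially the same approach as the paper: induction on the depth of $\phi$, with the base case read off from the literal table, the length and $f<t$ claims handled by the arithmetic of the recursion, and the substantive LCS analysis of the OR and AND compositions done via explicit alignments for the lower bounds and a case analysis on where the long padding runs can be matched for the upper bounds. The paper simply factors the gadget analysis into two standalone lemmas (an OR-gadget lemma and an AND-gadget lemma, each stated for arbitrary balanced $X_0,Y_0,X_1,Y_1$ with $LCS\in\{T,F\}$ and $k/2\le F<T$), which keeps the induction in the main lemma to a couple of lines; you propose to do this inline, but the content is the same.
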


In order to prove the above lemma we also need two gadgets which we call the $AND$-gadget and the $OR$-gadget. We need the lemmas on these gadgets (statement and proofs included in Appendix \ref{appendix:LCS-formula}) which analyze the composition of $AND$ and $OR$.

The proofs of theorem \ref{thm:formulas} and lemma \ref{lem:fleeval} can also be found in Appendix \ref{appendix:LCS-formula}.

\newpage
\bibliographystyle{plainurl}
\bibliography{references}

\newpage
\appendix
\section*{Appendix}

\section{Alphabet Reduction}\label{appendix:code-alph-red}
We use the following simple fact to prove Lemma~\ref{lem:code}:

\begin{proposition}\label{prop:varepsilon}
For any integer $k\ge 1$ and any real $0<\varepsilon<1/2$, $\binom {k} {\lfloor \varepsilon k \rfloor} \le 2^{(\varepsilon \log(1/\varepsilon) + 2 \varepsilon)k}$.
\end{proposition}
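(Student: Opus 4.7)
The plan is to reduce the statement to the standard binomial coefficient bound $\binom{k}{m} \le (ek/m)^m$ and then observe that the relevant exponent is monotone in $m$ on the interval of interest, so we may freely replace $m=\lfloor \varepsilon k\rfloor$ by $\varepsilon k$.

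First I would recall the standard estimate $\binom{k}{m} \le k^m/m!$ combined with Stirling's lower bound $m! \ge (m/e)^m$, yielding $\binom{k}{m} \le (ek/m)^m$. Setting $m = \lfloor \varepsilon k\rfloor$, taking logarithms base $2$ reduces the claim to showing
\[
m\log(ek/m) \le \bigl(\varepsilon \log(1/\varepsilon) + 2\varepsilon\bigr)k.
\]

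Next I would analyze the function $\phi(x) = x\log(ek/x)$ on $(0,k]$. Its derivative is $\phi'(x) = \log(ek/x) - \log e = \log(k/x)$, which is nonnegative for $x\le k$; hence $\phi$ is monotonically nondecreasing on $(0,k]$. Since $m \le \varepsilon k \le k/2 \le k$, this gives
\[
\phi(m) \le \phi(\varepsilon k) = \varepsilon k \log(e/\varepsilon) = \varepsilon k \log(1/\varepsilon) + \varepsilon k \log e.
\]
Because $\log e = 1/\ln 2 < 2$, the final term is bounded by $2\varepsilon k$, which yields the desired inequality.

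Finally, the boundary case $m = 0$ (which occurs when $\varepsilon k < 1$) must be handled separately since $(ek/m)^m$ is undefined: here $\binom{k}{0}=1$, while the right-hand side of the proposition is at least $1$ because the exponent $(\varepsilon \log(1/\varepsilon)+2\varepsilon)k$ is strictly positive for $0<\varepsilon<1/2$ and $k\ge 1$. There is no real obstacle in this argument; the only subtlety worth flagging is the monotonicity step, which justifies replacing the integer $\lfloor \varepsilon k\rfloor$ by the cleaner value $\varepsilon k$ without losing any of the inequality.
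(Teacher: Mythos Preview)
Your proof is correct. Structurally it mirrors the paper's argument: bound the binomial coefficient by a classical estimate, use monotonicity of the resulting exponent to replace $m=\lfloor\varepsilon k\rfloor$ by $\varepsilon k$, and finish with $\log e<2$. The difference is in the starting estimate: the paper invokes the entropy bound $\binom{k}{\varepsilon' k}\le 2^{H(\varepsilon')k}$ with $\varepsilon'=\lfloor\varepsilon k\rfloor/k$, then uses that $H$ is increasing on $[0,1/2]$, and finally bounds $(1-\varepsilon)\log\frac{1}{1-\varepsilon}\le\varepsilon\log e$; you instead use the cruder bound $\binom{k}{m}\le(ek/m)^m$ and show $\phi(x)=x\log(ek/x)$ is nondecreasing on $(0,k]$. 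Your route is slightly more elementary---it sidesteps the entropy function altogether and the second term of $H$ never appears, since the bound $(ek/m)^m$ already discards it---while the paper's route gives a marginally tighter intermediate bound before the same final simplification. Your explicit treatment of the $m=0$ boundary case is also a small improvement in rigor over the paper's exposition, which implicitly relies on $H(0)=0$.
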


\begin{proofof}{Proposition~\ref{prop:varepsilon}}
    Let $\varepsilon' = \lfloor \varepsilon k \rfloor / k$ so $\varepsilon' \le \varepsilon$.
    It is well known that $\binom {k} {\varepsilon' k} \le 2^{H(\varepsilon') k}$, where $H(x)=x\log (1/x) + (1-x)\log 1/(1-x)$ is the binary entropy function.
    So $\binom {k} {\lfloor \varepsilon k \rfloor} = \binom {k} {\varepsilon' k} \le 2^{H(\varepsilon') k} \le 2^{H(\varepsilon) k}$, as $H(\cdot)$ is increasing on the interval $[0,1/2]$.
    Notice, $\log 1/(1-\varepsilon) = \log (1+\frac{\varepsilon}{1-\varepsilon}) \le \log e^{\frac{\varepsilon}{1-\varepsilon}} = \frac{\varepsilon}{1-\varepsilon} \cdot \log e$, where we use the inequality $1+x \le e^x$ for any real $x$.
    Hence, $H(\varepsilon) \le \varepsilon \log (1/\varepsilon) + \varepsilon \cdot \log e \le \varepsilon \log (1/\varepsilon) + 2 \varepsilon$.
\end{proofof}

\begin{proofof}{Lemma~\ref{lem:code}}

The proof employs the probabilistic method, demonstrating that there exists a random set $C \subseteq \Sigma^k$ meeting the specified criteria with a non-zero probability, thereby establishing its existence. The parameters of the constructions are as follows:
pick an integer $k$ from the interval $[\frac{2}{\varepsilon} \log n, \frac{1}{\varepsilon}+\frac{2}{\varepsilon} \log n]$ so that $\varepsilon k$ is an integer  and let $\abs{\Sigma}=\lceil 32/\varepsilon^2 \rceil$.

Our initial insight is that, given any $X,Y \in \Sigma^k$, if for all subsets $I_1, I_2 \subset [\ell]$ where $\abs{I_1}=\abs{I_2}=\varepsilon k$, it holds that: \( X_{I_1} \neq Y_{I_2}\) then \(LCS(X,Y)<\varepsilon k\) and \(\LCS(X,Y)\ge (2-2\varepsilon)k\).

Next, for $X,Y$ chosen uniformly at random, and for a fixed choice of $I_1, I_2 \subset [\ell]$ where $\abs{I_1}=\abs{I_2}=\varepsilon k$ we have that: \[ \Pr[X_{I_1}=Y_{I_2}]=\abs{\Sigma}^{-\varepsilon k}\]

Applying a union bound to all possible choices of $I_1,I_2$ we obtain: 
\[ \Pr[\forall I_1, I_2: X_{I_1}=Y_{I_2}]=\abs{\Sigma}^{-\varepsilon k}{\binom {k } {\varepsilon k}}^2 \le \abs{\Sigma}^{-\varepsilon k} 2^{2 (\varepsilon \log(1/\varepsilon) + 2 \varepsilon)k} \le 2^{-\varepsilon k},\]
where the final inequality is a consequence of our choice of $\Sigma$ where $\log \abs{\Sigma} \ge  2\log(1/\varepsilon) + 5$. 

In summary, with a probability of at most $2^{-\varepsilon k}$, it follows that for uniformly random $X$ and $Y$ in $\Sigma^k$, their LCS surpasses $\varepsilon k$.
Ultimately, consider a set $C\subseteq \Sigma^k$ with $\abs{C}=n$. Applying a union bound to all pairs within $C$, we deduce that the probability of the existence of a pair $c\neq c'$ such that $LCS(c,c')>\varepsilon k$ is at most $\binom{\abs C}{2} 2^{-\varepsilon k}<1$. The last inequality arises from our choice of $k \ge \frac{2}{\varepsilon} \log n$, and this establishes the existence of $C$, as claimed.
\end{proofof}

\begin{figure}[ht]
			\begin{tikzpicture}[scale=1.5,shorten >=1mm,>=latex]
			\tikzstyle gridlines=[color=black!20,very thin]
		
			\node at (-2,0.3) {$E(X)$};
			\node at (-2,-1.25) {$E(Y)$};
                \node at (-2,-0.5) {$\A$};
		
                \node at (2.0,0.65) {$C(X_1)$};
                \draw[] (1.5,0) rectangle (2.6,0.5);
			\node at (1.75,0.25) {$1$};
                \node at (2.0,0.25) {$1$};
                \node at (2.25,0.25) {$0$};

                \node at (3.1,0.65) {$C(X_2)$};
                \draw[] (2.6,0) rectangle (3.7,0.5);
			\node at (2.85,0.25) {$0$};
                \node at (3.1,0.25) {$1$};
                \node at (3.35,0.25) {$0$};

                \node at (4.0,0.65) {$C(X_3)$};
                \draw[] (3.7,0) rectangle (4.8,0.5);
			\node at (4.0,0.25) {$1$};
                \node at (4.25,0.25) {$1$};
                \node at (4.5,0.25) {$1$};
                
                \node at (1.4,-1.65) {$C(Y_1)$};
                \draw[] (0.95,-1.5) rectangle (2.05,-1);
                \node at (1.2,-1.25) {$0$};
                \node at (1.45,-1.25) {$0$};
                \node at (1.7,-1.25) {$1$};

                \node at (2.5,-1.65) {$C(Y_2)$};
                \draw[] (2.05,-1.5) rectangle (3.15,-1);
                \node at (2.3,-1.25) {$1$};
                \node at (2.55,-1.25) {$0$};
                \node at (2.8,-1.25) {$1$};

                \node at (3.6,-1.65) {$C(Y_3)$};
                \draw[] (3.15,-1.5) rectangle (4.25,-1);
                \node at (3.4,-1.25) {$0$};
                \node at (3.65,-1.25) {$0$};
                \node at (3.9,-1.25) {$1$};

                \node at (4.7,-1.65) {$C(Y_4)$};
                \draw[] (4.25,-1.5) rectangle (5.35,-1);
                \node at (4.5,-1.25) {$0$};
                \node at (4.75,-1.25) {$1$};
                \node at (5.0,-1.25) {$0$};

			\draw[->,thick] (1.75,0)--(1.7,-1);
                \draw[->,thick] (2.0,0)--(2.3,-1);
                \draw[->,thick] (2.25,0)--(2.55,-1);
                \draw[->,thick] (2.85,0)--(3.65,-1);
                \draw[->,thick] (3.1,0)--(3.9,-1);
                \draw[->,thick] (3.35,0)--(4.5,-1);
                \draw[->,thick] (3.35,0)--(4.5,-1);
                \draw[->,thick] (4.5,0)--(4.75,-1);
		      
                \draw[dashed] (1.5,0)--(0.95,-1);
                \draw[dashed] (2.6,0)--(3.15,-1);
                \draw[dashed] (3.7,0)--(4.6,-1);
                \draw[dashed] (4.8,0)--(5.35,-1);
                
			\end{tikzpicture}

			\caption{An illustration of the matching between the strings $E(X)$ and $E(Y)$. Arrows indicate matching coordinates, and dashed lines represent the beginning/end points of the segments. The first segment starts at $(1,1)$ and ends at $(2,3)$, as the first matched coordinate in the second block of $E(X)$ is mapped to the third block, and no character from the first block of $E(X)$ is mapped to that block. The second segment starts at $(3,1)$ and ends at $(4,1)$ (as the first matched coordinate in the second block of $E(X)$ is mapped to the third block, and there exists a character from the second block of $E(X)$ that is mapped to that block)
   }
   
\label{fig:align}

\end{figure}
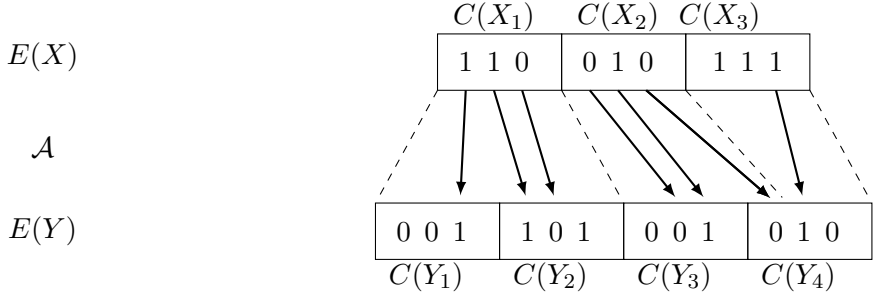
\begin{proofof}{Claim~\ref{claim:blockStructure}}
We first prove that $\A'$ converts $E(X)$ into $E(Y)$. Observe that $\A$ converts $E(X)$ into $E(Y)$, and $\A'$ is derived by alternately performing a perfect match for blocks with a significant match and removing other matched characters from $\A$, it is evident that the resulting alignment, $\A'$, indeed converts $E(X)$ into $E(Y)$.

Next we claim that the only matched characters are within blocks which are perfectly matched.  Consider each block $i$: If $i$ has a significant match under $\A$, then under $\A'$, it is matched perfectly to a single block $j$. Next, we need to demonstrate that in $\A'$, no partial matching exists.

For blocks $i$ that are partially matched under $\A$ into a \textit{single} block $j$, or vice versa, it is straightforward to verify the absence of partial matching in $\A'$.
However, consider a block $i$ that is matched into several blocks under $\A$, and one of these blocks has several partial matches as well. In this scenario, due to the monotonicity property (as per Claim~\ref{claim:alignEmbedding}), it is either the case that $i$ is matched to several blocks, and then only the last matched block $j$ is matched to several blocks, or vice versa.

Let us analyze the first case; the second one follows by the same reasoning. In this case, the algorithm first removes all the matches involving the coordinates of the $i$-th block during the $i$-th iteration. Subsequently, in the following iteration, it removes all the matches involving the coordinates of the $j$-th block. In total, this results in a block-structured alignment.    
\end{proofof}

\begin{proofof}{Claim~\ref{claim:stageI}}
    We show that for each block $i\in [n]$, the cost of the $i$-th block under $\A$ and $\A_{\mathrm{I}}$ does not change substantially. Specifically, as we scan the blocks from left to right, we establish that for each block, $cost_{\A_{\mathrm{I}}}(i)\le (1+4\varepsilon)cost_{\A}(i)$. The proof of Claim~\ref{claim:stageI} follows by summing the cost differences over the various blocks.
    
    Consider $i\in [n]$ and assume that the $i$-th block of $E(X)$ significantly matches the $j$-th block of $E(Y)$, with $j$ being the smallest one that satisfies this condition. In stage I, the algorithm opts to establish a perfect match between $i$ and $j$. This may impact (i) the cost of the $i$-th block, (ii) the cost of any block $i'>i$ that has partial match with the $j$-th block of $E(Y)$ and (iii) the cost of any block $i'<i$ that has partial match with the $j$-th block of $E(Y)$. 

    As for the $i$-th block, we affirm that its cost under $\A_{\mathrm{I}}$ can only decrease. Given that the $j$-th block of $E(Y)$ significantly matched with $i$, and as observed earlier, we have $X_i=Y_j$. During phase I, all the characters in $C(X)_i$ are matched to the characters in $C(Y)_j$, so there is no increase in the number of deleted characters in the $i$-th block. However, in $\A_{\mathrm{I}}$, the characters in blocks $j'>j$, which were previously matched under $\A$ to characters in the $i$-th block of $E(X)$, are deleted and no longer matched  under $\A_{\mathrm{I}}$. Nevertheless, each deletion can be accounted for by each new matching in the $j$-th block. 

    Consider blocks $i'>i$ that were partially matched to the $j$-th block of $E(Y)$. Under $\A_{\mathrm{I}}$ all these matches are deleted. Nevertheless, each such a deletion can be acounted against a charachter of the $i$-th block of $E(X)$ that is deleted by $\A$ and is matched under $\A_{\mathrm{I}}$.

     It is left to consider any block $i'<i$ that has a partial match with the $j$-th block of $E(Y)$.  We claim that the $i'$-th block lacks a significant match with any of the blocks in $E(Y)$. For the $j$-th block, this follows from the description of phase I. Regarding previous blocks, if a significant match existed, then in prior iterations, it would have been perfectly mapped to a preceding block, and consequently, it would not have any matching with the $j$-th block.
    
    We next claim that the cost of $i'$-th block is at least $(1-2\varepsilon)k$. The proof is conducted through case analysis based on the number of blocks in $E(Y)$ that were partially matched into this block:

   If the characters of the block were mapped solely to the $j$-th block, then in $\A$, at least $(1-\varepsilon)k$ of its characters are deleted, and the proof follows. If it was mapped to one additional block, then under $\A$, at least $(1-2\varepsilon)k$ of its characters are deleted, and the proof follows. The remaining case to analyze is when it was mapped to at least three blocks; denote the first three of them by $j_1< j_2< j_3$. 
   
   Consider the block $j_2$, we claim that at least $(1-\varepsilon)k$ of its characters are deleted by $\A$. Since it does not have a significant match with $i'$, at most $\varepsilon k$ of its characters are matched into the $i'$-th block of $E(X)$ by $\A$. We claim that all the others are deleted. Since some characters of the $i'$-th block are matched with $j_1$, there is no matching between characters from $j_2$ to any block preceding $i'$ otherwise the matching is not monotone, see Figure~\ref{fig:multipleMatching} for illustration. By similar reasoning these characters cannot be matched to block which following $i'$. Hence the characters which are not matched to $i'$ are deleted, as claimed. In total we conclude that: $cost_{\A}(i')\ge (1-\varepsilon)k$.

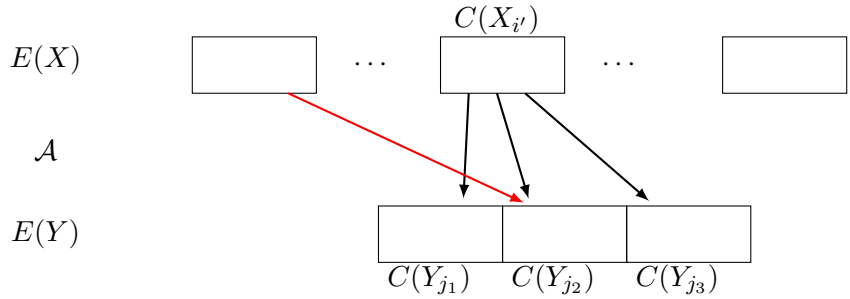
\begin{figure}[ht]
			\begin{tikzpicture}[scale=1.5,shorten >=1mm,>=latex]
			\tikzstyle gridlines=[color=black!20,very thin]
		
			\node at (-2,0.3) {$E(X)$};
			\node at (-2,-1.25) {$E(Y)$};
                \node at (-2,-0.5) {$\A$};
		
                \node at (2.0,0.65) {$C(X_{i'})$};
                \draw[] (-0.7,0) rectangle (0.4,0.5);
                \node at (0.9,0.25) {$\cdots$};
                \node at (3.1,0.25) {$\cdots$};
                \draw[] (1.5,0) rectangle (2.6,0.5);

                \draw[] (4.0,0) rectangle (5.1,0.5);

                \node at (1.4,-1.65) {$C(Y_{j_1})$};
                \draw[] (0.95,-1.5) rectangle (2.05,-1);

                \node at (2.5,-1.65) {$C(Y_{j_2})$};
                \draw[] (2.05,-1.5) rectangle (3.15,-1);

                \node at (3.6,-1.65) {$C(Y_{j_3})$};
                \draw[] (3.15,-1.5) rectangle (4.25,-1);
			
			\draw[->,thick] (1.75,0)--(1.7,-1);
                \draw[->,thick] (2.0,0)--(2.3,-1);
                \draw[->,thick] (2.25,0)--(3.4,-1);
                \draw[->,thick,red] (0.15,0)--(2.3,-1);
		
			\end{tikzpicture}

			\caption{Block $i'$ partially matches blocks $j_1, j_2, j_3$ in $E(Y)$. Consequently, no other block in $E(X)$ can be partially matched with $j_2$. The red line illustrates the prohibited matching. }
   
\label{fig:multipleMatching}
\end{figure}

    In either case, the cost of the $i'$-th block is at least $(1-2\varepsilon)k$, as claimed. Observe that during the $i$-th iteration at most $\varepsilon$ characters of the $i'$-th block  were deleted and this quantity were deleted in the $j$-th block of $E(Y)$ we get: 
    $$cost_{\A_{\mathrm{I}}}(i')\le cost_{\A}(i')+2\varepsilon k \le \left(1+\frac{2\varepsilon}{1-2\varepsilon}\right)cost_{\A}(i')\le \left(1+4\varepsilon \right)cost_{\A}(i') ,$$ 
    where the last inequality follows from the fact that $\varepsilon<1/4$.

\end{proofof}

\begin{proofof}{Claim~\ref{claim:stageII}}
    The proof of Claim~\ref{claim:stageII} involves some subtleties. Let us briefly discuss why before delving into the detailed proof. We aim to replicate the ideas presented in the proof of Claim~\ref{claim:stageI} and identify the point of failure. Consider the $i$-th iteration of the algorithm and assume that $i$ has a partial and not perfect matching. By applying the previous arguments, we deduce that the costs of the blocks prior to $i$ that have a partial match to $j$ do not change significantly. However, after removing these matches, most of the characters in the $j$-th block of $E(Y)$ will be deleted. In the modified alignment $\A_{\mathrm{II}}$, the assigned cost to the $i$-th block of $E(X)$ may account for all these deletions (in both $E(X)$ and $E(Y)$). Therefore, the cost of the $i$-th block may increase from approximately $k$ in $\A_{\mathrm{I}}$ to approximately $2k$ in $\A_{\mathrm{II}}$, resulting in a potential 2-factor increase in the cost. Thus, a more global argument is required, one that doesn't analyze the cost of each block independently.

    Indeed,  let us examine the $i$-th iteration of the algorithm and assume that $i$ has a partial and not perfect match. Let $j$ be the smallest block matched to $i$ under $\A_{\mathrm{I}}$. For simplicity we refer to the $i$-th block $E(X)$ as the $i$-th block and the $j$-th block of $E(Y)$ as the $j$-th block.
    
    Firstly, note that under $\A_{\mathrm{I}}$, if there are characters in the $i$-th block matched outside the $j$-th block, then they must be matched within blocks larger than $j$. The choice of $j$ ensures that the partial matching blocks of $i$ can only be to blocks greater or equal to $j$. Characters in the $j$-th block matched outside the $i$-th block must be matched within blocks larger than $i$, since otherwise, the algorithm should have deleted the matching characters between the $i$-th and $j$-th blocks in previous iterations. 
    
    During the $i$-th iteration, we claim that only one of two possibilities arises: either the $i$-th block contains more than a single partial match, or the $j$-th block contains more than a single partial match.  Simultaneous occurrence of these possibilities is precluded, as such a scenario would violate the monotonicity of the alignment.
    We will proceed to prove these cases separately.
    
    \begin{description}
        \item[Case I:  the $i$-th block contains more than a single partial match:]
        Initially, note that in this scenario, during the $i$-th iteration, the algorithm removes all the matched characters associated with the $i$-th block.
    Let $m$ represent the number of blocks that have a partial match with the $i$-th block of $E(X)$ under $\A_{\mathrm{I}}$. Observe that in $\A_{\mathrm{II}}$ we delete all the matched characters in these $m$ blocks which may contribute an extra factor of at most: $2\varepsilon mk$. 

    Now, if $m=1$, then, under $\A_{\mathrm{I}}$, at least $(1-\varepsilon)k$ characters were deleted in both the $i$-th block of $E(X)$ and the $j$-th block of $E(Y)$. Due to the definition of the $i$-th segment of $E(Y)$ and since the first matched character in the $j$-th block is matched to the $i$-th block, the $i$-th segment of $E(Y)$ contains the $j$-th block, and hence $cost_{\A_{\mathrm{I}}}(i)\le (2-2\varepsilon)k$. 
    Thus, and since $\varepsilon<1/4$: 
    $$cost_{\A_{\mathrm{II}}}(i)\ge cost_{\A_{\mathrm{I}}}(i)-2\varepsilon k \ge cost_{\A_{\mathrm{I}}}(i)\left(1 -\frac{\varepsilon }{1-\varepsilon}\right)\ge  cost_{\A_{\mathrm{I}}}(i)(1-2\varepsilon)$$

    For $m> 1$, under $\A_{\mathrm{I}}$, we observe that: (i) at least $(1-m\varepsilon)k$ characters of the $i$-th block are deleted; (ii) for a block in $E(Y)$ that has a partial match to $i$, except perhaps for the last block, at least $(1-\varepsilon)k$ of its characters are deleted (as per the monotonicity property, its characters are matched only to the $i$-th block). Hence, we have:
    $$cost_{\A_{\mathrm{I}}}(i)\ge (1-m\varepsilon)k+(m-1)(1-\varepsilon)k\ge (1 -2 \varepsilon)m k,  $$
    Here, the first term is due to the deletion in the $i$-th block, and the second term is due to the $E(Y)$ blocks. Combining these, we obtain:
    $$cost_{\A_{\mathrm{II}}}(i)\ge cost_{\A_{\mathrm{I}}}(i)-2\varepsilon mk \ge cost_{\A_{\mathrm{I}}}(i)\left(1 -\frac{2\varepsilon }{1-2\varepsilon}\right)\ge cost_{\A_{\mathrm{I}}}(i)(1-4\varepsilon)$$

    \item[Case II: The $j$-th block contains more than a single partial match:]
    Alternatively, if $i$ has a single partial match to $j$, and $j$ has multiple ones, then the algorithm deletes all the matched characters involving the $j$-th block.
    Here, the argument involves evaluating the cost of all $E(X)$ blocks that are partially matched into the $j$-th block of $E(Y)$. The claim is that their total cost does not change much, and the proof follows a similar approach to that in Case I. In particular if $j$ has a partial matching with the blocks $i_1<\ldots <i_m$, we claim that: $\sum _{k=1\dots m}cost_{\A_{\mathrm{II}}}(i_k)\ge (m-1)(1-\varepsilon)k$. The proof can be concluded using the same idea of the previous case, we omit the details. Hence:
$$\sum _{k=1 \dots}cost_{\A_{\mathrm{II}}}(i_k) \ge\sum _{k=1 \dots m}cost_{\A_{\mathrm{I}}}(i_k)-2\varepsilon mk \ge  (1-4\varepsilon)\cdot \left(\sum _{k=1 \dots m}cost_{\A_{\mathrm{I}}}(i_k)\right). $$

In summary, in both cases, the cost of affected blocks in each iteration does not decrease beyond a $(1-4\varepsilon)$ fraction of their original cost. Due to the disjoint nature of the set of affected blocks over different iterations, we conclude that the total cost of the new alignment is at least $(1-4\varepsilon)$ of the cost of the original alignment, as asserted.
        
    \end{description}
\end{proofof}

\section{Reduction of formula evaluation problem to LCS}\label{appendix:LCS-formula}
\begin{lemma}[OR-gadget]\label{lem:ORgadget}
    Let $k,T,F\ge 1$ be integers where $k$ is even and $k/2 \le F < T$.
    Let $X_0,Y_0,X_1,Y_1 \in \{0,1\}^k$ be balanced strings where $LCS(X_0,Y_0), LCS(X_1,Y_1)\in \{T,F\}$.
    Let
    \begin{alignat}{9}
    X' & {} = {} & {} 1^{k/2} {} & \,\,\, 1^{4k} {} & \,\,\, X_0 {} {} & \,\,\, 1^{4k} {} & \,\,\, 0^{4k} {} & \,\,\, X_1 {} & \,\,\, 0^{4k} {} & \,\,\, 0^{k/2} , \\
    Y' & {} = {} & {} 0^{k/2} {} & \,\,\, 0^{4k} {} & \,\,\, Y_1 {} {} & \,\,\, 0^{4k} {} & \,\,\, 1^{4k} {} & \,\,\, Y_0 {} & \,\,\, 1^{4k} {} & \,\,\, 1^{k/2} .
    \end{alignat}
    Let $T'=9k+T$ and $F'=9k + F$.
    If $LCS(X_0,Y_0)=T$ or $LCS(X_1,Y_1) = T$ then $LCS(X',Y')=T'$, and otherwise $LCS(X',Y')=F'$.
\end{lemma}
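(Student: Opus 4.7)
My plan is to prove the lemma by establishing matching lower and upper bounds on $LCS(X',Y')$.

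\textbf{Lower bound.} I will exhibit explicit common subsequences of length $9k+T$ (respectively $9k+F$). When $LCS(X_0,Y_0)=T$, the four-stage alignment is: (i) match the $4.5k$ ones of the prefix $1^{k/2}\cdot 1^{4k}$ of $X'$ with the $k/2$ ones of $Y_1$ followed by the $4k$ ones of the block $1^{4k}$ preceding $Y_0$ in $Y'$; (ii) take an optimal $X_0\leftrightarrow Y_0$ matching of $T$ characters; (iii) match the $4k$ ones of the $1^{4k}$ block after $X_0$ in $X'$ with the first $4k$ ones of the trailing $1^{4k}\cdot 1^{k/2}$; (iv) match the $k/2$ ones of $X_1$ with the remaining $k/2$ ones of the tail. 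Monotonicity is immediate, and the total is $4.5k+T+4k+k/2=9k+T$. A mirror-image construction (swapping the roles of $1$ and $0$) establishes the same bound when $LCS(X_1,Y_1)=T$: pair the $k/2$ zeros of $X_0$ and the $4k$ zeros of the $0^{4k}$ preceding $X_1$ with the leading $0^{k/2}\cdot 0^{4k}$ of $Y'$, take an optimal $X_1\leftrightarrow Y_1$ matching, pair the $0^{4k}$ after $X_1$ with the $0^{4k}$ after $Y_1$, and pair the $0^{k/2}$ tail of $X'$ with $k/2$ zeros of $Y_0$. Replacing $T$ by $F$ in either construction yields the lower bound $9k+F$ in the remaining case.

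\textbf{Upper bound: structural obstruction.} To establish $LCS(X',Y')\le 9k+\max(LCS(X_0,Y_0),LCS(X_1,Y_1))$, I rely on a monotonicity obstruction. In $X'$, $X_0$ occupies positions $[4.5k+1,5.5k]$ and $X_1$ occupies $[13.5k+1,14.5k]$; in $Y'$, $Y_1$ occupies $[4.5k+1,5.5k]$ and $Y_0$ occupies $[13.5k+1,14.5k]$. Any alignment that contains both an $X_0$-to-$Y_0$ match $(s,t)$ and an $X_1$-to-$Y_1$ match $(s',t')$ would then satisfy $s<s'$ but $t>t'$, violating monotonicity. Hence at most one of the two ``internal'' matchings can be active; by the evident symmetry of the construction under swapping $(X_0,Y_0)\leftrightarrow (X_1,Y_1)$ together with reversing both strings, I may assume no $X_1$-to-$Y_1$ match occurs.

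\textbf{Upper bound: bounding via a three-way cut.} Under this assumption I apply the standard cut decomposition
\[ LCS(X',Y') = \max_{p_1\le p_2} \bigl[ LCS(X'[1..p_1],\,0^{4.5k}Y_1 0^{4k}1^{4k}) + LCS(X'[p_1+1..p_2],\,Y_0) + LCS(X'[p_2+1..],\,1^{4.5k}) \bigr], \]
cutting $Y'$ at the two boundaries of $Y_0$. The first term is bounded by the $4.5k$ ones in $Y'[1..13.5k]$ (the $k/2$ ones of $Y_1$ plus the $4k$ ones of $1^{4k}$), and the third by the $4.5k$ ones of $1^{4.5k}$. For the middle term, when $X'[p_1+1..p_2]$ contains precisely $X_0$ it contributes exactly $LCS(X_0,Y_0)$; for other $(p_1,p_2)$ the middle term becomes smaller and the loss is absorbed by a corresponding decrease in the first or third term. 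A case analysis on the position of $(p_1,p_2)$ relative to the $X_0$ and $X_1$ blocks verifies the sum is at most $9k+LCS(X_0,Y_0)$, attaining equality at $(p_1,p_2)=(4.5k,5.5k)$.

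\textbf{Main obstacle.} The technically delicate step is the case analysis above: verifying the $9k+LCS(X_0,Y_0)$ bound uniformly over all cuts. The block lengths $4.5k$, $4k$, and $k/2$, together with the hypothesis $k/2\le F<T$, are engineered precisely so that the slack between the available ones (respectively zeros) in the various blocks of $X'$ and $Y'$ is zero, preventing any cut from exceeding the target. In particular, handling the interleaving of $1$s and $0$s within $X_0$ (whose zeros and ones cannot simultaneously be matched to both the $0^{4k}$ block after $Y_1$ and the $1^{4k}$ block before $Y_0$, by monotonicity) requires analyzing a ``zero-prefix-then-one-suffix'' decomposition of $X_0$ that shows the ``non-$Y_0$'' contribution of $X_0$ cannot exceed $k/2$.
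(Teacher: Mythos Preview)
Your lower bound constructions are correct and match the paper's. The structural obstruction (no monotone matching can simultaneously contain an $X_0$--$Y_0$ edge and an $X_1$--$Y_1$ edge) is also correct. The upper-bound argument, however, has a genuine gap.

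The specific bounds you assert for the three-way cut are false as stated. You claim the first term $LCS(X'[1..p_1],\,Y'[1..13.5k])$ is bounded by the $4.5k$ ones in $Y'[1..13.5k]$. But $Y'[1..13.5k]=0^{4.5k}Y_10^{4k}1^{4k}$ also contains $9k$ zeros, and for large $p_1$ the prefix $X'[1..p_1]$ contains many zeros as well (from $X_0$, the $0^{4k}$ block, $X_1$, and $0^{4.5k}$); one easily builds common subsequences of length $8k$ or more consisting mostly of zeros. So the term-by-term bounds do not hold, and the whole argument collapses into the ``case analysis on $(p_1,p_2)$'' that you defer. Moreover, your WLOG step (``assume no $X_1$--$Y_1$ match occurs'') does not interact with the cut decomposition: once you write $LCS(X',Y')=\max_{p_1\le p_2}[\cdots]$, each summand is an \emph{unrestricted} LCS of substrings and is oblivious to any assumption about a particular optimal matching. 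The right symmetry reduction is ``WLOG $LCS(X_0,Y_0)\ge LCS(X_1,Y_1)$'', not ``WLOG no $X_1$--$Y_1$ edge'', and even that does not constrain the cut terms.

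The paper's upper bound is organised differently and avoids bounding cuts altogether. It fixes a maximum matching and performs a normalization: it branches on whether the first matched symbol is a $1$ or a $0$ (the two cases are symmetric). In the $1$-case it argues, by a sequence of greedy replacements that do not decrease the matching, that WLOG the prefix $1^{4.5k}$ of $X'$ is fully matched into the $1$'s of $Y_1$ and then the middle $1^{4k}$ of $Y'$; that no $0$'s from $Y_1$ or the following $0^{4k}$ are matched (because doing so would block $\ge 4k$ ones and yield a total $\le 7k$); and that the trailing $1^{4.5k}$ of $Y'$ is matched into the $1^{4k}$ after $X_0$ and the ones of $X_1$. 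What remains is exactly an $X_0$-versus-$Y_0$ alignment, giving $9k+LCS(X_0,Y_0)$. This ``push the matching into canonical form, then read off the residual'' argument is what replaces your missing case analysis, and it is where the block sizes $4k$ and $k/2$ actually do their work.
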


\begin{proofof}{Lemma \ref{lem:ORgadget}}
The analysis considers how the maximum matching between $X'$ and $Y'$ can look like.
For ease of notation, we divide $X'$ and $Y'$ into 8 blocks each so $X'=b^{X'}_1b^{X'}_2\dots b^{X'}_8$ and $Y'=b^{Y'}_1b^{Y'}_2\dots b^{Y'}_8$ 
where:
\begin{alignat}{22}
       &         & {} b^{X'}_1     \, & \,\,\, b^{X'}_2    {} & \,\,\, b^{X'}_3 {} & \,\,\, b^{X'}_4    {} & \,\,\, b^{X'}_5    \, & \,\,\, b^{X'}_6 {} & \,\,\, b^{X'}_7    \, & \,\,\, b^{X'}_8  \\
    X' & {} = {} & {} 1^{k/2} {} & \,\,\, 1^{4k} {} & \,\,\, X_0 {} & \,\,\, 1^{4k} {} & \,\,\, 0^{4k} {} & \,\,\, X_1 {} & \,\,\, 0^{4k} {} & \,\,\, 0^{k/2} , \\
    Y' & {} = {} & {} 0^{k/2} {} & \,\,\, 0^{4k} {} & \,\,\, Y_1 {} & \,\,\, 0^{4k} {} & \,\,\, 1^{4k} {} & \,\,\, Y_0 {} & \,\,\, 1^{4k} {} & \,\,\, 1^{k/2} \\
       &         & {} b^{Y'}_1     \, & \,\,\, b^{Y'}_2    {} & \,\,\, b^{Y'}_3 {} & \,\,\, b^{Y'}_4    {} & \,\,\, b^{Y'}_5    \, & \,\,\, b^{Y'}_6 {} & \,\,\, b^{Y'}_7    \, & \,\,\, b^{Y'}_8
\end{alignat}
that is $b^{X'}_3=X_0$, $b^{X'}_6=X_1$, $b^{Y'}_3=Y_1$, $b^{Y'}_6=Y_0$ and all the other $b^{X'}_i$'s and $b^{Y'}_i$'s are either blocks of 0's or 1's.

First we consider two significant matchings of $X'$ and $Y'$.
Consider a matching that matches symbols from $b^{X'}_1$ to symbols in $b^{Y'}_3$, $b^{X'}_2$ to $b^{Y'}_5$, $b^{X'}_3$ to $b^{Y'}_6$, $b^{X'}_4$ to $b^{Y'}_7$ and $b^{X'}_6$ to $b^{Y'}_8$.
Since $LCS(b^{X'}_3,b^{Y'}_6)=LCS(X_0,Y_0)$, $LCS(b^{X'}_1,b^{Y'}_3)=LCS(Y_1,1^{k/2})=k/2$ and $LCS(b^{X'}_6,b^{Y'}_8)=LCS(X_1,1^{k/2})=k/2$ 
largest such matching has size $8k + LCS(X_0,Y_0)+k/2+k/2 = 9k + LCS(X_0,Y_0)$.
Notice, this is either $T'$ or $F'$.
Similarly, a matching that matches symbols from $b^{X'}_3$ to $b^{Y'}_1$, $b^{X'}_5b^{X'}_6b^{X'}_7$ to $b^{Y'}_2b^{Y'}_3b^{Y'}_4$ and $b^{X'}_8$ to $b^{Y'}_6$ will have size $9k + LCS(X_1,Y_1)$.
Thus, we only need to argue that there is no matching larger that $9k+\max(LCS(X_0,Y_0),LCS(X_1,Y_1))$.

Consider a maximum matching of $X'$ to $Y'$.
Assume that the matching starts by matching a symbol 1 in $X'$ and $Y'$.
Without changing the cost of the matching we can replace that edge by matching the left-most 1's in $X'$ and $Y'$.
In $X'$, the first 1 is in $b^{X'}_1$ so we can assume that there is a symbol from $b^{X'}_1$ which is matched somewhere.
Thus the blocks $b^{Y'}_1$ and $b^{Y'}_2$ are unmatched.
Furthermore, one can assume without loss of generality that all symbols matched from $b^{X'}_1$ are matched to 1's in $b^{Y'}_3=Y_0$
as $Y_0$ contains enough 1's and replacing each matched pair between $b^{X'}_0$ and $b^{Y'}_3\cdots b^{Y'}_8$ by a matching pair 
which uses the leftmost unmatched 1 in $b^{Y'}_3$ (going over edges from left to right) will not affect the cost of the matching.
So without loss of generality all matched symbols from $b^{X'}_1$ are matched into $b^{Y'}_3=Y_1$.


We are now concerned with matching symbols after $b^{X'}_1$. Matching any symbol from $b^{Y'}_4$ will block at least $4k$ 1's in $b^{X'}_1b^{X'}_2$ from being matched anywhere. 
If $b^{Y'}_4$ is matched only to $b^{X'}_3$ then we will match at most $2k$ 0's  from $b^{Y'}_3,b^{Y'}_4, b^{Y'}_6$, 
and at most $5k$ 1's from $b^{X'}_3\cdots b^{X'}_8$ so
altogether the matching will be of size at most $7k$. 
If $b^{Y'}_4$ is matched to something in $b^{X'}_5\cdots b^{X'}_8$ then at most $k/2$ 1's from $b^{Y'}_5\cdots b^{Y'}_8$ will be matched
so even if all $5k$ 0's from $b^{Y'}_3\cdots b^{Y'}_8$ were matched 
the overall matching will be of size at most $6k$.
Similar argument applies if we were to match 0's from $b^{Y'}_3$, the matching would be smaller than $7k$.
Thus we can assume that no 0 is matched from $b^{Y'}_3b^{Y'}_4$ if $b^{X'}_1$ matches something.

Hence, we can assume that if the matching starts by 1 then 
our maximum matching matches 1's in $b^{X'}_1b^{X'}_2$ greedily from left to right
that is $b^{X'}_1b^{X'}_2$ is completely matched to 1's in $b^{Y'}_3$ and $b^{Y'}_5$ which gives $4.5k$ 1's.
It remains to match $b^{X'}_3\cdots b^{X'}_8$ to $b^{Y'}_6b^{Y'}_7b^{Y'}_8$. Matching 0's from $b^{Y'}_6=Y_0$ to anywhere in $b^{X'}_5\cdots b^{X'}_8$
will cut-off all but $k/2$ 1's in $b^{Y'}_7b^{Y'}_8$ from being matched.
This would result in a small matching.
Hence, $b^{Y'}_6$ must match $b^{X'}_3b^{X'}_4$ so no 0 in $b^{X'}_5\dots b^{X'}_8$ will be matched.
Thus we can assume that all 1's in $b^{Y'}_7b^{Y'}_8$ are greedily matched from right to left into $b^{X'}_6$ and then $b^{X'}_4$.


This leaves $b^{X'}_3=X_0$ to be matched to $b^{Y'}_6=Y_0$.
This can be done by a matching of size $LCS(X_0,Y_0)$.
This is one of the two good matching we have considered initially so it has size $9k + LCS(X_1,Y_1)$.

If our matching starts by 0, a completely symmetric argument gives that its size is going to be $9k + LCS(X_1,Y_1)$.
So there is no matching larger than $9k + \max\{LCS(X_0,Y_0),LCS(X_1,Y_1)\}$.
It follows that the cost of the largest matching is $F'$ if both  $LCS(X_0,Y_0)=LCS(X_1,Y_1)=F$ and $T'$ otherwise.
\end{proofof}

Notice, both $X'$ and $Y'$ are balanced in the above lemma.
Also, $|X'|=|Y'|=19k$.

\begin{lemma}[AND-gadget]\label{ANDgadget}
    Let $k,T,F\ge 1$ be integers where $k$ is even and $k/2 \le F < T$.
    Let $X_0,Y_0,X_1,Y_1 \in \{0,1\}^k$ be balanced strings where $LCS(X_0,Y_0), LCS(X_1,Y_1)\in \{T,F\}$.
    Let
    \begin{alignat}{11}
    X' & {} = {} & 0^{T+F} & \,1^{11k+T+F} & \, 0^{5k} & \, X_0 & \, 0^k & \, 1^k & \, 0^k & \, X_1 & \, 0^{5k} &,\\
    Y' & {} = {} & 0^{T+F} &   & \, 0^{5k} & \, Y_0 & \, 0^k & \, 1^k & \, 0^k & \, Y_1 & \, 0^{5k} & \, 1^{11k+T+F}.
    \end{alignat}
    Let $T'=13k+3T+F$ and $F'=13k + 2T + 2F$.
    If $LCS(X_0,Y_0) = LCS(X_1,Y_1) = T$ then $LCS(X',Y')=T'$, and otherwise $LCS(X',Y')=F'$.
\end{lemma}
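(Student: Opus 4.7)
The plan is to prove matching upper and lower bounds on $LCS(X',Y')$, using a case analysis on how the large block $B_X := 1^{11k+T+F}$ of $X'$ is routed into $Y'$. The key structural observation is that $B_X$ lies near the front of $X'$ while $D_Y := 1^{11k+T+F}$ lies at the end of $Y'$, so by monotonicity any match between them pins the entire tail $C_X D_X E_X = 0^{5k} X_0 0^k 1^k 0^k X_1 0^{5k}$ of $X'$ to compete only for the leftover 1's inside $D_Y$.

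For the lower bound I will exhibit two alignments and take the larger. The ``inner-pair'' alignment matches the $0^{T+F}$ prefixes, leaves $B_X$ unmatched, then aligns $X'$'s $0^{5k} X_0 0^k$ with the tail $0^{5k} Y_0 0^k$ of $A_Y$, aligns the two middle $1^k$ blocks, and aligns $X'$'s $0^k X_1 0^{5k}$ with $Y'$'s $0^k Y_1 0^{5k}$; this yields $13k + T + F + LCS(X_0,Y_0) + LCS(X_1,Y_1)$, which equals $T'$ in the both-$T$ case and $F'$ in the mixed case. The ``big-1 routing'' alignment instead matches the $0^{T+F}$ prefix and then threads all $11k+T+F$ ones of $B_X$ monotonically through the $k/2$ ones of $Y_0$, the $k$ ones of $B_Y$, the $k/2$ ones of $Y_1$, and the first $9k+T+F$ ones of $D_Y$; the remaining $2k$ ones of $X'$ (contributed by $X_0$'s $k/2$, $D_X$'s $k$, and $X_1$'s $k/2$) are then matched to the final $2k$ ones of $D_Y$. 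This yields $T + F + (11k+T+F) + 2k = 13k + 2T + 2F = F'$ independent of the inner LCS values, and is the alignment that saves the both-$F$ case, where the inner-pair alignment only reaches $13k + T + 3F < F'$.

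For the upper bound I split on whether any 1 of $B_X$ is matched inside $D_Y$. If so, monotonicity forces $C_X D_X E_X$ to match only the remaining 1's of $D_Y$, contributing at most its $2k$ ones; together with at most $T+F$ from $A_X$ and at most $|B_X| = 11k+T+F$ from $B_X$, the total is at most $13k + 2T + 2F = F' \le T'$. If instead $B_X$'s matches all lie strictly before $D_Y$, a block-level analysis using monotonicity bounds the contribution of $C_X$ matching $A_Y$'s tail by $6k + LCS(X_0,Y_0)$, of the middle $1^k$ pair by $k$, and of $E_X$ matching $C_Y$ by $6k + LCS(X_1,Y_1)$, yielding at most $13k + T + F + LCS(X_0,Y_0) + LCS(X_1,Y_1)$, which is $T'$ in the both-$T$ case and at most $F'$ otherwise. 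The main obstacle is this second case: I must show that any hybrid alignment which uses $B_X$ to steal 1's from inside $Y_0$, $B_Y$, or $Y_1$ while still performing the inner-pair matches cannot beat the pure inner-pair total -- this will follow from a charging argument in which every 1 captured by $B_X$ forces a 0-block of $Y'$ surrounding $Y_0$, $B_Y$, or $Y_1$ to be skipped on the inner-pair side, with the padding widths $5k$ and $11k+T+F$ tuned precisely so that the two regimes meet at $F'$ and no hybrid exceeds either of them.
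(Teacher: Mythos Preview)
Your overall architecture coincides with the paper's: the same two explicit alignments for the lower bound, and the same split for the upper bound according to whether the big block $B_X=1^{11k+T+F}$ touches $D_Y=1^{11k+T+F}$. Case~1 is argued correctly and exactly as in the paper.

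The gap is in Case~2. Two issues. First, your block-level bound ``$C_X$ into the tail of $A_Y$ gives $6k+LCS(X_0,Y_0)$, $D_X$ into $B_Y$ gives $k$, $E_X$ into $C_Y$ gives $6k+LCS(X_1,Y_1)$'' presupposes that an optimal alignment of the inner parts respects this block pairing. That is not automatic: $D_X=1^k$ could be routed into $D_Y$, or $C_X$ could spill across $B_Y$, etc. The paper proves the block pairing is forced by a length argument: $|b^{X'}_1\cdots b^{X'}_7|=|b^{Y'}_1\cdots b^{Y'}_7|=15k$ while an optimal inner matching has size $\ge 14k$, so any alignment leaving $k+1$ symbols unmatched on either side is ruled out; this forces the central $1^k$ of $X'$ to align with the central $1^k$ of $Y'$, and then the flanking $0$-blocks and the $X_i/Y_i$ pairs follow. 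You have not supplied this step. Second, your proposed charging ``every 1 captured by $B_X$ forces a 0-block of $Y'$ to be skipped'' is the wrong granularity: a single 1 does not kill an entire $0$-block. The paper's argument is a one-shot loss, not a per-1 trade: if $B_X$ matches \emph{anything} inside $b^{Y'}_1\cdots b^{Y'}_7$ (and nothing in $D_Y$), then the entire block $b^{Y'}_1=0^{5k}$ becomes unreachable, since only $A_X=0^{T+F}$ precedes $B_X$ and it is already spent on $b^{Y'}_0$. That costs $5k$, while $B_X$ gains at most $2k$ ones and $D_Y$ can pick up at most $2k$ ones from $b^{X'}_1\cdots b^{X'}_7$, so the whole regime is bounded by $(T+F)+12k$, well below $F'$. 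Replacing your per-1 charging by this $5k$-block argument, together with the central-$1^k$ length argument, closes Case~2.
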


\begin{proofof}{Lemma \ref{ANDgadget}}
Again, the proof proceeds by a case by case analysis of a maximum matching between $X'$ and $Y'$.
WLOG we can focus only on maximum matchings that match the initial block $0^{T+F}$ in $X'$ to the same initial block in $Y'$.
We divide $X'$ and $Y'$ into blocks:
\begin{alignat}{22}
       &         & {} b^{X'}_0     \,\,\,\, & \,\,\, b^{X'}_8   \,\,\, & \,\,\, b^{X'}_1 {} & \,\,\, b^{X'}_2    {} & \,\,\, b^{X'}_3    \, & \,\,\, b^{X'}_4 {} & \,\,\, b^{X'}_5    \, & \,\,\, b^{X'}_6 & \,\,\, b^{X'}_7 \\
    X' & {} = {} & 0^{T+F}       & \,\,\,1^{11k+T+F} & \,\,\, 0^{5k} & \,\,\, X_0 & \,\,\, 0^k & \,\,\, 1^k & \,\,\, 0^k & \,\,\, X_1 & \,\,\, 0^{5k} &,\\
    Y' & {} = {} & 0^{T+F} &   & \,\,\, 0^{5k} & \,\,\, Y_0 & \,\,\, 0^k & \,\,\, 1^k & \,\,\, 0^k & \,\,\, Y_1 & \,\,\, 0^{5k} & \,\,\, 1^{11k+T+F} \\
       &         & {} b^{Y'}_0     \,\,\,\, &                   & \,\,\, b^{Y'}_1 {} & \,\,\, b^{Y'}_2    {} & \,\,\, b^{Y'}_3    \, & \,\,\, b^{Y'}_4 {} & \,\,\, b^{Y'}_5    \, & \,\,\, b^{Y'}_6 \, & \,\,\, b^{Y'}_7 \, & \,\,\, b^{Y'}_8
\end{alignat}

There are two significant matchings between $X'$ and $Y'$ we will analyze first.
Consider a matching that matches all 1's in $X'$ and $Y'$.
Such a matching necessarily matches some 1 from block $b^{X'}_8$ to a 1 in $b^{Y'}_8$.
That prevents all the 0's from $b^{X'}_1\dots b^{X'}_7$ to be matched and similarly for 0's from 
$b^{Y'}_1\dots b^{Y'}_7$.
Hence, the size of such a matching is exactly $T+F+(11k+T+F)+k/2+k+k/2 = 13k+2T+2F=F'$, the number of ones plus the size of $b^{X'}_0$. 
It is thus clear, that any matching that matches some 1 from $b^{X'}_8$ to a 1 in $b^{Y'}_8$ has size at most $F'$.

The other significant matching is a matching that matches $b^{X'}_i$ to $b^{Y'}_i$, for $i=1,\dots,7$, so that each  pair of blocks is matched in the best possible way.
Such a matching has size $T+F+13k+LCS(X_0,Y_0)+LCS(X_1,Y_1)$.
Thus if $LCS(X_0,Y_0) = LCS(X_1,Y_1) = T$ we get an overall matching of size $13k+3T+F=T'$.

Now, our goal is to argue that any maximum matching of $b^{X'}_1\dots b^{X'}_7$ 
to $b^{Y'}_1\dots b^{Y'}_7$ has size at most $13k+LCS(X_0,Y_0)+LCS(X_1,Y_1)$.
Since the length of $b^{X'}_1\dots b^{X'}_7$ and $b^{Y'}_1\dots b^{Y'}_7$ is $15k$
and $13k+LCS(X_0,Y_0)+LCS(X_1,Y_1) \ge 14k$ if a matching leaves at least $k+1$ symbols in either one of the strings unmatched it cannot be maximum.
If a matching would match a 1 from the central block $b^{X'}_4$ to either $b^{Y'}_2=Y_0$ or $b^{Y'}_6=Y_1$, at least $k+1$ symbols would be left unmatched in $b^{Y'}_1\dots b^{Y'}_7$
so the matching would not be maximum.
So a maximum matching must match 1's in the central block $b^{X'}_4$ to 1's 
in $b^{Y'}_4$ if it matches them at all.
Clearly, the best is to match all of them.
(If a matching would not match the central 1's then it can either be increased by matching the 1's or not.
In the latter case it must be matching some symbol from $b^{X'}_4$ to a symbol in $b^{Y'}_1b^{Y'}_2b^{Y'}_6b^{Y'}_7$ (or vice versa for $b^{Y'}_4$) hence leaving unmatched at least $k+1$ symbols in one of the strings.)

So a maximum matching of $b^{X'}_1\dots b^{X'}_7$ to $b^{Y'}_1\dots b^{Y'}_7$ matches all 1's in $b^{X'}_4$ and $b^{Y'}_4$ to each other.
Hence, without loss of generality it matches also the neighboring 0's, so $b^{X'}_3b^{X'}_4b^{X'}_5$ is matched to $b^{Y'}_3b^{Y'}_4b^{Y'}_5$.
WLOG we can also assume that $b^{X'}_1$ perfectly matches $b^{Y'}_1$, and $b^{X'}_7$ perfectly matches $b^{Y'}_7$. 
Hence a maximum matching of $b^{X'}_1\dots b^{X'}_7$ to $b^{Y'}_1\dots b^{Y'}_7$
matches the corresponding $b^{X'}_i$ and $b^{Y'}_i$ for $i=1,3,4,5,7$.
The best we can do on the remaining parts consisting of $A_i$'s and $Y_i$'s is $LCS(X_0,Y_0)+LCS(X_1,Y_1)$.
Hence, a maximum matching of $b^{X'}_1\dots b^{X'}_7$ to $b^{Y'}_1 \dots b^{Y'}_7$ has cost 
$13k+LCS(X_0,Y_0)+LCS(X_1,Y_1)$.

It remains to consider a matching that matches some 1 from $b^{X'}_8$ to $b^{Y'}_1\dots b^{Y'}_7$ but not to $b^{Y'}_8$.
Such a matching necessarily has to leave $b^{Y'}_1$ unmatched which is $5k$ symbols.
If $b^{Y'}_8$ does not match anything in $b^{X'}_1\dots b^{X'}_7$, the best matching we can get has size at most $10k=|b^{Y'}_2\dots b^{Y'}_7|$.
If $b^{Y'}_8$ matches something in $b^{X'}_1\dots b^{X'}_7$, it can contribute at most $2k$ additional edges matching 1's. 
Either way, the matching will fall short of the required $13k$ edges.    
\end{proofof}

Again, $X'$ and $Y'$ in the above lemma are balanced. 
Indeed, the initial block of $T+F$ zeros has the sole purpose of making them balanced.
Also $|X'|=|Y'|=26k+2T+2F \le 30k$.

\begin{proofof}{Lemma \ref{lem:fleeval}}
The claims regarding the length of $g(A,\phi)$ and $h(B,\phi)$ follow easily by induction on the depth of the formula
using the fact that the formula is normalized.
All normalized formulas of the same depth that have top gate $AND$ give strings of the same size,
and similarly all normalized formulas of the same depth  that have top gate $OR$ give strings of the same size.
Indeed, in the base case $d=1$, all the output strings are of length 2.
Then either we apply $AND$ composition on strings of the same length generated for the left and right sub-formulas
or we apply $OR$ composition.
In each case the length of the strings for the sub-formulas are the same so the resulting strings are also of the same size.
In each step the length of the strings multiplies by a factor of at most $30$, so the output strings are of length at most $30^d$.

The claim about $LCS(g(A,\phi),h(B,\phi))$ being either $f(\phi)$ or $t(\phi)$ also follows by induction on the depth of $\phi$.
In the base case, $d=1$ and $\phi$ is a literal, so the claim is obvious from the definition of $g(A,\phi)$ and $h(B,\phi)$ for literals.
For higher depths $d>1$, the claim follows inductively by Lemma \ref{ANDgadget} if $\phi$ has top gate $AND$,
or by Lemma \ref{lem:ORgadget} if $\phi$ has top gate $OR$.

The final claim $f(\phi) < t(\phi)$ follows inductively as well. 
\end{proofof}

\begin{proofof}{Theorem \ref{thm:formulas}}
First, we prove the claim regarding $LCS(X,Y)$.
For a string $X\in \Sigma^n$, let $\overline{X}$ be its (natural) binary encoding using $3n \log n$ bits, and for an integer $k\le n$, let $\overline{k}$ be its encoding in unary using $n$ bits.
There is a non-deterministic Turing machine running in logarithmic space that given inputs $X,Y \in \Sigma^n$ and $k\in \mathbb{N}$ checks whether $LCS(X,Y) \ge k$.
This is because the question whether $LCS(X,Y) \ge k$ or not can be efficiently reduced to a reachability question on a directed graph.
For fixed $n$, this non-deterministic computation can be turned into a normalized boolean formula $\phi_n(U,V,W)$ of depth $d=O(\log^2 n)$ 
which takes as its input the binary encoding of $X,Y$ and $k$ 
such that $\phi_n(\overline{X},\overline{Y},\overline{k})$ is true if and only if $LCS(X,Y) \ge k$, where  $\phi_n(\overline{X},\overline{Y},\overline{k})$ is the evaluation of $\phi_n$ with the assignment $U=\overline{X}, V=\overline{Y}$ and $W=\overline{k}$.
Let $M = n \cdot |g(\overline{X},\phi_n(U,V,\overline{1}))|$, where $X$ is an arbitrary string of length $n$.
Notice, $M$ depends only on the depth of $\phi_n(U,V,W)$ not on the actual assignment of variables.
Set $N=(3n-2)M$.

Define 
\begin{eqnarray*}
G(X)= g(\overline X,\phi_n(U,V,\overline{1})) \cdot 0^M1^M \cdot g(\overline X,\phi_n(U,V,\overline{2})) \cdot 0^M1^M \cdots g(\overline X,\phi_n(U,V,\overline{n})), \\
H(Y)= h(\overline Y,\phi_n(U,V,\overline{1})) \cdot 0^M1^M \cdot h(\overline Y,\phi_n(U,V,\overline{2})) \cdot 0^M1^M \cdots h(\overline Y,\phi_n(U,V,\overline{n})).
\end{eqnarray*}

Note that for a fixed value of $k$, $\phi_n(U,V,\overline{k})$ represents a formula that has two sets of variables $U$ and $V$, where $U$ depends only on $X$ and $V$ depends only on $Y$.  Clearly, $G(X)$ depends solely on the string $X$, while $H(Y)$ depends on $Y$.

Observe that for any $k$ smaller or equal than $ LCS(X,Y)$, we have $\phi_n(\overline{X}, \overline{Y},\overline{k})$ is true and hence by Lemma~\ref{lem:fleeval} we get: \[LCS(g(\bar X, \phi_n(U,V,\overline{k} )), h(\bar Y, \phi_n(U,V,\overline {k})))=T\] For $k$ that exceeds $ LCS(X,Y)$ this evaluates to $F$.

Furthermore, we can get a matching between $G(X)$ and $H(Y)$ of size \[2M(n-1)+LCS(X,Y) \cdot T + (n-LCS(X,Y)) \cdot F = 2M(n-1) + nF + LCS(X,Y) \cdot (T-F),\] by matching optimally the consecutive blocks of $G(X)$ and $H(Y)$. 
Any other matching that would try to match $g(\overline X,\phi_n(U,V,\overline{i}))$ to $h(\overline Y,\phi_n(U,V,\overline{j}))$
for $i\neq j$ will leave at least $2M$ symbols unmatched so it will be worse.
Hence, $LCS(G(X),H(Y))=2M(n-1) + nF + LCS(X,Y) \cdot (T-F)$.
Setting $R=2M(n-1) + nF$ and $S=(T-F)$ gives the required relationship.

The proof for $\ED$ is the same.
If we make sure that the normalized formula for $LCS$ and $\ED$ are both of the same depth, then the parameters $S$ and $R$ will be identical.\end{proofof}

\section{Embedding Indel distance metric into Edit distance metric}\label{sec:IndelToEdit}
Tiskin~\cite{Tiskin08} (in section 6.1) first observed the existence of an embedding that maps strings from $\Sigma^n$ to strings in $\left(\Sigma \cup \{\$\}\right)^{2n}$. This embedding satisfies the property that for any $X,Y \in \Sigma^n$, we have $2 \ED(X,Y) = \LCS(E(X),E(Y))$. The embedding is straightforward: it involves appending a special character ``\$'' after every character in its input. 

Let us delve into the intuition behind analyzing distance preservation.
Given any optimal $\ED$-alignment $\A$ of $X$ and $Y$, we can transform it into an $\LCS$-alignment $\A'$ of $E(X)$ and $E(Y)$ effectively doubling its cost, as follows: 
if a character is deleted from either $X$ or $Y$ in $\A$, then the corresponding character along with the following $\$$ in $E(X)$ or $E(Y)$ are deleted in $\A'$.
If $X[i]$ is substituted with $Y[j]$ in $\A$, then the corresponding characters: $E(X)[2i-1]$ and $E(Y)[2j-1]$ are deleted in $\A'$. 
Since each deletion or substitution in $\A$ corresponds to deleting two characters in $\A'$, the cost of $\A'$ is twice the cost of $\A$, i.e., $2\ED(X,Y)$.
Although one must be careful, it is not difficult to prove that the resulted alignment $\A'$ described above is optimal.

Inspired by the embedding outlined earlier, which transforms the $\ED$ metric into the $\LCS$ metric, this section introduces two separate embedding mappings that function conversely: from the $\LCS$ metric to the $\ED$ metric.

In this section we introduce an embedding from the $\LCS$ metric to the $\ED$ metric which is scaling isometric. Our embedding is asymmetric, implying that we embed each string in a different manner. We propose an embedding scheme that transforms strings $X$ and $Y$ of length $n$, into $E_1(X)$ and $E_2(Y)$, respectively of lengths $n$ and $O(n^2)$. This scheme ensures that any optimal $\LCS$-alignment of $X$ and $Y$ corresponds to an optimal $\ED$-alignment of $E_1(X)$ and $E_2(Y)$. The formal statement of this result is provided below.



\begin{theorem}\label{thm:indel-edit-exact}

    For any alphabet $\Sigma$ and integer $n > 0$, there exist $E_1:\Sigma^n\to \Sigma^n$ and $E_2:\Sigma^n\to \{\Sigma\cup \{\$\}\}^{N}$, where $N = \mathcal{O}(n^2)$, such that given strings $X,Y \in \Sigma^n$, we have $\ED(E_1(X),E_2(Y)) = N - n + \frac{\LCS(X,Y)}{2}$. \footnote{Our techniques can adapted easily to design embedding functions for variable length strings instead of length-preserving ones.}
\end{theorem}



The core concept of our embedding revolves around defining $E_1$ as the identity function, while for $E_2(Y)$, appending the sequence $\$^n$ after each character in $Y$, including at the beginning, resulting in a length of $O(n^2)$ for $E_2(Y)$. This construction ensures that any optimal $\LCS$ alignment of $X$ and $Y$ can be transformed into an $\ED$ alignment of $E_1(X)$ and $E_2(Y)$ while preserving matching characters, as elaborated below:

Given any optimal $\LCS$-alignment $\A$ of $X$ and $Y$, we can transform it into an $\ED$-alignment $\A'$ of $E(X)$ and $E(Y)$  as follows: 
If a character is deleted from  $Y$ in $\A$, then the corresponding character along with the subsequent sequence of  $\$^n$ in $E(Y)$ are deleted in $\A'$. If a character is deleted from $X$, then since the each of the characters in $E(Y)$ is separated with the sequence $\$^n$ we can substitute the corresponding character in $E(X)$,  with a $\$$-symbol in $E(Y)$. This ensures that the characters of $E_1(X)$ are either matched or substituted. Consequently, we establish the optimality of this resulting $\ED$ alignment of $E_1(X)$ and $E_2(Y)$. The formal proof with all the details is in Appendix \ref{sec:indel-edit-exact}.



In the previous theorem, the length of one of the embedded strings grows quadratically with the input string's length due to appending $\$^n$ after each character in $Y$ to form $E_2(Y)$. Now, a natural question arises: Can we reduce the length of the embedded string? While we currently lack knowledge of any embedding with a smaller output size which is scaling isometric, we can achieve significantly smaller embedded strings by approximately preserving the distances.

Essentially, instead of appending $\$^n$, we can append $\$^k$ after each character in $Y$ for $k \le n$, resulting in a much smaller-sized embedding while maintaining distances up to a factor of $(1+c/k)$, for some small constant $c \ge 1$. This is due to the claim that even with the smaller appending, one can still achieve a $\ED$ alignment of the embedded strings given an optimal $\LCS$ alignment of the input strings, which preserves more than $(1 - \frac{c}{k})$ fraction of the matches.  We formally state this approximate embedding below, with further details provided in Appendix \ref{sec:indel-edit-apx}. 



\begin{thm}[Indel Into Edit Metrics Embedding - Approximate embedding]
\label{thm:indel-edit-apx}
     For any alphabet $\Sigma$, $n \in \N $ and $ \varepsilon \in (0,1]$, there exist mappings $E_1:\Sigma^n\to \Sigma^n$ and $E_3:\Sigma^n\to (\Sigma\cup \{\$\})^{N}$, where $N = \Theta(n/\varepsilon)$, such that for any $X,Y \in \Sigma^n$, we have
     $$\ED(E_1(X),E_3(Y)) = N - n+k, \text{ where } k\in \left[\frac{\LCS(X,Y)}{2},(1 + \varepsilon)\frac{\LCS(X,Y)}{2}\right).$$
     

\end{thm}




\subsection{Obtaining Scaling Isometric Embedding}\label{sec:indel-edit-exact}
In this section, we describe the embedding functions $E_1$ and $E_2$ for the scaling isometric embedding from $\LCS$ metric to $\ED$ metric, followed by the proof of Theorem \ref{thm:indel-edit-exact}.

The following facts, which we state without a proof, will be helpful in the proof of Theorem \ref{thm:indel-edit-exact} and \ref{thm:indel-edit-apx}.
\begin{fact}\label{fact:indel-edit1}
    For any $\LCS$ alignment of strings $X$ and $Y$, where $\abs{X} = \abs{Y}$, the number of deletions in $X$ is equal to the number of deletions in $Y$ which is equal to $\frac{\LCS(X,Y)}{2}$.
\end{fact}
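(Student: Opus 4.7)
The plan is a direct counting argument that uses only the definition of an $\LCS$ alignment as a sequence of insertions and deletions (no substitutions). First I would fix such an alignment converting $X$ into $Y$ and observe that every character of $X$ is either matched to a unique character of $Y$ or deleted, and symmetrically every character of $Y$ is either matched with a character of $X$ or inserted into $X$. Letting $m$ denote the number of matched pairs and letting $d_X, d_Y$ denote the numbers of deletions from $X$ and insertions into $X$ (i.e.\ the unmatched characters of $Y$) respectively, straightforward bookkeeping gives $\abs{X} = m + d_X$ and $\abs{Y} = m + d_Y$.

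Under the hypothesis $\abs{X} = \abs{Y}$ of the Fact, subtracting the two identities yields $d_X = d_Y$ immediately; this holds for every alignment, not only optimal ones, and establishes the first equality. For the identity $d_X = d_Y = \LCS(X,Y)/2$, I would appeal to the definition of $\LCS(X,Y)$ as the minimum cost (total number of insertions plus deletions) over all $\LCS$ alignments. The cost of any alignment equals $d_X + d_Y$, which in turn equals $2d_X$ by the preceding step; in an optimal alignment this cost equals $\LCS(X,Y)$, giving $d_X = d_Y = \LCS(X,Y)/2$.

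The only point I would treat carefully in the write-up is the implicit restriction to \emph{optimal} $\LCS$ alignments: the second equality obviously fails for a non-optimal alignment, since one may pad it with arbitrarily many matched deletion/insertion pairs without changing either string but inflating $d_X, d_Y$ symmetrically. I would therefore either restate the Fact with the qualifier ``optimal'' or phrase the conclusion as $d_X = d_Y = \mathrm{cost}(\mathcal{A})/2$ for any alignment $\mathcal{A}$, specialising to the optimum when the constant $\LCS(X,Y)$ is desired. No real obstacle arises; the proof is purely combinatorial and uses nothing beyond the one-to-one pairing between matched characters.
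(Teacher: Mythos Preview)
Your argument is correct and is exactly the elementary counting the paper has in mind; in fact the paper explicitly states this Fact \emph{without} a proof, so there is nothing to compare against. Your observation that the equality $d_X=d_Y=\LCS(X,Y)/2$ requires the alignment to be optimal (while $d_X=d_Y$ holds for every $\LCS$ alignment) is a valid caveat on the paper's phrasing, and the paper indeed only ever invokes the Fact for optimal alignments.
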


\begin{fact}\label{fact:indel-edit2}
    For any optimal $\ED$ alignment of strings $X$ and $Y$, where $\abs{X} \le \abs{Y}$, we have $\ED(X,Y) = $ \#deletions in $X$ + \#deletions in $Y$ + \#substitutions $= (\abs{Y} - \abs{X}) + 2\times$\#deletions in $X$ + \#substitutions.
\end{fact}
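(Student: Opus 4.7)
The plan is to observe that the identity claimed in Fact \ref{fact:indel-edit2} actually holds for every $\ED$ alignment (not just an optimal one), and is purely a bookkeeping statement about how the characters of $X$ and $Y$ partition into four categories: matched, substituted, deleted from $X$, or deleted from $Y$. Optimality is only relevant insofar as the cost of the optimal alignment equals $\ED(X,Y)$ by definition, so the identity transfers to $\ED(X,Y)$ once it is established for arbitrary alignments.

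Concretely, I would first formalize an $\ED$ alignment $\A$ as a monotone partial injection $\pi : I \to [|Y|]$ with $I \subseteq [|X|]$: indices in $I$ are \emph{aligned} (either a match if $X[i]=Y[\pi(i)]$ or a substitution otherwise), indices in $[|X|]\setminus I$ are deletions from $X$, and indices in $[|Y|]\setminus \pi(I)$ are deletions from $Y$ (equivalently, insertions into $X$). Writing $m=|I|$, $s$ for the number of substitutions, $d_X$ for the number of deletions from $X$, and $d_Y$ for the number of deletions from $Y$, the cost of $\A$ is by definition
\[\mathrm{cost}(\A) \;=\; d_X + d_Y + s.\]

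Next, I would use the fact that every character of $X$ is either aligned or deleted, and similarly for $Y$, giving the two counting identities $|X|=m+d_X$ and $|Y|=m+d_Y$. Subtracting yields $d_Y - d_X = |Y|-|X|$, hence $d_Y = d_X + (|Y|-|X|)$. Substituting back into the cost expression gives
\[\mathrm{cost}(\A) \;=\; d_X + \bigl(d_X + (|Y|-|X|)\bigr) + s \;=\; (|Y|-|X|) + 2d_X + s,\]
which is exactly the second equality claimed. Note $|Y|-|X|\ge 0$ by hypothesis, so the formula is well-defined.

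Finally, specializing to an optimal $\A$ (so $\mathrm{cost}(\A) = \ED(X,Y)$) gives the stated identity. The proof is essentially a definitional unfolding, so no step is a genuine obstacle; the only care required is to verify that the four categories partition $X$ and $Y$ exactly once each, which follows from $\pi$ being an injection. Optimality plays no role beyond identifying $\mathrm{cost}(\A)$ with $\ED(X,Y)$.
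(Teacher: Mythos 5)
Your proof is correct: formalizing the alignment as a monotone partial injection and counting the four character categories gives $|X|=m+d_X$, $|Y|=m+d_Y$, hence $d_Y=d_X+(|Y|-|X|)$, and the claimed identity follows, with optimality used only to identify the cost with $\ED(X,Y)$. The paper states this fact without proof, so there is nothing to compare against; your argument is exactly the standard bookkeeping justification the authors implicitly rely on, and it has no gaps.
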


\begin{fact}\label{fact:indel-edit3}
    Let $\Sigma , \Sigma'$ be alphabets, where $\Sigma \subseteq \Sigma'$ and strings $X,Y \in \Sigma^n$ , $Y' \in {\Sigma'}^{N}$, where $N\ge n$.If $Y'$ is obtained from $Y$ by inserting characters from $\Sigma' \setminus \Sigma$ , then $\ED(X,Y') \ge N-n + \frac{\LCS(X,Y)}{2}$.
\end{fact}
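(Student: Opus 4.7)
The plan is to prove Fact~\ref{fact:indel-edit3} by taking an arbitrary (in particular, optimal) $\ED$-alignment of $X$ and $Y'$, accounting for its operations via the length-balance equations, and then using the fact that $\Sigma'\setminus\Sigma$ characters in $Y'$ cannot participate in matches (since $X\in\Sigma^*$) to convert the matched pairs of the alignment into a common subsequence of $X$ and $Y$.

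First I would fix an optimal $\ED$-alignment $\A$ of $X$ and $Y'$ and name its statistics: let $m$ be the number of matched character pairs, $d_X$ the number of deletions performed on $X$, $d_{Y'}$ the number of deletions performed on $Y'$, and $s$ the number of substitutions. Counting characters of each string gives $n=m+d_X+s$ and $N=m+d_{Y'}+s$, and the alignment's cost is $d_X+d_{Y'}+s$. Subtracting the two length equations yields $d_{Y'}=d_X+(N-n)$, which lets me rewrite
\[
\ED(X,Y') \;=\; d_X+d_{Y'}+s \;=\; (N-n)+2d_X+s.
\]

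Next I would establish the key bound $m\le n-\tfrac{\LCS(X,Y)}{2}$. The matched pairs of $\A$ select characters of $X$ and characters of $Y'$ at strictly increasing positions, and each matched character lies in $\Sigma$ because it appears in $X\in\Sigma^*$. But by construction, the $\Sigma$-characters of $Y'$, read in order, form exactly $Y$ (the inserted characters all come from $\Sigma'\setminus\Sigma$). So the matched characters on the $Y'$-side correspond to an increasing subsequence of $Y$, and together with the matched positions in $X$ they form a common subsequence of $X$ and $Y$ of length $m$. Hence $m$ is at most the LCS length of $X$ and $Y$, which equals $n-\tfrac{\LCS(X,Y)}{2}$ (since $X,Y$ have equal length $n$, as in Fact~\ref{fact:indel-edit1}).

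Finally, I would combine the two. From $n=m+d_X+s$ and $m\le n-\tfrac{\LCS(X,Y)}{2}$ we get $d_X+s\ge \tfrac{\LCS(X,Y)}{2}$. Substituting into the cost expression and dropping a non-negative $d_X$,
\[
\ED(X,Y') \;=\; (N-n)+2d_X+s \;\ge\; (N-n)+d_X+s \;\ge\; (N-n)+\tfrac{\LCS(X,Y)}{2},
\]
which is exactly the claimed inequality. I do not expect any real obstacle here; the only subtle step is the observation in the previous paragraph that matched $Y'$-positions must be $\Sigma$-positions and hence inherit the order of $Y$, so I would take care to spell that out cleanly.
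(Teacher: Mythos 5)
Your proof is correct. The accounting $n=m+d_X+s$, $N=m+d_{Y'}+s$, hence $\ED(X,Y')=(N-n)+2d_X+s$, is exactly the paper's Fact~\ref{fact:indel-edit2}, and your key step — that matched positions in $Y'$ must carry $\Sigma$-characters and therefore trace out a common subsequence of $X$ and $Y$, giving $m\le n-\tfrac{\LCS(X,Y)}{2}$ and so $d_X+s\ge\tfrac{\LCS(X,Y)}{2}$ — is sound; the monotonicity of matched pairs that you rightly flag as the one point to spell out is the same property the paper records (for indel alignments) in Claim~\ref{claim:alignEmbedding}. The paper itself states Fact~\ref{fact:indel-edit3} without proof, so there is no in-paper argument to compare against; your write-up supplies the natural proof the authors evidently had in mind and fills that gap correctly.
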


\subsubsection{Description of the embedding functions $E_1$ and $E_2$}\label{sec:indel-edit-exact1}
We define the embedding function $E_2$ for strings of length $n$. 
Given a string $Y \in \Sigma^n$, $E_2(Y)$ is obtained by inserting the sequence $\$^n$ after each symbol in $Y$ and also at the beginning. This yields in $E_2(Y) = \$^n \cdot Y[1] \cdot \$^n \cdot Y[2] \cdot \$^n \ldots Y[n] \$^n$, where $\$^n$ denotes a sequence of $n$ dollar signs ``\$". Essentially, $E_2(Y)[i(n+1)] = Y[i]$ for all $1 \leq i \leq n$, and the remaining positions in $E_2(Y)$ are filled with ``\$". The length of the transformed string $E_2(Y)$ is $N' = n(n+1) + n = n^2 + 2n$.


Regarding $E_1$, it functions as the identity function. Therefore, $E_1(X)$, simply remains the same as $X$, thus $|E_1(X)| = |X| = n$.  


\subsubsection{Proof of Theorem \ref{thm:indel-edit-exact}}\label{sec:indel-edit-exact2}
Given strings $X,Y \in \Sigma^n$, henceforth we will denote $E_1(X)$ simply as $X$ and $E_2(Y)$ as $Y'$. 
To prove Theorem \ref{thm:indel-edit-exact}, we demonstrate that we can construct a $\ED$ alignment of $X$ and $Y'$ with a cost of $N'-n + \frac{\LCS(X,Y)}{2}$ given any optimal $\LCS$ alignment of $X$ and $Y$ with a cost of $\frac{\LCS(X,Y)}{2}$. 

Subsequently, we establish that the constructed $\ED$ alignment is indeed optimal 
. This process involves defining a decomposition for $X$ and $Y$ based on their $\LCS$ alignment, which will then be utilized to construct the required $\ED$ alignment for $X$ and $Y'$.

\textbf{Decomposition of $X$ and $Y$:}
We start by defining blocks for $X$ and $Y$ given a $\LCS$ alignment $\A$ of $X$ and $Y$. 
We partition $X$ into blocks consisting of contiguous substrings. Each block, with the exception of the first one, starts with a contiguous matching segment followed by a contiguous deletion segment. Some blocks may have an empty deletion segment. The initial block does not contain any matching segment. These matching and deletion segments determined based on the $\LCS$ alignment $\A$. The decomposition of $Y$ follows a similar procedure. Left hand side of Figure \ref{fig:indel-edit-exact} and Figure \ref{fig:indel-edit-apx} shows the decomposition according to some optimal $\LCS$ alignment. We formally articulate this observation below.

\begin{observation}\label{obs:indel-edit-exact}
For any $\LCS$ alignment $\A$ of $X,Y \in \Sigma^n$, we can partition $X$ and $Y$ into disjoint blocks based on $\A$, such that:
\begin{enumerate}

\item The number of blocks in $X$ and $Y$ are equal. Hence, $X = b^X_0 \cdot b^X_1 \cdot b^X_2 \cdots b^X_l$ and $Y = b^Y_0 \cdot b^Y_1 \cdot b^Y_2 \cdots b^Y_l$, where $b^X_i$ are blocks of $X$ and $b^Y_i$ are blocks of $Y$.

\item $b^X_0 = d^X_0$ and $b^Y_0=d^Y_0$, which may be empty.

\item For each $i > 0$, block $b^X_i$ consists of a contiguous matching part $m^X_i$ followed by a contiguous deletion part $d^X_i$ i.e. $b^X_i = m^X_i \cdot d^X_i$. Similarly, for each block $b^Y_i$ of $Y$ we have, $b^Y_i = m^Y_i \cdot d^Y_i$. This means that in the alignment $\A$, the characters in $m^X_i$ and $m^Y_i$ are getting matched and characters in $d^X_i$ and $d^Y_i$ are getting deleted.

\item For each $i>0$, $m^X_i = m^Y_i$.

\item For each $i > 0$, $m^X_i$ is non-empty and $d^X_i$ can be empty. The same applies for the blocks of $Y$.

\end{enumerate}
\end{observation}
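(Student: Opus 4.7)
The plan is to read the decomposition directly off the matched-pair structure of the alignment $\A$. List the matched pairs produced by $\A$ in increasing order as $(p_1,q_1),(p_2,q_2),\ldots,(p_L,q_L)$, with $p_1<p_2<\cdots<p_L$ and $q_1<q_2<\cdots<q_L$ (monotonicity holds because $\A$ is an $\LCS$ alignment). I would then group these pairs into \emph{maximal runs}: pairs $(p_r,q_r)$ and $(p_{r+1},q_{r+1})$ belong to the same run iff $p_{r+1}=p_r+1$ and $q_{r+1}=q_r+1$. Call the number of runs $\ell$, and for each $i\in\{1,\ldots,\ell\}$ let $m^X_i$ be the substring of $X$ spanned by the $i$-th run's positions and $m^Y_i$ the analogous substring of $Y$.

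Using the runs as anchors, I would define the blocks by a left-to-right sweep: set $b^X_0$ to be the (possibly empty) prefix of $X$ strictly preceding the first matched position, and for $i\geq 1$ let $b^X_i$ begin at the first position of the $i$-th run in $X$ and extend up to, but not including, the first position of the $(i+1)$-th run (or to the end of $X$ when $i=\ell$). Inside each $b^X_i$ with $i\geq 1$, the prefix that coincides with the $i$-th run is declared to be $m^X_i$, and the remaining suffix is declared to be $d^X_i$. Define $b^Y_0,b^Y_i,m^Y_i,d^Y_i$ symmetrically on the $Y$ side. With this definition the number of blocks on each side is $\ell+1$, giving item~1, and items~3 and~5 are immediate from the construction — each $m^X_i$ contains at least the one position from its run, and $d^X_i$ is whatever remains before the next anchor.

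Item~4 is equally direct: by definition, $\A$ matches $X[p_r]$ to $Y[q_r]$, so $X[p_r]=Y[q_r]$ for every $r$ in the $i$-th run, and consequently $m^X_i$ and $m^Y_i$ are character-by-character equal. The only step that is not purely definitional is verifying that the suffixes $d^X_i$ and $d^Y_i$ contain only deleted characters, i.e.\ no $X$-position in $d^X_i$ is matched by $\A$, and likewise for $Y$; this is where I expect to have to argue most carefully. Suppose some position in $d^X_i$ were matched by $\A$. That position lies strictly between the last position of run $i$ and the first position of run $i+1$ in $X$, and by monotonicity of $\A$ its partner on the $Y$ side also lies strictly between the last position of run $i$ and the first position of run $i+1$ in $Y$. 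But then this pair either sits at $(p+1,q+1)$ relative to the end of run $i$, in which case run $i$ was not maximal, or it sits at $(p-1,q-1)$ relative to the start of run $i+1$, contradicting maximality of run $i+1$; otherwise it forms its own new run between $i$ and $i+1$, contradicting our enumeration of runs. An identical argument handles $d^Y_i$.

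Finally, item~2 is just the base case of the sweep — $b^X_0$ is by construction all of $X$ before the first matched position, which consists entirely of deleted characters, so $b^X_0=d^X_0$, and likewise for $Y$. Once maximality of the runs has been used to rule out stray matched positions inside the deletion segments, every clause of the observation follows from a single pass over $\A$, so the proof reduces to bookkeeping of indices rather than any nontrivial combinatorial argument.
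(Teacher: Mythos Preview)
Your proof is correct and matches the paper's intent; the paper itself states this as an observation without proof, giving only the informal description that precedes it, so your write-up is in fact more detailed than what the paper provides. One small remark: your case analysis for why no matched position can land in $d^X_i$ is more elaborate than needed --- since you already listed \emph{all} matched pairs and partitioned them into runs, any matched $X$-position lies in the $X$-interval of some run, and $d^X_i$ is by definition the gap between the $X$-intervals of runs $i$ and $i+1$; the ``either/or'' cases are subsumed by your ``otherwise'' clause.
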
 

To define the necessary $\ED$ alignment between $X$ and $Y'$, we begin by establishing a decomposition for both $X$ and $Y'$. This decomposition is derived from an optimal $\LCS$ alignment of $X$ and $Y$. Let $\I^{X,Y}$ be an optimal $\LCS$ alignment of $X$ and $Y$.



\textbf{Decomposition of $X$ and $Y'$:}
 According to Observation \ref{obs:indel-edit-exact}, we have a decomposition of $X$ and $Y$ w.r.t $\I^{X,Y}$. Let $X = b^X_0 \cdot b^X_1 \cdot b^X_2 \cdots b^X_l$ and $Y = b^Y_0 \cdot b^Y_1 \cdot b^Y_2 \cdots b^Y_l$. 


Decomposition of $X$ remains the same. We define the blocks of $Y'$ as $b^{Y'}_1, b^{Y'}_2, \dots , b^{Y'}_l$, where $b^{Y'}_i = m^{Y'}_i \cdot c_{i}' \cdot d^{Y'}_i$, with $c_{i}' = \$^n$. If $m^Y_i = Y[p,q]$, then $m^{Y'}_i = Y'[p(n+1),q(n+1)]$ which is simply  $Y[p] \cdot \$^n \cdot Y[p+1] \cdot \$^n \cdots Y[q]$. Similarly, if $d^Y_i = Y[p,q]$, then $d^{Y'}_i = Y'[p(n+1),q(n+1)+n]$ which is $Y[p] \cdot \$^n \cdot Y[p+1] \$^n \cdots Y[q] \cdot \$^n$ (note the extra $\$^n$ at the end of $d^{Y'}_i$ which is absent in $m^{Y'}_i$). Therefore, $Y' = \$^n \cdot b^{Y'}_0 \cdot b^{Y'}_1 \cdot b^{Y'}_2 \cdots b^{Y'}_l$. Reader can refer to Figure \ref{fig:indel-edit-exact} for more clarity.

Based on this decomposition of $X$ and $Y'$, we are now ready to define our $\ED$ alignment of $X$ and $Y'$.

\textbf{$\ED$ alignment of $X$ and $Y'$:}
We proceed by defining a $\ED$ alignment $\A'$ of $X$ and $Y'$ using the previously described blocks from $X$ and $Y'$.
Initially, we align the block $b^X_0$ with the initial $\$^n$ block, where each character of $b^X_0$ undergo substitution.
Subsequently, for each block $b^X_i$ where $i>0$, we align $m^{X}_i$ with $m^{Y'}_i$, ensuring that all characters in $m^{X}_i$ are matched. Moreover, each $d^{X}_i$ is aligned with $c_{i}'$, resulting in substitutions for each character in $d^{X}_i$. For a visual representation, please refer to Figure \ref{fig:indel-edit-exact}. Here, we observe that $d^X_0$ aligns with the initial $\$^n$ block, each $m^X_i$ aligns with $m^{Y'}_i$, and $d^X_2$ and $d^X_3$ align with $c_2$ and $c_3$ in $Y'$. Finally, we compute the cost of the alignment $\A'$.


\begin{claim}\label{claim:exact1}
    $cost(\A') = N-n+\frac{\LCS(X,Y)}{2}$.
\end{claim}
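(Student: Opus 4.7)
\begin{proofof}{Claim~\ref{claim:exact1}}
The plan is to compute $cost(\A')$ directly by counting, for the alignment $\A'$ just constructed, how many characters play each role (match, substitution, deletion from $X$, insertion into $Y'$), and then applying Fact~\ref{fact:indel-edit1}.

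The first and main observation is that \emph{no character of $X$ is deleted} under $\A'$. By Observation~\ref{obs:indel-edit-exact} every character of $X$ lies either in some matching part $m^X_i$ (for $i\ge 1$) or in some deletion part $d^X_i$ (for $i\ge 0$); the construction of $\A'$ matches each character of $m^X_i$ to the corresponding (non-\$) character of $m^{Y'}_i$, and substitutes each character of $d^X_i$ into a \$-character of $c_i'$ (with the role of $c_0'$ played by the initial $\$^n$ block of $Y'$). Hence the number of substitutions is $S=\sum_{i=0}^{l}|d^X_i|$, which by Fact~\ref{fact:indel-edit1} equals $\LCS(X,Y)/2$, while the number of matches is $M=\sum_{i=1}^{l}|m^X_i| = n - \LCS(X,Y)/2$, and the number of deletions from $X$ is $D_X=0$.

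Next I will use the length constraint to obtain the number of insertions. Since every character of $X$ is accounted for by a match or a substitution, the characters of $Y'$ that are \emph{not} matched or substituted must be insertions. As $|Y'|=N$ and the characters of $Y'$ used up by matches and substitutions number exactly $M+S=n$, the number of insertions is $I=N-n$. Combining,
\[
cost(\A') \;=\; S + D_X + I \;=\; \frac{\LCS(X,Y)}{2} + 0 + (N-n) \;=\; N - n + \frac{\LCS(X,Y)}{2},
\]
which is the claimed identity. The only remaining point to check is that $\A'$ is a valid $\ED$ alignment, i.e.\ that its matches/substitutions are monotone; this is immediate because the blocks of $X$ and $Y'$ were listed in the same left-to-right order, and within each paired block the construction preserves order.

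Since the bookkeeping reduces to a one-line sum after the zero-deletion observation, there is no real technical obstacle; the content of the claim is the zero-deletion observation itself, which is built into the construction of $\A'$ via the substitution of each $d^X_i$ against the buffer block $c_i'=\$^n$.
\end{proofof}
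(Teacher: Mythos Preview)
Your proof is correct and follows essentially the same approach as the paper: both arguments observe that every character of $X$ is either matched (the $m^X_i$ parts) or substituted (the $d^X_i$ parts against the $\$^n$ buffers), count the substitutions as $\sum_i |d^X_i| = \LCS(X,Y)/2$ via Fact~\ref{fact:indel-edit1}, and infer that the remaining $N-n$ characters of $Y'$ are deleted. Your brief monotonicity check is a small addition the paper leaves implicit.
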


\begin{proof}
    Since, $|Y'| > |X|$, therefore there must be at least $|Y'| - |X| = N-n$ many deletions in any alignment of $X$ and $Y'$.

    In the optimal $\LCS$ alignment $\I^{X,Y}$ of $X$ and $Y$, we have $m^X_i = m^Y_i$, as noted in Observation \ref{obs:indel-edit-exact}. Since $m^Y_i$ is a proper subsequence of $m^{Y'}_i$ from our embedding and decomposition, it follows that $m^{X}_i$ is also a proper subsequence of $m^{Y'}_i$. Therefore, all characters in $m^{X}_i$ match when aligned with $m^{Y'}_i$. Furthermore, the characters of $d^{X}_i$ align with ``\$" characters in $c_{i}'$, contributing to $|d^{X}_i|$ substitutions. Hence, the total number of substitutions is given by $\sum_{i=0}^l |d^{X}_i| = \frac{\LCS(X,Y)}{2}$, using Fact \ref{fact:indel-edit1}. Since, in the $\ED$ alignment, there are only matches and substitutions in $X$ and all deletions occur only in $Y'$, therefore, $cost(\A') = N-n + \frac{\LCS(X,Y)}{2}$.
\end{proof}

Let $\E^{X,Y'}$ be an optimal $\ED$ alignment of $X$ and $Y'$ and the cost of the alignment is $cost(\E^{X,Y'})$, i.e., $cost(\E^{X,Y'}) = \ED(X,Y')$. 
We will now demonstrate that $\A'$ is an optimal $\ED$ alignment, i.e. $cost(\A') = cost(\E^{X,Y'})$.

\begin{claim}\label{claim:exact2} 
    $cost(\E^{X,Y'}) = cost({\A'})$.
\end{claim}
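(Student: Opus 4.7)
The plan is to establish two inequalities that sandwich $cost(\E^{X,Y'})$ between $cost(\A')$ from above (trivially, by optimality of $\E^{X,Y'}$) and $N-n+\frac{\LCS(X,Y)}{2}$ from below, and then invoke Claim~\ref{claim:exact1} to identify these bounds.

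First, since $\E^{X,Y'}$ is by definition an \emph{optimal} $\ED$ alignment of $X$ and $Y'$, and $\A'$ is a specific (hence feasible) $\ED$ alignment of the same pair, we immediately get $cost(\E^{X,Y'}) \le cost(\A')$. By Claim~\ref{claim:exact1}, this upper bound equals $N-n+\frac{\LCS(X,Y)}{2}$.

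The key step is the matching lower bound. Here I would appeal directly to Fact~\ref{fact:indel-edit3}, applied with $\Sigma'=\Sigma\cup\{\$\}$. The string $E_2(Y)=Y'$ is, by construction (Section~\ref{sec:indel-edit-exact1}), obtained from $Y \in \Sigma^n$ by inserting only characters from $\Sigma' \setminus \Sigma = \{\$\}$, and $|Y'|=N \ge n = |X|$, so the hypotheses of the fact are met. The fact then yields $\ED(X,Y') \ge N-n+\frac{\LCS(X,Y)}{2}$, i.e., $cost(\E^{X,Y'}) \ge N-n+\frac{\LCS(X,Y)}{2}$.

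Combining the two bounds forces $cost(\E^{X,Y'}) = N-n+\frac{\LCS(X,Y)}{2} = cost(\A')$, which is the claim. Essentially there is no real obstacle here: once Fact~\ref{fact:indel-edit3} is in hand, the argument is a one-line squeeze. The only thing worth double-checking is that the appeal to Fact~\ref{fact:indel-edit3} is syntactically correct — specifically, that the $\$$ symbols inserted in $Y'$ lie outside the original alphabet $\Sigma$, which holds by our choice of $\Sigma' = \Sigma\cup\{\$\}$.
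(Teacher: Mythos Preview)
Your argument is correct. The squeeze between the trivial upper bound and Fact~\ref{fact:indel-edit3} is airtight, and your check that $Y'=E_2(Y)$ is obtained from $Y$ by inserting only symbols from $\{\$\}=\Sigma'\setminus\Sigma$ is exactly what is needed to invoke the fact.

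The paper's own proof takes a slightly different, more self-contained route: rather than appealing to Fact~\ref{fact:indel-edit3}, it argues directly that in $\A'$ every character of $X$ is either matched or substituted (never deleted), so any strictly cheaper alignment would have to realize more matches in $X$; but matches can only occur at non-\$ positions of $Y'$, hence would yield a longer common subsequence of $X$ and $Y$, contradicting the optimality of the $\LCS$ alignment $\I^{X,Y}$. This is of course precisely the reasoning that underlies Fact~\ref{fact:indel-edit3}, just inlined. Your approach is more modular and is in fact the very argument the paper uses later for the approximate embedding (proof of Theorem~\ref{thm:indel-edit-apx}); the paper's inline version has the minor advantage of making explicit why $\A'$ is tight (no deletions in $X$), but both arrive at the same place with the same idea.
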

\begin{proof}
    It is clear that $cost(\E^{X,Y'}) \le cost({\A'})$, because $\E^{X,Y'}$ is an optimal $\ED$ alignment and $\A'$ is an $\ED$ alignment.

    Let us assume, for the sake of contradiction, that $cost(\E^{X,Y'}) < cost(\A')$. In the alignment $\A'$, the characters of $X$ are either matched or substituted. Therefore, the only way to achieve a better alignment than $\A'$ is by increasing the number of matches in $X$. However, matches in $X$ and $Y'$ can only occur with characters that are not ``\$". This implies that such matches would also exist between $X$ and $Y$, which contradicts the optimality of the 
    $\LCS$ alignment of $X$ and $Y$. Hence, we establish that $cost(\E^{X,Y'}) \ge cost(\A')$. Therefore, $cost(\E^{X,Y'}) = cost(\A')$.
\end{proof}

\begin{proofof}{Theorem \ref{thm:indel-edit-exact}}
    Using claim \ref{claim:exact1} and claim \ref{claim:exact2}, we can prove the theorem. 
\end{proofof}

\begin{remark}
    If there exists an $\LCS$ alignment of $X$ and $Y$, where the size of each deletion segment of $X$, i.e., $|d^X_i|$, for all $1\le i\le l$ is bounded by some threshold $t$, then we can get an embedding of size $\OO(nt)$.
\end{remark}

\begin{figure}[htp]
    \centering
    \includegraphics[width=\textwidth]{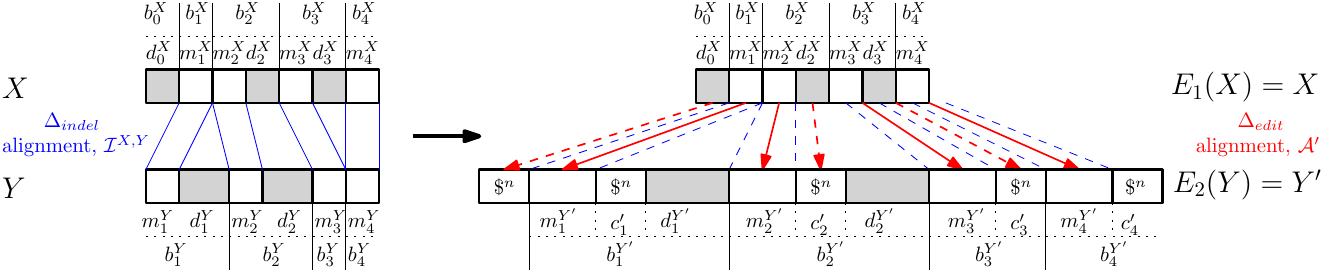}
    \caption{On the left side, we have the decomposition of $X$ and $Y$ based on the $\LCS$ alignment. On the right side, we see the decomposition and alignment of $X$ and $Y'$ following our construction in Section \ref{sec:indel-edit-exact}. The solid red arrows indicate that all characters of $m^{X}_i$ are matched, while the dotted red arrows suggest that the characters of $m^{X}_i$ are substituted. The shaded cells in gray indicate deletions. On the right-hand side, the blue dotted lines indicate the alignment of $m^{X}_i$ with $m^{Y'}_i$.}
    \label{fig:indel-edit-exact}
\end{figure}

\subsection{Obtaining Approximate Scaling Isometric Embedding}\label{sec:indel-edit-apx}


In this section, we introduce the embedding function $E_3$ and provide a proof for Theorem \ref{thm:indel-edit-apx}. While the embedding described here bears resemblance to the previous one, for the sake of completeness, we present all the details below:

\subsubsection{Description of the function $E_3$}\label{sec:indel-edit-apx1}
Given a string $Y$ of length $n$ in the alphabet $\Sigma$, we define $E_3$ similar to $E_2$ as described in Section \ref{sec:indel-edit-exact1}. Given a parameter $0 < \varepsilon \le 1$, we define $k = \frac{4}{\varepsilon}$.\footnote{It's worth noting that $k$ must be an integer. There are two approaches to ensure this: either we choose $\varepsilon$ such that $k$ is an integer, or we use $k \lceil \frac{4}{\varepsilon} \rceil$. However, in the latter case, additional attention is required for the calculations, although the method remains valid.} 
However, for the mapping $E_3$, we append $\$^k$ after every character in $Y$ instead of appending $\$^n$. Additionally, we append $\$^n$ at the beginning and at the end as well. Consequently, $E_3(Y)$ is represented as: $$E_3(Y) = \$^n \cdot Y[1] \cdot \$^k \cdot Y[2] \cdot \$^k \cdot Y[3]  \cdot \$^k \cdots \$^k \cdot Y[n] \cdot \$^k \cdot \$^n. $$ Basically, $E_3(Y)[n + (i-1)(k+1) + 1] = Y[i]$ for all $1 \le i \le n$, and rest of the positions in $E_3(Y)$ are filled with ``\$". The length of $E_3(Y)$ is calculated as $n(k+1) + 2n = \widetilde{N}$. Therefore, for a fixed value of \(\varepsilon\), we achieve a linear size embedding instead of the quadratic embedding previously.

\subsubsection{Proof of Theorem \ref{thm:indel-edit-apx}}\label{sec:indel-edit-apx2}
Given $X,Y \in \Sigma^n$, from now on we denote $E_3(Y)$ by $\widetilde{Y}$ and as in the previous embedding in Section \ref{sec:indel-edit-exact}, we denote $E_1(X)$ by $X$.
The proof proceeds as follows: Let us consider an optimal $\LCS$ alignment $\I^{X,Y}$ of $X$ and $Y$. We aim to construct a $\ED$ alignment $\widetilde{\A}$ of strings $X$ and $\widetilde{Y}$ based on the $\LCS$ alignment $\I^{X,Y}$, which preserves ``most" of the matches. Subsequently, we will bound the $cost(\widetilde{\A})$ to show that we can approximately preserve the distances, as stated in Theorem \ref{thm:indel-edit-apx}. 

Similar to the proof of Theorem \ref{thm:indel-edit-exact}, here also we will utilize the decomposition of $X$ and $Y$ given the optimal $\LCS$ alignment $\I^{X,Y}$. We will employ the decomposition from Section \ref{sec:indel-edit-exact2}. Let $X = b^X_0 \cdot b^X_1 \cdot b^X_2 \cdots b^X_l$ and $Y = b^Y_0 \cdot b^Y_1 \cdot b^Y_2 \cdots b^Y_l$. We will decompose $X$ and $\widetilde{Y}$ based on the decomposition of $X$ and $Y$. The resulting decomposition of $X$ and $\widetilde{Y}$ will then be used to establish the $\ED$ alignment $\widetilde{\A}$.

\textbf{Decomposition of $X$ and $\widetilde{Y}$:}
Now, let us define the blocks of $X$ and $\widetilde{Y}$ based on the obtained blocks of $X$ and $Y$. Blocks of $X$ remain the same.

Let blocks of $\widetilde{Y}$ be $b^{\widetilde{Y}}_0, b^{\widetilde{Y}}_1, b^{\widetilde{Y}}_2, \dots , b^{\widetilde{Y}}_l$, where $b^{\widetilde{Y}}_0 = d^{\widetilde{Y}}_0$ and for all $i>0, b^{\widetilde{Y}}_i = m^{\widetilde{Y}}_i \cdot \widetilde{c}_i \cdot d^{\widetilde{Y}}_i$ , $\widetilde{c}_i = \$^k$. If $m^Y_i = Y[p,q]$, then $m^{\widetilde{Y}}_i = \widetilde{Y}[n + (p-1)(k+1) + 1 , n + (q-1)(k+1) + 1]$, which is $Y[p] \cdot \$^k \cdot Y[p+1] \$^k \cdots \$^k \cdot Y[q]$. If $d^Y_i = Y[p,q]$, then $d^{\widetilde{Y}}_i = \widetilde{Y}[n + (p-1)(k+1) + 1 , n + (q-1)(k+1) + 1 + k]$, which is $Y[p] \cdot \$^k \cdot Y[p+1] \$^k \cdots \$^k \cdot Y[q] \cdot \$^k$ (note the extra $\$^k$ at the end of $d^{\widetilde{Y}}_i$ which is missing in $m^{\widetilde{Y}}_i$). Therefore, $\widetilde{Y} = \$^n \cdot b^{\widetilde{Y}}_0 \cdot b^{\widetilde{Y}}_1 \cdot b^{\widetilde{Y}}_2 \cdots b^{\widetilde{Y}}_l \cdot \$^n$. Refer to Figure \ref{fig:indel-edit-apx} for an example.

\textbf{$\ED$ alignment of $X$ and $\widetilde{Y}$:}
Given the decomposition of $X$ and $\widetilde{Y}$, we can now outline our approach for constructing the $\ED$ alignment $\widetilde{\A}$. The fundamental principle of our alignment strategy is to sequentially align the characters of $X$ with those of $\widetilde{Y}$ from left to right, while carefully accounting for substitutions and deletions.
 
We initiate the alignment process by aligning $b^{X}_0$ with the initial $\$^n$ in $\widetilde{Y}$. Since the length of $b^{X}_0$ is at most $n$, and none of its characters are ``\$", we perform substitutions for all characters in $m^{X}_0$.

For each subsequent block $b^{X}_i = m^{X}_i \cdot d^{X}_i$, from left to right, starting from block 1, we first attempt to align $m^{X}_i$ with $m^{\widetilde{Y}}_i$, matching as many characters of $m^X_i$ as possible and deleting any unmatched characters in $m^{X}_i$.


Following this, we align the characters of $d^{X}_i$ from left to right with the unaligned characters of $\widetilde{Y}$, beginning from the leftmost unaligned character in $\widetilde{Y}$. All the steps are detailed in Algorithm \ref{alg:indel-edit-apx}. 

\begin{algorithm}[H]
\caption{Getting an alignment of $X$ and $\widetilde{Y}$ wrt edit distance metric: }\label{alg:indel-edit-apx}
\KwData{$X = m^{X}_0 \cdot b^{X}_1 \cdot \cdot \cdot b^{X}_l$, where $b^{X}_i = m^{X}_i \cdot d^{X}_i$ and $\widetilde{Y} = \$^n \cdot d^{\widetilde{Y}}_0 \cdot b^{\widetilde{Y}}_1 \cdot \cdot \cdot b^{\widetilde{Y}}_l \cdot \$^n$, where $b^{\widetilde{Y}}_i = m^{\widetilde{Y}}_i \cdot \widetilde{c}_i \cdot d^{\widetilde{Y}}_i$}
\KwResult{An alignment of $X$ and $\widetilde{Y}$}
Align $m^{X}_0$ to the initial $\$^n$;

$j = 1$;

\For{$i=1 \cdots l$}{

  \If{$i \ge j$}{
    \If{$i > j$}{
        \Comment{$m^{\widetilde{Y}}_i$ is fully available.}
        Align $m^{X}_i$ to $m^{\widetilde{Y}}_i$;}
    \If{$i = j$}{\Comment{$m^{\widetilde{Y}}_i$ might be partially available or not available.}
        
        Align $m^{X}_i$ to $m^{\widetilde{Y}}_i$ matching the maximum possible number of characters of $m^{X}_i$;
        
        Delete the characters in $m^{X}_i$ which cannot be matched;
    }

    Align the characters of $d^{X}_i$ from left to right to the unaligned characters of $\widetilde{Y}$, starting from $\widetilde{c}_i$ in $\widetilde{Y}$;
    }

    \If{$i < j$}{

        \Comment{$m^{\widetilde{Y}}_i$ is not available.}
        
        Delete all the characters in $m^{X}_i$.

        Align the characters of $d^{X}_i$ from left to right to the unaligned characters of $\widetilde{Y}$, starting from leftmost unaligned character in $\widetilde{Y}$;
    }

    Update $j$, such that the leftmost unaligned character of $\widetilde{Y}$ is in block $b^{\widetilde{Y}}_j$;
}

\end{algorithm}


Let's initiate the analysis of the $\ED$ alignment $\widetilde{\A}$. First, we examine the alignment of $m^X_i$, and then we proceed to analyze the alignment of $d^X_i$ for all $1 \leq i \leq l$.

While attempting to align $m^{X}_i$ to $m^{\widetilde{Y}}_i$, we encounter three possible scenarios regarding the alignment status of $m^{\widetilde{Y}}_i$. The three scenarios are named as \textbf{fully available}, \textbf{partially available}, and \textbf{not available}. Depending on these three cases we determine how to align the characters of $m^{X}_i$: 
\begin{itemize}
    \item \textbf{fully available}: $m^{\widetilde{Y}}_i$ falls into this category if all its characters are available for alignment. In other words, none of the characters from any $m^{X}_j$, where $j < i$, have been aligned with characters from $m^{\widetilde{Y}}_i$. In this scenario, we align $m^{X}_i$ to $m^{\widetilde{Y}}_i$, matching each character of $m^{X}_i$. This alignment is feasible because $m^X_i = m^Y_i$, and $m^Y_i$ is a proper subsequence of $m^{\widetilde{Y}}_i$ from our construction and the decomposition of $X$ and $\widetilde{Y}$. Thus $m^{X}_i$ is also a proper subsequence of $m^{\widetilde{Y}}_i$. As depicted in Figure \ref{fig:indel-edit-apx}, on the right-hand side, $m^{\widetilde{Y}}_1$ and $m^{\widetilde{Y}}_5$ are fully available, enabling us to match all the characters of $m^{X}_1$ and $m^{X}_5$. 
    
    \item \textbf{partially available}: If some, but not all, characters from $m^{\widetilde{Y}}_i$ are available for alignment, we consider it partially available. This occurs when certain characters from some $d^{X}_j$, where $j < i$, have been aligned with characters from $m^{\widetilde{Y}}_i$. In this case, we align $m^{X}_i$ to $m^{\widetilde{Y}}_i$ and match as many characters of $m^{X}_i$ as possible, deleting the rest from $m^{X}_i$. This is same as matching the largest suffix of $m^{X}_i$ which appears as a proper subsequence of the remaining unaligned suffix of $m^{\widetilde{Y}}_i$. The size of this largest suffix of $m^X_i$ is exactly the number of non-$\$$ characters that are unaligned in $m^{\widetilde{Y}}_i$.

    More technically, let $m^{\widetilde{Y}}_i = \widetilde{Y}[p,q]$. Since, $m^{\widetilde{Y}}_i$ is partially available, therefore, the leftmost unaligned character in $\widetilde{Y}$ is in $m^{\widetilde{Y}}_i$. Let $\widetilde{Y}[p']$ be the leftmost unaligned character in $\widetilde{Y}$. So, $\widetilde{Y}[p',q]$ is available for alignment. Now, we need to find the largest suffix of $m^X_i$ that can match into $\widetilde{Y}[p',q]$. For that let's count the number of non-$\$$ characters in $\widetilde{Y}[p',q]$, which is $\lfloor \frac{(q-p')}{k+1} \rfloor + 1$. The extra 1 non-$\$$ character is $\widetilde{Y}[q]$ because the last character of every $m^{\widetilde{Y}}_j$ is non-$\$$. Now, we match the last $\lfloor \frac{(q-p')}{k+1} \rfloor + 1$ many characters of $m^X_i$ and delete the rest.
    
    In Figure \ref{fig:indel-edit-apx}, $m^{\widetilde{Y}}_2$ is partially unaligned, resulting in two missed matches in $m^{X}_2$, with only the last character of $m^{X}_2$ being matched.
    
    \item \textbf{not available}: $m^{\widetilde{Y}}_i$ is not available if none of its characters are available for alignment. In this case, all characters from some $m^{X}_j$, where $j < i$, have been aligned with all characters from $m^{\widetilde{Y}}_i$. Here, we cannot match any character of $m^{X}_i$, therefore, we delete the entire $m^{X}_i$. We again refer to Figure \ref{fig:indel-edit-apx}, where $m^{\widetilde{Y}}_3$ and $m^{\widetilde{Y}}_4$ are not available, resulting in the failure to match any characters from $m^{X}_3$ and $m^{X}_4$.
\end{itemize}

Let us now examine the alignment of $d^X_i$. As stated earlier , the characters of $d^{X}_i$ are aligned from left to right with the unaligned characters 
of $\widetilde{Y}$, beginning from the leftmost unaligned character in $\widetilde{Y}$. The characters in $d^{X}_i$ either get matched or substituted but are not deleted. During this process, it's possible that we align characters of $d^{X}_i$ with characters in $m^{\widetilde{Y}}_j$ for some $j > i$. Consequently, certain characters in $m^{\widetilde{Y}}_j$, which are not ``\$", become unavailable for matching with the corresponding characters in $m^{X}_j$, resulting in missed matches. These missed matching characters in $X$ may be deleted from $X$ during our alignment. Our goal is to quantify the number of deletions in $X$, which is exactly equal to the number of missed matches caused by aligning the characters of $d^{X}_i$ for all $i>0$ to the matching segments of $\widetilde{Y}$, i.e., $m^{\widetilde{Y}}_j$ for $j>i$. For each $1 \leq i \leq l$, let $S_i$ denote the total number of characters from $d^{X}_i$ that gets aligned with characters from $m^{\widetilde{Y}}_j$ for all $j > i$.

\begin{figure}[htp]
    \centering
    \includegraphics[width=\textwidth]{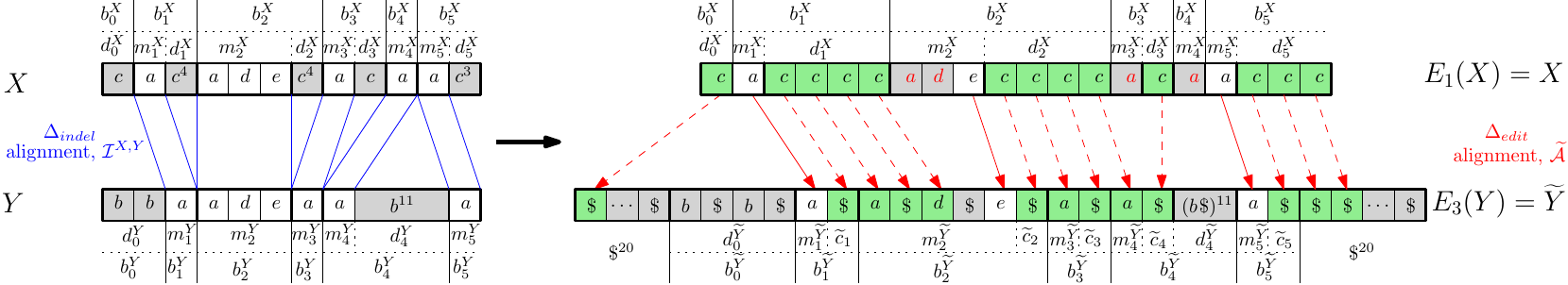}
    \caption{On the left side, we have the decomposition of two strings $X$ and $Y$ of length 20 each based on the $\LCS$ alignment. On the right side, we see the decomposition and alignment of $E_1(X)=X$ and $E_3(Y)=\widetilde{Y}$ as constructed by our Algorithm \ref{alg:indel-edit-apx}, where $k=1$. Matching between characters is indicated by solid red arrows, while substitutions are denoted by dotted red arrows. The deleted cells or characters are shaded in grey while the substituted cells are shaded in green.}
    \label{fig:indel-edit-apx}
\end{figure}

In order to analyze the $cost(\widetilde{\A})$, it's crucial to count the number of deletions in $X$ and $\widetilde{Y}$, along with the number of substitutions. The number of deletions in $\widetilde{Y}$ and substitutions concerning the alignment $\widetilde{\A}$ are straightforward to analyse, as discussed in the proof of Claim \ref{claim:indel-edit-apx3}. 

Our focus now shifts to counting the deletions in $X$. For that we first need to bound the following quantity. For each $1 \leq i \leq l$, let $S_i$ denote the total number of characters from $d^{X}_i$ that align with characters from $m^{\widetilde{Y}}_j$ for all $j > i$. Similar to the analysis of alignment of $m^X_i$, we will examine different cases for the alignment of $d^X_i$. Additionally, we will try to bound the quantity $S_i$ for these cases.

\begin{enumerate}
    \item If $j < i$ ($m^{\widetilde{Y}}_i$ is fully available):  
    After aligning $m^{X}_i$ to $m^{\widetilde{Y}}_i$, we begin aligning characters of $d^{X}_i$ from left to right. The leftmost unaligned character in $\widetilde{Y}$ is the first character in $\widetilde{c}_i$, so we start aligning $d^{X}_i$ to $\widetilde{c}_i$. If $|d^{X}_i| > |\widetilde{c}_i|$, then some characters in $d^{X}_i$ remain unaligned. We then align these unaligned characters of $d^{X}_i$ from left to right to $d^{\widetilde{Y}}_i$, where there can be both matchings and substitutions. If $|d^{X}_i| > |\widetilde{c}_i| + |d^{\widetilde{Y}}_i|$, we move to the next block in $\widetilde{Y}$ to align the still unaligned characters of $d^{X}_i$. In doing so, we may align characters of $d^{X}_i$ to characters in $m^{\widetilde{Y}}_j$ for some $j>i$. We define $S_i$ as the total number of characters of $d^{X}_i$ that are aligned to characters of $m^{\widetilde{Y}}_j$ for all $j>i$. In this case, $S_i$ is bounded by $\lceil \frac{|d^{X}_i| - k}{k+1}\rceil \le \lceil \frac{|d^{X}_i|}{k+1} \rceil$.

    \item if $j = i$ ($m^{\widetilde{Y}}_i$ is partially available or not available): 
    Here we have two possible cases based on the status of $m^{\widetilde{Y}}_i$:
    \begin{enumerate}
        \item if $m^{\widetilde{Y}}_i$ is partially available: In this case the leftmost unaligned character in $\widetilde{Y}$ will be the first character of $\widetilde{c}_i$ as in the previous case. Therefore we get, $S_i \le \lceil \frac{|d^{X}_i| - k}{k+1}\rceil \le \lceil \frac{|d^{X}_i|}{k+1} \rceil$.

        \item if $m^{\widetilde{Y}}_i$ is not available: Here, the leftmost unaligned character of $\widetilde{Y}$ can be any character in $b^{\widetilde{Y}}_i$, but in the worst possible scenario, the leftmost unaligned character of $\widetilde{Y}$ is the last character of  $b^{\widetilde{Y}}_i$, which is ``\$". We align the first character of $d^X_i$ with this ``\$". For aligning the remaining $|d^X_i|-1$ characters, we move to the next block in $\widetilde{Y}$. For that, every $k+1$ characters in $\widetilde{Y}$ have at most one character which is not ``\$". Thus, $S_i$ is bounded by $\lceil \frac{\abs{d^{X}_i}-1}{k+1}\rceil \le \lceil \frac{|d^{X}_i|}{k+1} \rceil$.
    \end{enumerate} 

    \item if $j > i$ ($m^{\widetilde{Y}}_i$ is not available): 
    We start aligning $d^{X}_i$ with the leftmost unaligned character in $\widetilde{Y}$. The characters of block $b^{\widetilde{Y}}_i$ are no longer available for alignment and the leftmost unaligned character can be non-$\$$ character. Considering that every $k+1$ characters in $\widetilde{Y}$ have at most one character which is not ``\$", $S_i$ is bounded by $\lceil \frac{|d^{X}_i|}{k+1} \rceil$.
    
\end{enumerate}

From the above analysis of the alignment of $d^X_i$, we have the following claim.

\begin{claim}\label{claim:indel-edit-apx}
    For each $i, S_i \le \lceil \frac{\abs{d^{X}_i}}{k+1} \rceil$.
\end{claim}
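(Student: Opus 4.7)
The claim summarizes the case analysis that precedes it, so the plan is simply to extract that case analysis into a clean argument based on one uniform counting principle.

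The plan is to first record the sole structural property of $\widetilde{Y}$ that drives the bound: by construction, every length-$(k+1)$ window of $\widetilde{Y}$ contains at most one non-\$ symbol. Consequently, for any position $p$ in $\widetilde{Y}$ and any length $\ell\ge 0$, among the $\ell$ consecutive positions $\widetilde{Y}[p],\dots,\widetilde{Y}[p+\ell-1]$ at most $\lceil \ell/(k+1) \rceil$ are non-\$ characters. I will also use the fact, immediate from Algorithm~\ref{alg:indel-edit-apx}, that the characters of $d^X_i$ are aligned greedily from left to right against the unaligned suffix of $\widetilde{Y}$, each consuming exactly one position.

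Let $p_i$ denote the position of the leftmost unaligned character of $\widetilde{Y}$ just before Algorithm~\ref{alg:indel-edit-apx} starts aligning $d^X_i$. Then the characters of $d^X_i$ are aligned one-to-one with $\widetilde{Y}[p_i],\widetilde{Y}[p_i+1],\dots,\widetilde{Y}[p_i+|d^X_i|-1]$, and $S_i$ counts exactly how many of these positions lie inside some $m^{\widetilde{Y}}_j$ with $j>i$. I will split the analysis into the three cases already singled out in the text, according to the availability of $m^{\widetilde{Y}}_i$. In the ``fully available'' case $p_i$ is the first position of $\widetilde{c}_i=\$^k$, so the first $k$ aligned symbols are all \$, and only the remaining $|d^X_i|-k$ positions could possibly lie in a later $m^{\widetilde{Y}}_j$, giving $S_i\le \lceil (|d^X_i|-k)/(k+1)\rceil\le \lceil |d^X_i|/(k+1)\rceil$. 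The ``partially available'' case ($j=i$) is identical, because in that situation $p_i$ is still the first position of $\widetilde{c}_i$. In the ``not available'' case, $p_i$ lies somewhere in or just past $b^{\widetilde{Y}}_i$, and then the above density principle applied to the entire range $[p_i,p_i+|d^X_i|-1]$ immediately gives $S_i\le \lceil |d^X_i|/(k+1)\rceil$.

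The one place where care is needed, and which I expect to be the main (minor) obstacle, is justifying that $p_i$ really is the first position of $\widetilde{c}_i$ in the first two cases; this requires checking that after processing $b^X_{i-1}$ the leftmost unaligned character indeed falls immediately before $\widetilde{c}_i$, which I will argue by induction on $i$ using the algorithm's invariant that $m^X_i$ never reaches past $m^{\widetilde{Y}}_i$ (since $|m^X_i|=|m^Y_i|$ and $m^{\widetilde{Y}}_i$ contains exactly $|m^Y_i|$ non-\$ symbols). Once that invariant is in place the three cases combine into the uniform bound $S_i\le \lceil |d^X_i|/(k+1)\rceil$, completing the proof.
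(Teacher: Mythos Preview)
Your proposal is correct and takes essentially the same approach as the paper: the paper's proof of the claim \emph{is} the three-case analysis that immediately precedes its statement, and you are reproducing that analysis with the density observation made explicit up front. Your extra care in pinning down that $p_i$ lands at the first position of $\widetilde{c}_i$ in the fully and partially available cases is a welcome bit of rigor that the paper leaves implicit.
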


Using the above claim and the observation that the total number of deletions in $X$ is exactly $\sum_{i=1}^{l}S_i$, we can now prove the following claim.
\begin{claim}\label{claim:indel-edit-apx2}
For the alignment $\widetilde{\A}$, the total number of deletions in $X = \sum_i{S_i} < \frac{1}{k}(\LCS(X,Y)) $.
\end{claim}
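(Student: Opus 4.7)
The plan is to invoke Claim~\ref{claim:indel-edit-apx} to bound $S_i \le \lceil |d^X_i|/(k+1)\rceil$ term by term and then sum, using Fact~\ref{fact:indel-edit1} to control $\sum_i |d^X_i|$. Writing $D := \LCS(X,Y)/2$, Fact~\ref{fact:indel-edit1} gives $\sum_i |d^X_i| = D$, so the target inequality becomes $\sum_i \lceil |d^X_i|/(k+1)\rceil < 2D/k$. A naive application of $\lceil x/(k+1)\rceil \le x/(k+1)+1$ introduces an error term equal to the number of non-empty blocks, which may be as large as $D$ and so is too weak on its own.

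To refine, I would split the blocks by which case of the proof of Claim~\ref{claim:indel-edit-apx} they fall into. For blocks in cases~1 and 2(a), that proof actually yields the sharper bound $S_i \le \lceil (|d^X_i|-k)/(k+1)\rceil$, which vanishes unless $|d^X_i|\ge k+1$; hence at most $D/(k+1)$ such blocks contribute and their total contribution is at most $\frac{D}{k+1}\left(1+\frac{1}{k+1}\right)$. For blocks in cases~2(b) and 3, where $m^{\widetilde Y}_i$ has been fully consumed by an earlier spill, I would amortize each $S_i$ against the length of that prior spill. The key geometric fact is that the non-$\$$ positions of $\widetilde Y$ past the initial $\$^n$ are equally spaced at distance $k+1$, so each consumed non-$\$$ position ``costs'' the earlier spill at least $k+1$ units; since the total spill length is bounded by $\sum_i |d^X_i|=D$, the aggregate contribution from this group is at most $D/(k+1)$.

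Combining the two contributions yields $\sum_i S_i \le \frac{D}{k+1}\left(2+\frac{1}{k+1}\right)$, which is strictly less than $2D/k$ for every $k\ge 1$ by cross-multiplying. The main obstacle is the amortization in the second group: one must carefully track how Algorithm~\ref{alg:indel-edit-apx}'s monotone alignment processes cascading case-2(b)/3 blocks so that each consumed non-$\$$ position is charged only once to $k+1$ units of prior spill, and in particular confirm that case-3 chains do not manufacture extra ``free'' contributions beyond what the geometric spacing argument affords.
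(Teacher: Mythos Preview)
The paper's proof is a one-liner: apply Claim~\ref{claim:indel-edit-apx} and then use $\lceil |d^X_i|/(k+1)\rceil < 2|d^X_i|/k$ termwise before summing. You are right to be suspicious of a naive termwise bound --- indeed that inequality already fails for $|d^X_i|=1$ and $k\ge 2$, so the paper's own argument has precisely the gap you flag. Your treatment of cases~1/2(a), exploiting the sharper bound $\lceil(|d^X_i|-k)/(k+1)\rceil$ from the proof of Claim~\ref{claim:indel-edit-apx}, is correct.

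The amortization for cases~2(b)/3, however, is not charging the right quantity. The number $S_i$ records the \emph{forward} spill of $d^X_i$ into later matching segments $m^{\widetilde Y}_j$ with $j>i$; the fact that $m^{\widetilde Y}_i$ was consumed by \emph{earlier} spill is a statement about some $d^X_{i'}$ with $i'<i$, not about $S_i$. A case-3 block with $|d^X_i|=1$ whose pointer happens to sit on a non-$\$$ position has $S_i=1$ regardless of how much prior spill was spent getting there, so ``$k{+}1$ units of prior spill per consumed non-$\$$ position'' does not directly bound that contribution. The way to make your intuition precise is to group blocks into \emph{runs}: maximal intervals $[i_0,i_1]$ with $i_0$ in case~1/2(a) and $i_0{+}1,\dots,i_1$ in case~2(b)/3. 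Within a run the spills $d^X_{i_0},\dots,d^X_{i_1}$ occupy a single contiguous interval of $\widetilde Y$ beginning at $\widetilde c_{i_0}[1]$ (since in cases~2(b)/3 the $m^X_j$ are deleted and the pointer does not move), so the run hits at most $\lceil(L_r-k)/(k+1)\rceil$ non-$\$$ positions, where $L_r=\sum_{j=i_0}^{i_1}|d^X_j|$. Now your case-1/2(a) counting applies per run rather than per block, and since $\sum_r L_r=D$ the same arithmetic gives the target $<2D/k$.
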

\begin{proof}

    Given that the number of deletions in $X$ is $\frac{\LCS(X,Y)}{2}$ (as per Fact \ref{fact:indel-edit1}), by Claim \ref{claim:indel-edit-apx}, we have $\sum_i S_i \le \sum_i \lceil \frac{|d^{X}_i|}{k+1} \rceil < \sum_i \frac{2|d^{X}_i|}{k} \le \frac{\LCS(X,Y)}{k} = \frac{\LCS}{4} \times \varepsilon$, given $k = \frac{4}{\varepsilon}$ for any $0 < \varepsilon \le 1$.
\end{proof}

We can finally bound the cost of alignment $\widetilde{\A}$ in the following claim.

\begin{claim}\label{claim:indel-edit-apx3}
    $\ED(X,\widetilde{Y}) \le cost(\widetilde{\A}) < \widetilde{N} - n + \frac{\LCS}{2} + \frac{\LCS}{2} \varepsilon$.
\end{claim}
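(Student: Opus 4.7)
The plan is to invoke Fact~\ref{fact:indel-edit2} on the alignment $\widetilde{\A}$ to write
\[\text{cost}(\widetilde{\A}) = (\widetilde{N} - n) + 2\,D_X + S,\]
where $D_X$ denotes the number of characters of $X$ deleted by $\widetilde{\A}$ and $S$ the number of substitutions. The inequality $\ED(X,\widetilde{Y}) \leq \text{cost}(\widetilde{\A})$ is automatic, so only the right-hand estimate requires work. The $2 D_X$ term is handled directly by Claim~\ref{claim:indel-edit-apx2}: since $k = 4/\varepsilon$,
\[2 D_X < \frac{2\,\LCS(X,Y)}{k} = \frac{\LCS(X,Y)}{2}\,\varepsilon.\]
It therefore remains to show $S \leq \LCS(X,Y)/2$, after which the claim follows by summing.

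For the substitution count I would use the trivial partition $n = M + D_X + S$, where $M$ is the number of characters of $X$ matched by $\widetilde{\A}$, and lower-bound $M$. Two structural facts about Algorithm~\ref{alg:indel-edit-apx} suffice. First, every deleted $X$-character lies in some matching segment $m^X_i$ with $i \geq 1$: the initial block $m^X_0 = d^X_0$ is consumed entirely by substitutions against the leading $\$^n$, while each $d^X_i$ is pushed forward and aligned (matched or substituted) to the leftmost unaligned characters of $\widetilde{Y}$ in all three branches $i > j$, $i = j$, $i < j$ of the loop, and none of its characters is ever deleted. Second, every non-deleted character of $m^X_i$ is genuinely matched (not substituted): by Observation~\ref{obs:indel-edit-exact} we have $m^X_i = m^Y_i$, while by construction the non-\$ characters of $m^{\widetilde{Y}}_i$ spell out $m^Y_i$, so the algorithm's alignment of the relevant suffix of $m^X_i$ against the available non-\$ positions of $m^{\widetilde{Y}}_i$ produces only character-equalities.

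Combining these two facts with $\sum_{i=1}^{l} |m^X_i| = n - \LCS(X,Y)/2$ (Fact~\ref{fact:indel-edit1}) yields $M \geq (n - \LCS(X,Y)/2) - D_X$, so
\[S = n - D_X - M \leq \frac{\LCS(X,Y)}{2}.\]
Plugging both estimates back gives $\text{cost}(\widetilde{\A}) < (\widetilde{N} - n) + \frac{\LCS(X,Y)}{2} + \frac{\LCS(X,Y)}{2}\varepsilon$, as required. The only delicate step in this plan is verifying the first structural fact---that the $d^X_i$-alignment phase never deletes---but this is a mechanical case analysis of the three loop branches and should not be a serious obstacle; once it is in hand, the rest is accounting.
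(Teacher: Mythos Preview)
Your proposal is correct and follows essentially the same approach as the paper: decompose $\text{cost}(\widetilde{\A})$ via Fact~\ref{fact:indel-edit2} into $(\widetilde{N}-n)+2D_X+S$, bound $2D_X$ by Claim~\ref{claim:indel-edit-apx2}, and bound $S\le \LCS(X,Y)/2$ by a structural analysis of Algorithm~\ref{alg:indel-edit-apx}. The only cosmetic difference is that the paper obtains $S\le \LCS(X,Y)/2$ directly by observing that every substitution lies in some $d^X_i$ (hence $S\le\sum_i|d^X_i|$), whereas you reach the same bound through the partition $n=M+D_X+S$ and a lower bound on $M$; your two structural facts are exactly what is needed to justify the paper's one-line assertion.
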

\begin{proof}

    Since $\widetilde{\A}$ is an $\ED$ alignment of $X$ and $\widetilde{Y}$, therefore $cost(\widetilde{\A})$ is less than the cost of an optimal $\ED$ alignment of $X$ and $\widetilde{Y}$, which is $\ED(X,\widetilde{Y})$.
    
   Using Fact \ref{fact:indel-edit2}, we know that $\ED(X,\widetilde{Y}) = $ \#deletions in $X$ + \#deletions in $\widetilde{Y}$ + \#substitutions $= (\widetilde{N}-n) + 2\times$\#deletions in $X$ + \#substitutions.

    Since, all the substitutions are in $d^{X}_i$ for all $0\le i \le l$, therefore, \#substitutions $\le \sum_i{\abs{d^{X}_i}} = \frac{\LCS(X,Y)}{2}$. Applying the bounds for \#deletions in $X$ from Claim \ref{claim:indel-edit-apx2}, we get $\ED(X,\widetilde{Y}) < \widetilde{N} - n + \frac{\LCS}{2} + \frac{\LCS}{2} \varepsilon$.

\end{proof}

\begin{proofof}{Theorem \ref{thm:indel-edit-apx}}
   From Fact \ref{fact:indel-edit3}, we have $\ED(X,\widetilde{Y}) \ge \widetilde{N} - n + \frac{\LCS(X,Y)}{2}$, as $\widetilde{Y}$ is obtained from $Y$ by inserting special character ``\$" which is not in $\Sigma$. From Claim \ref{claim:indel-edit-apx3}, we have $\ED(X,\widetilde{Y}) < \widetilde{N} - n + (1 + \varepsilon)\frac{\LCS(X,Y)}{2}$. 
\end{proofof}

\end{document}